\documentclass[11pt,reqno]{amsart}
\usepackage{amssymb}
\usepackage{amsfonts}
\usepackage{amsmath}
\usepackage{stmaryrd}
\usepackage{physics}
\usepackage{braket}
\usepackage{dsfont}
\usepackage{bbold}
\usepackage{graphicx}
\usepackage{relsize}
\usepackage{makecell}
\usepackage[table,xcdraw]{xcolor}
\usepackage{enumerate}
\usepackage[pagebackref, colorlinks = true, linkcolor = blue, urlcolor  = blue, citecolor = red]{hyperref}
\usepackage[margin=1in]{geometry}
\usepackage{enumitem}
\usepackage{mathtools}
\usepackage{graphbox}
\usepackage{comment}
\usepackage{float}
\usepackage[font=scriptsize]{caption}

\newcommand{\iden}{\mathbb{1}}
\renewcommand{\epsilon}{\varepsilon}
\renewcommand{\phi}{\varphi}
\newcommand{\overbar}[1]{\mkern 1.5mu\overline{\mkern-1.5mu#1\mkern-1.5mu}\mkern 1.5mu}

\newcommand{\CDUC}{\mathsf{CDUC}}
\newcommand{\DUC}{\mathsf{DUC}}
\newcommand{\DOC}{\mathsf{DOC}}

\newcommand{\MLDUI}[1]{\mathcal{M}_{#1}(\mathbb{C})^{\times 2}_{\mathbb{C}^{#1}}}
\newcommand{\MLDOI}[1]{\mathcal{M}_{#1}(\mathbb{C})^{\times 3}_{\mathbb{C}^{#1}}}
\newcommand{\M}[1]{\mathcal{M}_{#1}(\mathbb{C})}
\newcommand{\St}[1]{\mathcal{S}_{#1}(\mathbb{C})}
\newcommand{\C}[1]{\mathbb{C}^{#1}}

\newcommand{\T}[1]{\mathcal{T}_{#1}(\mathbb{C})}

\newcommand{\U}[1]{\mathcal{U}(#1)}
\newcommand{\UU}[1]{\mathcal{U}(#1\otimes #1)}

\newcommand{\specialcell}[2][l]{%
  \begin{tabular}[#1]{@{}l@{}}#2\end{tabular}}
\newcommand{\centered}[1]{\begin{tabular}{@{}l@{}} #1 \end{tabular}}

\newtheorem{theorem}{Theorem}[section]
\newtheorem{definition}[theorem]{Definition}
\newtheorem*{definition*}{Definition}
\newtheorem{proposition}[theorem]{Proposition}
\newtheorem{corollary}[theorem]{Corollary}
\newtheorem{lemma}[theorem]{Lemma}
\newtheorem{remark}[theorem]{Remark}

\newtheorem*{conjecture*}{Conjecture}

\newcommand\vertarrowbox[3][6ex]{%
  \begin{array}[t]{@{}c@{}} #2 \\
  \left\uparrow\vcenter{\hrule height #1}\right.\kern-\nulldelimiterspace\\
  \makebox[0pt]{\scriptsize#3}
  \end{array}%
}

\theoremstyle{definition}
\newtheorem{example}[theorem]{Example}

\definecolor{darkgreen}{rgb}{0,0.392,0}

\author{Satvik Singh}
\email{satviksingh2@gmail.com}
\address{\parbox{\linewidth}{Department of Applied Mathematics and Theoretical Physics, \\ University of Cambridge, Cambridge, United Kingdom }}

\author{Nilanjana Datta}
\email{n.datta@damtp.cam.ac.uk}
\address{\parbox{\linewidth}{Department of Applied Mathematics and Theoretical Physics, \\ University of Cambridge, Cambridge, United Kingdom}}

\author{Ion Nechita}
\email{nechita@irsamc.ups-tlse.fr}
\address{Laboratoire de Physique Th\'eorique, Universit\'e de Toulouse, CNRS, UPS, France}

\title{Ergodic theory of diagonal orthogonal covariant quantum channels}

\begin{document}
\begin{abstract}

We analyze the ergodic properties of quantum channels that are covariant with respect to diagonal orthogonal transformations. We prove that the ergodic behaviour of a channel in this class is essentially governed by a classical stochastic matrix. This allows us to exploit tools from classical ergodic theory to study quantum ergodicity of such channels. As an application of our analysis, we study dual unitary brickwork circuits which have recently been proposed as minimal models of quantum chaos in many-body systems. Upon imposing a local diagonal orthogonal invariance symmetry on these circuits, the long-term behaviour of spatio-temporal correlations between local observables in such circuits is completely determined by the ergodic properties of a channel that is covariant under diagonal orthogonal transformations. We utilize this fact to show that such symmetric dual unitary circuits exhibit a rich variety of ergodic behaviours, thus emphasizing their importance. 
\end{abstract}

\maketitle

\tableofcontents

\section{Introduction}
Ergodicity and mixing are notions concerning the long-term qualitative behaviour of dynamical systems or stochastic processes, and their study is the subject of ergodic theory. The word ``ergodic'' was coined by Boltzmann and the study of ergodicity originated in his works on Hamiltonian flows in Statistical Mechanics. Boltzmann's ``ergodic hypothesis"~\cite{Boltzmann1871} states that, generically\footnote{That is, besides exceptions due to symmetries and integrability.} for a closed, classical Hamiltonian system, a point in phase space evolves in time and eventually visits all other points with the same energy. In other words, trajectories in phase space fills out entire energy surfaces. This hypothesis led Boltzmann to conclude that the long-term
average of an observable, as the system evolves, would be equal
to its microcanonical ensemble average. However, this hypothesis can only be true if the phase space is finite and discrete. In 1911, Ehrenfest and Ehrenfest (see e.g.~\cite{Ehrenfests1911}) introduced a modification of it, the so-called 
``quasi-ergodic hypothesis",
which states that each trajectory in the phase space, instead of filling out the entire energy surface, is generically {\em{dense}} in the
energy surface. Subsequently, von Neumann~\cite{vonNeumann1932} and Birkhoff~\cite{Birkhoff1931} laid the mathematical foundations of classical ergodic theory by establishing the mean ergodic theorem and the pointwise ergodic theorem, respectively. 

The modern notion of ergodicity (see e.g.~\cite{Birkhoff1924}) is most generally described in the setting of a measure-preserving dynamical system\footnote{Henceforth, we restrict attention to dynamical systems which are measure-preserving but drop the term `measure-preserving' for simplicity.}. An ergodic dynamical system indeed satisfies the property that Boltzmann first intended to deduce from his
hypothesis, namely, that the long-term average of an observable quantity coincides with its phase space average. The term {\em{mixing}} was introduced in Statistical Mechanics by Gibbs and its mathematical definition was introduced by von Neumann in 1932. See, for example~\cite{Gallavotti2016} and references therein. In the context of a dynamical system, there are two different notions of mixing - strong mixing and weak mixing - which are stronger than ergodicity. In particular, the following chain of implications hold: strong mixing $\implies$ weak mixing $\implies$ ergodicity (see Section~\ref{sec:review-ergodic}). Ergodicity and mixing are just two properties in the so-called {\em{ergodic hierarchy}} which is a central constituent of ergodic theory (see e.g.~\cite{ergodichierarchy}).

Quantum ergodicity was first discussed by von Neumann~\cite{vonNeumann1932} who established a quantum ergodic theorem. However, since von Neumann's work, there have been various different definitions of ergodicity and mixing in quantum systems (see e.g.~\cite{Peres84, Feingold84,Zhang2016}). These behaviours have also been studied extensively for quantum dynamical systems in the $C^*$-algebraic framework (see e.g.~\cite{Pillet2006, Fidaleo2009} and references therein).

The study of ergodicity and mixing in many-body quantum systems with local interactions is of particular relevance in the field of quantum chaos and has been the focus of much research (see e.g.~\cite{Bruno2019dual, Chan2018, Garratt2021} and references therein). The long-term behaviour of spatio-temporal correlation functions of local observables provide useful characterizations of ergodicity and mixing in such systems. It is often convenient to formulate the dynamics of such systems directly in terms of the time-evolution operator instead of the underlying Hamiltonian. Periodically kicked quantum systems, which are considered to be promising candidates of systems exhibiting quantum chaos, have been studied in this formulation, in particular, when the system is a one-dimensional lattice, i.e., a spin chain. The Floquet operator, which is the time evolution operator over one period, plays a key role in the dynamics of such systems. In fact, the latter is determined by the spectral properties of the Floquet operator (see e.g.~\cite{mccaw2005quantum}). In several studies involving spin chains~\cite{Bruno2019dual, Chan2018}, the time evolution over one period is comprised of two half steps. Each lattice site is coupled to its neighbour on one site via a bipartite unitary matrix in the first half step, and to its neighbour on the other side in the second half step. The time evolution of the spin chain is hence given by a quantum circuit which forms a brickwork pattern of unitary gates in $1+1$ dimensions. In~\cite{Chan2018}, the authors established that if the unitary matrices coupling adjacent lattice sites are random, drawn independently from the circular unitary ensemble (CUE), then the resulting system is a minimal model for many-body quantum chaos. In~\cite{Bruno2019dual}, the authors consider \emph{dual unitary} brickwork circuits (these are special unitary brickwork circuits in which the time evolution is unitary both in the temporal and spatial directions, see  Section~\ref{sec:lattice-models} for the precise definition) and proved that explicit computation of all dynamical correlations between pairs of local observables is possible in such circuits. Furthermore, it turns out that for any unitary brickwork circuit, the dual unitary constraint is necessary and sufficient to ensure that entanglement in the circuit spreads at the maximum rate possible \cite{zhou2022maximal}. For the case of a qubit spin chain, a classification of all dual unitary circuits based on varying degrees of ergodicity and mixing in the long-term behaviour of dynamical correlations has been obtained in \cite{Bruno2019dual}. A similar analysis was done for qudit spin chains (with local Hilbert space dimension $d > 2$) in \cite{Claeys2021dual}. 

A quantum channel, i.e., a linear completely positive trace-preserving map, provides a succinct representation of the dynamics of a general (possibly open) quantum system. Moreover, a quantum process can be characterized by a sequence of quantum channels. Thus, understanding the long-term behaviour of a channel (i.e., the behaviour of a channel under iterated compositions with itself) is of fundamental importance. The study of these ergodic properties for quantum channels in finite dimensions was done in~\cite{Burgarth2013ergodic} and for quantum processes in~ \cite{Movassagh2004}.

\subsection{Summary of main results}

In this paper, we consider quantum channels $\Phi$ acting on $d$-level quantum systems that are covariant with respect to diagonal orthogonal transformations: for any $d \times d$ complex matrix $X$,
\begin{equation}
    \forall O\in \mathcal{DO}_d: \qquad \Phi(OXO) = O\Phi(X)O,
\end{equation}
where $\mathcal{DO}_d$ denotes the group of all $d\times d$ diagonal orthogonal matrices.  These so-called {\em{diagonal orthogonal covariant}} (DOC) channels were introduced and thoroughly studied in \cite{Singh2021diagonalunitary}. Many important quantum channels studied in quantum information theory belong in this class (see \cite[Section 7]{Singh2021diagonalunitary}). Moreover, it can be shown that any DOC channel is parameterized by three $d\times d$ matrices $A,B$ and $C$, with $A$ being column stochastic and acting as the `classical core' of the channel (see Definition~\ref{def:DOC}). The central result of this paper establishes that the ergodic behaviour of any DOC channel is essentially governed by the underlying stochastic matrix $A$ (see Theorems~\ref{theorem:DOC-ergodic/mixing} and \ref{theorem:DOC-irred-prim}). Hence, for any such channel, ergodicity, mixing and the related properties of irreducibility and primitivity can all be easily verified by using results from the classical theory of ergodicity of stochastic matrices, or, more specifically, by analyzing the connectivity properties of a directed graph associated with $A$. The relation between the ergodic properties of a stochastic matrix and the connectivity properties of the associated directed graph is presented in some detail in Section~\ref{subsec:graphs-ergodic}.

As mentioned above, in~\cite{Bruno2019dual} it was shown that explicit computation of two-point spatio-temporal correlation functions of local observables is possible in dual unitary brickwork quantum circuits. If we additionally impose the following symmetry on the underlying unitary gates $U$:
\begin{equation}
    \forall O\in \mathcal{DO}_d: \qquad (O\otimes O)U(O\otimes O) = U,
\end{equation}
then the resulting subgroup of {\em{local diagonal orthogonal invariant}} (LDOI) unitary gates provides a rich and explicitly parameterized family of unitary gates which can be used to construct new families of highly symmetric dual unitary gates \cite{singh2022diagonal}. Moreover, the long term behaviour of two-point spatio-temporal correlations between local observables in a circuit built from such LDOI dual unitary gates is completely determined by the ergodic properties of a specific DOC quantum channel (Lemma~\ref{lemma:Lambda+LDOI}). This allows us to exploit our previously stated results on ergodic properties of DOC channels to show that brickwork circuits constructed from LDOI dual unitary gates form a rich class of circuits, in the sense that they exhibit all the possible kinds of (non-)ergodic behaviours listed (and analysed for qubit spin chains) in~\cite{Bruno2019dual} and for qudit spin chains in~\cite{Claeys2021dual}.

\subsection{Outline of the paper} In Section \ref{sec:review-ergodic} we review the basic definitions and results from classical and quantum ergodic theory, with an emphasis on the graph-theoretic point of view in the classical case (Subsection \ref{subsec:graphs-ergodic}). In Section \ref{sec:DOC-channels}, we discuss DOC channels and their ergodic properties, showing how they are related to the matrices defining them. Finally, Section \ref{sec:lattice-models} deals with brickwork quantum circuits constructed from LDOI dual unitary gates.

\section{A survey of classical and quantum ergodic theory}\label{sec:review-ergodic}

\subsection{Classical ergodic theory}\label{subsec:review-measure}
In a nutshell, classical ergodic theory can be described as the study of measure-preserving dynamics on measure spaces. Since all of the results that we present in this section are well-known in the literature, we simply state them without proofs. The interested readers should refer to the following references \cite{Viana2016ergodicbook, Cornfeld1982ergodicbook} for further details.

\begin{definition}
A dynamical system $(M,\mathfrak{S},\mu,T)$ is a measure space $(M,\mathfrak{S},\mu)$ equipped with a measurable transformation $T:M\to M$ that is measure-preserving:
\begin{equation*}
    \forall A\in \mathfrak{S}, \qquad \mu(T^{-1}A) = \mu(A),
\end{equation*}
where $T^{-1}A$ is the preimage of $A$ under $T$. We say that a set $A\in\mathfrak{S}$ is $T$-invariant if $T^{-1}A=A$. The map $T$ induces a linear isometry $U_T:L^2(\mu)\to L^2(\mu)$ defined as follows:
\begin{equation*}
    \forall f\in L^2(\mu), \qquad U_T(f) := f\circ T.
\end{equation*}
\end{definition}

We will always assume that our measure spaces $(M,\mathfrak{S},\mu)$ are normalized, i.e., $\mu(M)=1$. We now state some of the most important results in ergodic theory.

\begin{theorem}[Birkhoff-Kinchin] \label{theorem:Cergodic1}
Let $(M,\mathfrak{S},\mu,T)$ be a dynamical system and $f\in L^1(\mu)$. Then, 
\begin{equation*}
    \hat{f}(x) := \lim_{n\to \infty} \frac{1}{n} \sum_{k=0}^{n-1} f(T^k x) = \lim_{n\to \infty} \frac{1}{n} \sum_{k=0}^{n-1} U_T^k(f) (x) 
\end{equation*}
exists $\mu$ almost everywhere and $\hat{f}\in L^1(\mu)$. Moreover, $\hat{f}$ is $T$-invariant (i.e., $\hat f\circ T = \hat f$) and
\begin{equation*}
    \int \hat{f}(x)d\mu (x) = \int f(x) d\mu (x).
\end{equation*}
\end{theorem}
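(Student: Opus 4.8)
The plan is to follow the classical route to the pointwise ergodic theorem, built on the \emph{maximal ergodic theorem}. Write $S_n f := \sum_{k=0}^{n-1} U_T^k f$ and $A_n f := \frac{1}{n} S_n f$ for $n\geq 1$, and set $\overbar{f} := \limsup_n A_n f$ and $\underline{f} := \liminf_n A_n f$. Since $A_n f\circ T = \frac{n+1}{n} A_{n+1} f - \frac{1}{n} f$, both $\overbar f$ and $\underline f$ are $T$-invariant, so it suffices to prove $\overbar f = \underline f$ $\mu$-almost everywhere; the common value is then the $\mu$-a.e.\ limit $\hat f$, it is $T$-invariant by construction, and the two expressions in the displayed limit agree termwise because $U_T^k f = f\circ T^k$.

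First I would establish the maximal ergodic theorem: for $g\in L^1(\mu)$, if $E_g := \{x : \sup_{n\geq 1} S_n g(x) > 0\}$, then $\int_{E_g} g\, d\mu \geq 0$. This is Garsia's argument: writing $M_N g := \max_{1\leq n\leq N} S_n g$ and $M_N^+ g := \max(M_N g, 0)$, the recursion $S_n g = g + (S_{n-1}g)\circ T$ together with $S_{n-1}g \leq M_N^+ g$ for $1\leq n\leq N$ gives, after taking the maximum over $n$, the pointwise bound $g \geq M_N^+ g - (M_N^+ g)\circ T$ on $\{M_N g > 0\}$, while the right-hand side is $\leq 0$ off that set; integrating and using that $U_T$ is an isometry (so $\int (M_N^+ g)\circ T\, d\mu = \int M_N^+ g\, d\mu$) yields $\int_{\{M_N g > 0\}} g\, d\mu \geq 0$, and letting $N\to\infty$ with dominated convergence (dominating function $|g|$) gives the claim since $\{M_N g > 0\}\uparrow E_g$.

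Next I would run the standard argument on invariant sets. Fix rationals $\beta < \alpha$ and put $E_{\alpha,\beta} := \{\underline f < \beta\}\cap\{\overbar f > \alpha\}$, which is $T$-invariant because $\overbar f$ and $\underline f$ are. Applying the maximal ergodic theorem to $g := (f-\alpha)\cdot\mathbf 1_{E_{\alpha,\beta}}$: by $T$-invariance of $E_{\alpha,\beta}$ one has $S_n g = \mathbf 1_{E_{\alpha,\beta}}(S_n f - n\alpha)$, and $\overbar f > \alpha$ on $E_{\alpha,\beta}$ forces $E_g = E_{\alpha,\beta}$, whence $\int_{E_{\alpha,\beta}} f\, d\mu \geq \alpha\,\mu(E_{\alpha,\beta})$. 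The mirror-image argument with $(\beta - f)\cdot\mathbf 1_{E_{\alpha,\beta}}$ gives $\int_{E_{\alpha,\beta}} f\, d\mu \leq \beta\,\mu(E_{\alpha,\beta})$. Since $\alpha > \beta$ this forces $\mu(E_{\alpha,\beta}) = 0$, and taking the countable union over rational pairs $\beta < \alpha$ shows $\overbar f\leq\underline f$ a.e.; the reverse inequality is trivial, so $\hat f := \lim_n A_n f$ exists $\mu$-a.e.

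Finally, since $U_T$ is an isometry, $\int |A_n f|\, d\mu \leq \frac{1}{n}\sum_{k=0}^{n-1}\int |U_T^k f|\, d\mu = \|f\|_1$, so Fatou's lemma gives $\hat f\in L^1(\mu)$ with $\|\hat f\|_1 \leq \|f\|_1$. For $\int \hat f\, d\mu = \int f\, d\mu$, I would note $\int A_n f\, d\mu = \int f\, d\mu$ for all $n$, handle $f\in L^\infty(\mu)$ by dominated convergence, and pass to general $f\in L^1(\mu)$ via the truncations $f_M := f\cdot\mathbf 1_{\{|f|\leq M\}}$, using linearity of $f\mapsto\hat f$ together with $\|\widehat{f-f_M}\|_1\leq\|f-f_M\|_1\to 0$ and $\int f_M\, d\mu\to\int f\, d\mu$. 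The one genuinely clever step, and the main obstacle, is Garsia's telescoping inequality underlying the maximal ergodic theorem; everything else is bookkeeping with $T$-invariant sets, Fatou's lemma, and truncation.
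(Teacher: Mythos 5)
Your proof is correct: the reduction to $T$-invariant $\limsup$/$\liminf$, Garsia's telescoping proof of the maximal ergodic theorem, the $E_{\alpha,\beta}$ argument over rational pairs, and the Fatou-plus-truncation step for $\hat f\in L^1(\mu)$ and $\int\hat f\,d\mu=\int f\,d\mu$ are all sound. The paper itself offers no proof of this statement — it is quoted as the classical Birkhoff--Khinchin theorem with references to standard texts — and your argument is precisely the standard maximal-ergodic-theorem proof found there, so there is nothing to reconcile.
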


\begin{theorem}[Ergodicity] \label{theorem:Cergodic2}
For a dynamical system $(M,\mathfrak{S},\mu,T)$, the following are equivalent.
\begin{itemize}
    \item For all $T$-invariant sets $A\in\mathfrak{S}$, $\mu(A)$ is either $0$ or $1$.
    \item Any $T$-invariant function $f\in L^1(\mu)$ is constant almost everywhere.
    \item $\lambda=1$ is a simple eigenvalue of the linear isometry $U_T:L^2(\mu)\to L^2(\mu)$.
    \item For all functions $f\in L^1(\mu)$,
    \begin{equation*}
        \hat{f}(x) = \lim_{n\to \infty} \frac{1}{n} \sum_{k=0}^{n-1} f(T^k x) = \int f(x)d\mu (x) \quad \text{almost everywhere}.
    \end{equation*}
    \item For all measurable sets $A,B\in \mathfrak{S}$,
    \begin{equation*}
        \lim_{n\to \infty} \frac{1}{n}\sum_{k=0}^{n-1} \mu\left( T^{-k}A \cap B \right) = \mu(A)\mu(B).
    \end{equation*}
    \item For all square-integrable functions $f,g\in L^2(\mu)$,
    \begin{equation*}
        \lim_{n\to \infty} \frac{1}{n}\sum_{k=0}^{n-1} \int f(T^{k}x)g(x) d\mu(x) = \int f(x)d\mu(x) \int g(x)d\mu(x).
    \end{equation*}
\end{itemize}
A dynamical system $(M,\mathfrak{S},\mu,T)$ satisfying these equivalent conditions is said to be \emph{ergodic}.
\end{theorem}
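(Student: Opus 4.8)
The plan is to prove the six conditions equivalent via the cycle $(1)\Leftrightarrow(2)\Leftrightarrow(3)$ together with $(2)\Rightarrow(4)\Rightarrow(6)\Rightarrow(5)\Rightarrow(1)$, using only the Birkhoff--Kinchin theorem (Theorem~\ref{theorem:Cergodic1}) and the fact that $U_T$ is an $L^2(\mu)$-isometry as the non-trivial inputs. For $(1)\Leftrightarrow(2)$ I would argue that if $f\in L^1(\mu)$ is $T$-invariant then each superlevel set $\{f>c\}$ is $T$-invariant, since $T^{-1}\{f>c\}=\{f\circ T>c\}=\{f>c\}$, hence has measure $0$ or $1$ by $(1)$; letting $c$ range over $\mathbb{Q}$ (and using that $f$ is finite a.e.) forces $f$ to be a.e. constant. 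Conversely a $T$-invariant set $A$ yields the $T$-invariant function $\iden_A\in L^1(\mu)$, which is a.e. constant by $(2)$, i.e. $\mu(A)\in\{0,1\}$. For $(2)\Leftrightarrow(3)$, note that for $f\in L^2(\mu)$ one has $U_Tf=f$ iff $f\circ T=f$ iff $f$ is $T$-invariant, and since $\mu(M)=1$ the constants lie in $L^2(\mu)$ and are invariant; hence ``$\lambda=1$ is a simple eigenvalue'' is equivalent to ``every $T$-invariant element of $L^2(\mu)$ is a.e. constant''. To pass between this and $(2)$ I would truncate a $T$-invariant $f\in L^1(\mu)$ through the Borel map $t\mapsto\max(\min(t,N),-N)$, landing in $L^\infty(\mu)\subseteq L^2(\mu)$, apply the $L^2$ statement, and send $N\to\infty$; the reverse inclusion $L^2(\mu)\subseteq L^1(\mu)$ is free because $\mu$ is finite.

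Next, $(2)\Rightarrow(4)$ is immediate from Birkhoff--Kinchin: $\hat f$ exists a.e., is $T$-invariant, lies in $L^1(\mu)$, and satisfies $\int\hat f\,d\mu=\int f\,d\mu$; by $(2)$ it is a.e. equal to a constant, which must be $\int f\,d\mu$. The implication $(6)\Rightarrow(5)$ comes from specializing to $f=\iden_A$, $g=\iden_B$ and observing $\int f(T^kx)g(x)\,d\mu(x)=\mu(T^{-k}A\cap B)$. And $(5)\Rightarrow(1)$ follows by taking a $T$-invariant set $A$ (so $T^{-k}A=A$ for all $k$) and $B=A$ in $(5)$, which yields $\mu(A)=\mu(A)^2$, hence $\mu(A)\in\{0,1\}$.

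The step I expect to be the main obstacle is $(4)\Rightarrow(6)$. Setting $A_nf:=\frac1n\sum_{k=0}^{n-1}U_T^kf$, condition $(4)$ gives $A_nf\to\int f\,d\mu$ pointwise a.e. for $f\in L^1(\mu)$, so $(A_nf)\,g\to\big(\int f\,d\mu\big)g$ a.e. for $g\in L^2(\mu)$; the difficulty is justifying that $\int(A_nf)\,g\,d\mu\to\int f\,d\mu\int g\,d\mu$. For $f\in L^\infty(\mu)$ the integrands are dominated by $\|f\|_\infty|g|\in L^1(\mu)$, so dominated convergence settles it. For general $f\in L^2(\mu)$ I would approximate $f$ in $L^2(\mu)$ by bounded functions and absorb the resulting error uniformly in $n$ using $\|A_nf\|_2\le\frac1n\sum_{k=0}^{n-1}\|U_T^kf\|_2=\|f\|_2$, which holds because $U_T$ is an isometry, together with Cauchy--Schwarz. (Alternatively, one may invoke the von Neumann mean ergodic theorem to obtain $A_nf\to\int f\,d\mu$ in $L^2(\mu)$ directly and conclude by Cauchy--Schwarz.) This closes the cycle and establishes the equivalence of all six conditions.
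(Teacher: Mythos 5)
The paper does not actually prove this theorem: Section~2.1 explicitly states the classical results without proof and refers the reader to standard textbooks, so there is no in-paper argument to compare yours against. Your proof is the standard textbook argument and, as far as I can check, correct: the implication cycle you set up is logically complete; the truncation argument for $(2)\Leftrightarrow(3)$ is sound (constants lie in $L^2(\mu)$ because $\mu(M)=1$, and $L^2(\mu)\subseteq L^1(\mu)$ for the converse); $(2)\Rightarrow(4)$ is exactly Birkhoff--Kinchin plus $\int\hat f\,d\mu=\int f\,d\mu$; and the only step requiring real work, $(4)\Rightarrow(6)$, is handled properly by dominated convergence for bounded $f$ together with the $L^2$-approximation using $\|A_nf\|_2\le\|f\|_2$ and Cauchy--Schwarz. (Your parenthetical alternative via the mean ergodic theorem also works, but note it gives $A_nf\to Pf$ in $L^2(\mu)$ with $P$ the projection onto the fixed space; you still identify $Pf=\int f\,d\mu$ by extracting an a.e.-convergent subsequence and invoking $(4)$.) The one point worth tightening is the usual strict-versus-almost-everywhere invariance issue in $(1)\Rightarrow(2)$: if $f\circ T=f$ only $\mu$-a.e., then the superlevel sets $\{f>c\}$ are $T$-invariant only modulo null sets, whereas condition $(1)$ as stated in the paper concerns sets with $T^{-1}A=A$ exactly; you need the routine lemma that any $A$ with $\mu(T^{-1}A\,\triangle\,A)=0$ differs by a null set from a strictly invariant set (e.g.\ $\bigcap_{N}\bigcup_{n\geq N}T^{-n}A$), or else work with a strictly invariant representative of $f$. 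This is a standard technicality, not a flaw in the overall strategy.
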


\begin{theorem}[Weak mixing]
For a dynamical system $(M,\mathfrak{S},\mu,T)$, the following are equivalent.
\begin{itemize}
    \item For all measurable sets $A,B\in \mathfrak{S}$,
    \begin{equation*}
        \lim_{n\to \infty} \frac{1}{n}\sum_{k=0}^{n-1} \left\vert\mu\left( T^{-k}A \cap B \right) - \mu(A)\mu(B)\right\vert = 0.
    \end{equation*}
    \item For all square-integrable functions $f,g\in L^2(\mu)$,
    \begin{equation*}
        \lim_{n\to \infty} \frac{1}{n}\sum_{k=0}^{n-1}\left\vert \int f(T^{k}x)g(x) d\mu(x) - \int f(x)d\mu(x) \int g(x)d\mu(x) \right\vert = 0.
    \end{equation*}
\end{itemize}
A dynamical system $(M,\mathfrak{S},\mu,T)$ satisfying these equivalent conditions is said to be \emph{weakly mixing}.
\end{theorem}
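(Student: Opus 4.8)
The plan is to prove the two implications separately: the direction from the $L^2$ statement to the set-theoretic statement is an immediate specialization, while the converse needs a density argument in $L^2(\mu)$.

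For the easy direction, I would fix $A,B\in\mathfrak{S}$ and feed the pair $f=\iden_A$, $g=\iden_B$ (which lie in $L^2(\mu)$ because $\mu(M)=1$) into the $L^2$ hypothesis. Since $\iden_A\circ T^k=\iden_{T^{-k}A}$ we have $\int \iden_A(T^kx)\iden_B(x)\,d\mu(x)=\mu(T^{-k}A\cap B)$, while $\int\iden_A\,d\mu=\mu(A)$ and $\int\iden_B\,d\mu=\mu(B)$, so for this choice of $f,g$ the $L^2$ statement is literally the set-theoretic one.

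For the converse I would set $c_k(f,g):=\int f(T^kx)g(x)\,d\mu(x)-\int f\,d\mu\int g\,d\mu$ and record two facts: (i) $(f,g)\mapsto c_k(f,g)$ is bilinear; and (ii) by Cauchy--Schwarz, using that $f\mapsto f\circ T^k$ is an $L^2$-isometry (as noted in the definition of $U_T$) and $\mu(M)=1$, one has a uniform bound $|c_k(f,g)|\le 2\|f\|_2\|g\|_2$ valid for all $k$. A direct reduction to indicator functions via bilinearity is obstructed by the absolute values inside the Ces\`aro sums (only the triangle inequality is available there, not linearity); facts (i) and (ii) together are exactly what let one get around this. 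Given $f,g\in L^2(\mu)$ and $\epsilon\in(0,1)$, I would choose simple functions $f'=\sum_i a_i\iden_{A_i}$ and $g'=\sum_j b_j\iden_{B_j}$ (finite sums; simple functions are dense in $L^2(\mu)$) with $\|f-f'\|_2<\epsilon$ and $\|g-g'\|_2<\epsilon$. Writing $c_k(f,g)-c_k(f',g')=c_k(f-f',g)+c_k(f',g-g')$ and invoking (ii) gives $|c_k(f,g)-c_k(f',g')|\le C\epsilon$ for every $k$, where $C$ depends only on $\|f\|_2$ and $\|g\|_2$; hence
\[
\frac1n\sum_{k=0}^{n-1}|c_k(f,g)|\ \le\ \frac1n\sum_{k=0}^{n-1}|c_k(f',g')|+C\epsilon .
\]

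It then remains to push the Ces\`aro average of $|c_k(f',g')|$ to zero. By bilinearity and the triangle inequality, $|c_k(f',g')|\le\sum_{i,j}|a_i|\,|b_j|\,\bigl|\mu(T^{-k}A_i\cap B_j)-\mu(A_i)\mu(B_j)\bigr|$, which is a \emph{finite} sum; averaging over $k$ and applying the hypothesis (the set-theoretic statement) to each of the finitely many pairs $(A_i,B_j)$ shows $\tfrac1n\sum_{k=0}^{n-1}|c_k(f',g')|\to0$ as $n\to\infty$. Combined with the previous display, $\limsup_{n\to\infty}\tfrac1n\sum_{k=0}^{n-1}|c_k(f,g)|\le C\epsilon$, and since $\epsilon$ is arbitrary the limit is $0$, i.e.\ the $L^2$ statement holds. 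The main obstacle is precisely the one flagged above --- the absolute values rule out a one-line density argument --- and the resolution is to use bilinearity only on the finite span of the chosen indicators while absorbing the approximation error through the Cauchy--Schwarz bound; the rest is routine bookkeeping.
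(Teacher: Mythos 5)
Your argument is correct. Note that the paper itself gives no proof of this statement: it appears in the survey Section~\ref{subsec:review-measure}, where the classical results are quoted without proof from standard references such as \cite{Viana2016ergodicbook, Cornfeld1982ergodicbook}. Your proof is the standard textbook one --- indicator functions for the easy direction, and for the converse a density-of-simple-functions argument in which bilinearity is used only on the finite span of indicators while the approximation error is absorbed uniformly in $k$ via Cauchy--Schwarz and the isometry property of $U_T$ --- and it is complete as written.
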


\begin{theorem}[Strong mixing]
For a dynamical system $(M,\mathfrak{S},\mu,T)$, the following are equivalent.
\begin{itemize}
    \item For all measurable sets $A,B\in \mathfrak{S}$,
    \begin{equation*}
        \lim_{n\to \infty}  \mu\left( T^{-n}A \cap B \right) = \mu(A)\mu(B).
    \end{equation*}
    \item For all square-integrable functions $f,g\in L^2(\mu)$,
    \begin{equation*}
        \lim_{n\to \infty}  \int f(T^{n}x)g(x) d\mu(x) = \int f(x)d\mu(x) \int g(x)d\mu(x).
    \end{equation*}
\end{itemize}
A dynamical system $(M,\mathfrak{S},\mu,T)$ satisfying these equivalent conditions is said to be \emph{strongly mixing}.
\end{theorem}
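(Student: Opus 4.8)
The plan is to establish the stated equivalence by verifying the implication from $L^2$-functions to measurable sets directly, and then obtaining the converse through a standard approximation argument.

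First I would show that the function-theoretic statement implies the set-theoretic one. Given $A,B\in\mathfrak{S}$, apply the function hypothesis to $f=\mathbf{1}_A$ and $g=\mathbf{1}_B$, both of which lie in $L^2(\mu)$ because $\mu$ is normalized. Since $\mathbf{1}_A\circ T^n=\mathbf{1}_{T^{-n}A}$, we obtain
\[
\int f(T^n x)g(x)\,d\mu(x)=\int \mathbf{1}_{T^{-n}A\cap B}(x)\,d\mu(x)=\mu\!\left(T^{-n}A\cap B\right),
\]
while $\int f\,d\mu=\mu(A)$ and $\int g\,d\mu=\mu(B)$, so the desired limit is immediate. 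For the converse, I would introduce the bilinear functionals
\[
C_n(f,g):=\int f(T^n x)g(x)\,d\mu(x)-\Big(\int f\,d\mu\Big)\Big(\int g\,d\mu\Big),\qquad f,g\in L^2(\mu),
\]
and note that they are \emph{uniformly} bounded in $n$: since $U_T$ is an isometry of $L^2(\mu)$ and $\mu(M)=1$ (so $\|h\|_1\le\|h\|_2$ for every $h$), two applications of Cauchy--Schwarz give $|C_n(f,g)|\le 2\|f\|_2\|g\|_2$. By bilinearity, the set hypothesis upgrades from indicators to arbitrary simple functions, i.e.\ $C_n(f_0,g_0)\to 0$ for all simple $f_0,g_0$. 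Now fix $f,g\in L^2(\mu)$ and $\epsilon>0$; since $\mu$ is finite, choose simple $f_0,g_0$ with $\|f-f_0\|_2,\|g-g_0\|_2<\epsilon$. Writing $C_n(f,g)=C_n(f-f_0,g)+C_n(f_0,g-g_0)+C_n(f_0,g_0)$ and bounding the first two summands via the uniform estimate yields $\limsup_{n\to\infty}|C_n(f,g)|\le 2\epsilon\left(\|f\|_2+\|g\|_2+\epsilon\right)$, and letting $\epsilon\to 0$ completes the proof.

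The step I would be most careful about is the uniform-boundedness bound together with the density of simple functions in $L^2(\mu)$, both of which use the standing normalization $\mu(M)=1$ (density of simple functions holds in $L^2$ of any measure space, but the $\|\cdot\|_1\le\|\cdot\|_2$ comparison invoked above does not). Beyond that the argument is pure bookkeeping; in particular no appeal to the earlier ergodicity or weak-mixing theorems is required, although the identical approximation scheme — with Ces\`aro time averages in place of the single iterate $T^n$ — also yields those results.
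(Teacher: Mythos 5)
Your argument is correct: the indicator-function specialization handles one direction, and the converse via bilinearity of $C_n$, the uniform bound $|C_n(f,g)|\le 2\|f\|_2\|g\|_2$ (which indeed uses $\mu(M)=1$), and density of simple functions in $L^2(\mu)$ is exactly the standard approximation scheme. The paper itself states this theorem without proof, deferring to the ergodic-theory references, and your proof coincides with the textbook argument found there, so there is nothing to reconcile.
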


These theorems allow us to define different levels of ergodicity of a classical dynamical system in a unified way by using the notion of correlation functions.

\begin{definition}\label{def:corr_measure}
Let $(M,\mathfrak{S},\mu,T)$ be a dynamical system. For any $f,g\in L^2(\mu)$, we define their correlation sequence as follows:
\begin{align*}
    C_{f,g}(n) &:= \int f(T^{n}x)g(x) d\mu(x) - \int f(x)d\mu(x) \int g(x)d\mu(x), \qquad n\in \mathbb{N}.
\end{align*}
\end{definition}

\begin{proposition}
A dynamical system $(M,\mathfrak{S},\mu,T)$ is
\begin{itemize}
    \item \emph{ergodic} if and only if $\lim_{n\to \infty} \frac{1}{n}\sum_{k=0}^{n-1} C_{f,g}(k) = 0$ for all $f,g\in L^2(\mu)$,
    \item \emph{weakly mixing} if and only if $\lim_{n\to \infty} \frac{1}{n}\sum_{k=0}^{n-1} |C_{f,g}(k)| = 0$ for all $f,g\in L^2(\mu)$,
    \item \emph{strongly mixing} if and only if $\lim_{n\to \infty} C_{f,g}(n) = 0$ for all $f,g\in L^2(\mu)$.
\end{itemize}
\end{proposition}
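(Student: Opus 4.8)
The plan is to observe that this proposition is essentially a restatement of the three preceding theorems (Ergodicity, Weak mixing, Strong mixing) once the definition of the correlation sequence $C_{f,g}$ is unfolded. The only manipulation required is to move the product term $\int f\,d\mu \int g\,d\mu$, which is independent of the iteration index $k$, in and out of the Cesàro sum; everything else is a direct matching of expressions.

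I would begin with the strong mixing case, which is the most transparent. By Definition~\ref{def:corr_measure} we have $C_{f,g}(n) = \int f(T^n x)g(x)\,d\mu(x) - \int f\,d\mu \int g\,d\mu$, so the condition $\lim_{n\to\infty} C_{f,g}(n) = 0$ for all $f,g \in L^2(\mu)$ is literally the second bullet in the strong mixing theorem, giving the equivalence at once.

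For ergodicity and weak mixing I would use the elementary identity $\frac{1}{n}\sum_{k=0}^{n-1} C_{f,g}(k) = \bigl(\frac{1}{n}\sum_{k=0}^{n-1}\int f(T^k x)g(x)\,d\mu(x)\bigr) - \int f\,d\mu\int g\,d\mu$, which holds by linearity of the integral together with the fact that the subtracted term is constant in $k$, so its Cesàro mean equals itself. Hence $\lim_n \frac{1}{n}\sum_{k=0}^{n-1} C_{f,g}(k) = 0$ for all $f,g\in L^2(\mu)$ is exactly the last bullet of the ergodicity theorem. For weak mixing no rearrangement inside the absolute value is needed: one simply writes $\frac{1}{n}\sum_{k=0}^{n-1}\lvert C_{f,g}(k)\rvert = \frac{1}{n}\sum_{k=0}^{n-1}\bigl\lvert \int f(T^k x)g(x)\,d\mu(x) - \int f\,d\mu\int g\,d\mu\bigr\rvert$, which is precisely the quantity whose vanishing limit defines weak mixing.

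I do not expect a genuine obstacle here: the statement is bookkeeping on top of the earlier characterizations. The only point worth noting is that all three theorems are already phrased for square-integrable test functions $f,g$, matching the domain on which $C_{f,g}$ is defined, so no density or approximation argument is needed to pass between the two formulations.
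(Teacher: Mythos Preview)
Your proposal is correct and is exactly the intended argument: the paper states this proposition without proof, since it is an immediate reformulation of the last bullets in the Ergodicity, Weak mixing, and Strong mixing theorems once Definition~\ref{def:corr_measure} is unfolded. Your observation that the constant term $\int f\,d\mu\int g\,d\mu$ passes through the Ces\`aro mean is the only computation required, and the matching of $L^2(\mu)$ domains is precisely as you note.
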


It can be easily shown that a strongly mixing dynamical system is also weakly mixing, which further implies that the system is ergodic:

\begin{equation}
    \text{strong mixing} \implies \text{weak mixing} \implies \text{ergodicity}.
\end{equation}

\subsection{Ergodic theory of quantum channels and stochastic matrices} \label{subsec:Qergodic}
We denote the algebra of all $d\times d$ complex matrices by $\M{d}$. For $A\in \M{d}$, the operations of transposition, entrywise complex conjugation, and conjugate transposition, respectively, are denoted by $A^\top, \bar A,$ and $A^\dagger$. We denote the convex set of \emph{quantum states} (positive semi-definite matrices with unit trace) in $\M{d}$ by $\St{d}$. A linear map $\Lambda:\M{d}\to \M{d}$ is called a \emph{quantum channel} if it is completely positive and trace-preserving. The set of eigenvalues (i.e., the spectrum) $\operatorname{spec}\Lambda$ of a quantum channel $\Lambda$ is contained within the unit disk $\{ z\in \mathbb{C}: |z|\leq 1\}$ in the complex plane and is invariant under complex conjugation, i.e., $\lambda\in \operatorname{spec}\Lambda \implies \overbar \lambda \in \operatorname{spec}\Lambda$. The peripheral spectrum of $\Lambda$ consists of all peripheral eigenvalues $\lambda \in \mathbb{T} \cap \operatorname{spec}\Lambda$, where $\mathbb{T}:= \{z\in \mathbb{C} : |z|=1 \}$. It is known that the geometric and algebraic multiplicities of all peripheral eigenvalues of a quantum channel are equal \cite[Proposition 6.2]{Wolf2012Qtour}. We say that a peripheral eigenvalue is simple if it has unit multiplicity. For a more elaborate discussion on the properties and spectra of quantum channels, we refer the readers to \cite{Wolf2012Qtour}. The following result should be considered as the quantum analogue of Theorem~\ref{theorem:Cergodic1}.

\begin{theorem}\label{theorem:Qergodic1}
Let $\Lambda:\M{d}\to \M{d}$ be a quantum channel. Then, the limit
\begin{equation*}
    \hat \Lambda := \lim_{n\to \infty} \frac{1}{n} \sum_{k=0}^{n-1} \Lambda^n
\end{equation*}
defines another quantum channel $\hat{\Lambda}:\M{d}\to \M{d}$ satisfying $\Lambda\circ \hat\Lambda = \hat\Lambda$.
\end{theorem}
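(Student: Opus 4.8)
The plan is to identify $\hat\Lambda$ as the ergodic (spectral) projection onto the fixed-point space of $\Lambda$, regarded as a linear operator on the finite-dimensional space $\M{d}$, and then to obtain the channel properties by a closedness argument. Throughout, write $\operatorname{id}$ for the identity map on $\M{d}$ and note that, being completely positive and trace preserving, $\Lambda$ is a contraction for the trace norm, so $\|\Lambda^n\|_{1\to 1}\le 1$ for every $n$; in particular the family $\{\Lambda^n\}_{n\ge 0}$ is uniformly bounded.

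\textbf{Step 1: convergence of the Cesàro averages.} Since $\operatorname{spec}\Lambda$ lies in the closed unit disk and, by \cite[Proposition 6.2]{Wolf2012Qtour}, every peripheral eigenvalue of $\Lambda$ is semisimple, we may decompose $\M{d}=V_1\oplus W$, where $V_1=\ker(\Lambda-\operatorname{id})$ and $W$ is the $\Lambda$-invariant complement spanned by the generalized eigenspaces for all eigenvalues $\lambda\ne 1$. On $V_1$ each $\Lambda^k$ acts as the identity, so $\tfrac1n\sum_{k=0}^{n-1}\Lambda^k$ restricts to the identity there. On $W$ we split once more: for eigenvalues with $|\lambda|<1$ the norm of $\Lambda^k|_W$ decays (geometrically up to a polynomial factor), so the averages vanish; for eigenvalues with $|\lambda|=1$ and $\lambda\ne 1$, semisimplicity means $\Lambda^k$ acts by the scalar $\lambda^k$ on the corresponding eigenspace, and $\tfrac1n\sum_{k=0}^{n-1}\lambda^k=\tfrac1n\cdot\tfrac{\lambda^n-1}{\lambda-1}\to 0$. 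Hence $\tfrac1n\sum_{k=0}^{n-1}\Lambda^k$ converges (in operator norm) to the projection $\hat\Lambda$ onto $V_1$ along $W$.

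\textbf{Step 2: the identity $\Lambda\circ\hat\Lambda=\hat\Lambda$.} From the telescoping identity $\Lambda\circ\big(\tfrac1n\sum_{k=0}^{n-1}\Lambda^k\big)=\tfrac1n\sum_{k=1}^{n}\Lambda^k=\tfrac1n\sum_{k=0}^{n-1}\Lambda^k+\tfrac1n(\Lambda^n-\operatorname{id})$, and using the uniform bound $\|\Lambda^n\|_{1\to1}\le1$ so that $\tfrac1n(\Lambda^n-\operatorname{id})\to 0$, we let $n\to\infty$ to obtain $\Lambda\circ\hat\Lambda=\hat\Lambda$. The same computation on the other side gives $\hat\Lambda\circ\Lambda=\hat\Lambda$, and iterating $\Lambda^k\circ\hat\Lambda=\hat\Lambda$ inside the average yields $\hat\Lambda\circ\hat\Lambda=\hat\Lambda$.

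\textbf{Step 3: $\hat\Lambda$ is a quantum channel.} Each power $\Lambda^k$ is completely positive (a composition of completely positive maps, with $\Lambda^0=\operatorname{id}$) and trace preserving, hence so is every finite average $\tfrac1n\sum_{k=0}^{n-1}\Lambda^k$. Complete positivity is a closed condition (the set of completely positive maps is a closed cone, e.g.\ via positivity of the Choi matrix) and trace preservation is a closed linear condition; therefore the limit $\hat\Lambda$ is again completely positive and trace preserving, i.e.\ a quantum channel. The only non-elementary ingredient is the semisimplicity of the peripheral spectrum quoted above; everything else is the finite-dimensional mean ergodic theorem, and the one point that genuinely needs the CPTP hypothesis (rather than just $\operatorname{spec}\Lambda\subseteq\{|z|\le1\}$) is the uniform boundedness of $\{\Lambda^n\}$ used to kill the error term in Step 2.
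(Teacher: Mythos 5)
Your proof is correct. The paper does not supply an argument of its own here—it simply cites \cite[Proposition 6.3]{Wolf2012Qtour}—and your proof is essentially the standard argument behind that cited result: the finite-dimensional mean ergodic theorem via the spectral decomposition (using semisimplicity of the peripheral spectrum), the telescoping identity for $\Lambda\circ\hat\Lambda=\hat\Lambda$, and closedness of the CPTP conditions to see that the limit is again a channel. (Minor remark: the uniform bound $\|\Lambda^n\|_{1\to 1}\le 1$ is not strictly needed, since convergence of the Ces\`aro means already forces $\tfrac1n\Lambda^n\to 0$, but invoking it is perfectly fine.)
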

\begin{proof}
See \cite[Proposition 6.3]{Wolf2012Qtour}.
\end{proof}

We now present several equivalent ways to define the notion of quantum ergodicity in complete correspondence with the classical theory (see Theorem~\ref{theorem:Cergodic2}).

\begin{theorem}\label{theorem:Qergodic2}
For a quantum channel $\Lambda:\M{d}\to \M{d}$, the following are equivalent.
\begin{itemize}
    \item $\lambda=1$ is a simple eigenvalue of $\Lambda$.
    \item There exists a state $\rho\in \St{d}$ such that for all $A\in \M{d}$,
    \begin{equation*}
        \hat \Lambda (A) = \lim_{n\to \infty} \frac{1}{n} \sum_{k=0}^{n-1} \Lambda^k(A) = \operatorname{Tr}(A)\rho.
    \end{equation*}
    \item There exists a state $\rho\in\St{d}$ such that for all $A,B\in \M{d}$,
    \begin{equation*}
        \lim_{n\to \infty} \frac{1}{n} \sum_{k=0}^{n-1} \operatorname{Tr}\left(\Lambda^k(A) B\right) = \operatorname{Tr}(A)\operatorname{Tr}(B\rho).
    \end{equation*}
\end{itemize}
A quantum channel $\Lambda:\M{d}\to \M{d}$ satisfying these equivalent conditions is said to be \emph{ergodic}. The unique fixed point $\rho\in \St{d}$ of $\Lambda$ is called the \emph{stationary state} of $\Lambda$.
\end{theorem}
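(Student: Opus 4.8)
The plan is to prove the cycle of implications $(1)\Rightarrow(2)\Rightarrow(3)\Rightarrow(1)$, with Theorem~\ref{theorem:Qergodic1} as the workhorse. The central object is the Ces\`aro limit $\hat\Lambda = \lim_{n\to\infty}\frac{1}{n}\sum_{k=0}^{n-1}\Lambda^k$, which by Theorem~\ref{theorem:Qergodic1} is itself a quantum channel satisfying $\Lambda\circ\hat\Lambda=\hat\Lambda$. I would first record the elementary but crucial fact that $\operatorname{range}(\hat\Lambda)$ coincides with the fixed-point space $\mathcal F := \{X\in\M{d}:\Lambda(X)=X\}$: the inclusion $\operatorname{range}(\hat\Lambda)\subseteq\mathcal F$ is immediate from $\Lambda\hat\Lambda=\hat\Lambda$, while any $X\in\mathcal F$ satisfies $\hat\Lambda(X)=X$ by averaging, giving the reverse inclusion. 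Since the geometric and algebraic multiplicities of peripheral eigenvalues of a quantum channel agree \cite[Proposition 6.2]{Wolf2012Qtour}, the statement ``$\lambda=1$ is a simple eigenvalue of $\Lambda$'' is equivalent to $\dim\mathcal F=1$.

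For $(1)\Rightarrow(2)$: assuming $\dim\mathcal F=1$, write $\mathcal F=\mathbb{C}\rho_0$ for a nonzero $\rho_0\in\M{d}$. Because $\hat\Lambda$ is a channel, $\hat\Lambda(\sigma)$ is a state for every $\sigma\in\St{d}$ and lies in $\mathcal F$; taking traces forces $\operatorname{Tr}(\rho_0)\neq 0$ and shows $\hat\Lambda(\sigma)=\rho:=\rho_0/\operatorname{Tr}(\rho_0)$ independently of $\sigma$, with $\rho\in\St{d}$. For arbitrary $A\in\M{d}$ we then have $\hat\Lambda(A)=g(A)\rho$ for some linear functional $g$, and trace preservation of $\hat\Lambda$ gives $g(A)=\operatorname{Tr}(A)$, i.e.\ $\hat\Lambda(A)=\operatorname{Tr}(A)\rho$, which is exactly $(2)$.

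For $(2)\Rightarrow(3)$: all limits involved are of sequences in the finite-dimensional space $\M{d}$, so the trace against $B$ can be moved inside the limit, yielding $\lim_{n}\frac{1}{n}\sum_{k=0}^{n-1}\operatorname{Tr}(\Lambda^k(A)B)=\operatorname{Tr}(\hat\Lambda(A)B)=\operatorname{Tr}(A)\operatorname{Tr}(B\rho)$. For $(3)\Rightarrow(1)$: the same interchange shows that $(3)$ asserts $\operatorname{Tr}(\hat\Lambda(A)B)=\operatorname{Tr}(A)\operatorname{Tr}(B\rho)$ for all $A,B\in\M{d}$; nondegeneracy of the trace pairing then gives $\hat\Lambda(A)=\operatorname{Tr}(A)\rho$ for all $A$, so $\operatorname{range}(\hat\Lambda)=\mathbb{C}\rho$ is one-dimensional (and nonzero, $\hat\Lambda$ being a channel). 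Hence $\dim\mathcal F=1$ and $\lambda=1$ is simple.

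I do not anticipate a serious obstacle: the argument is essentially bookkeeping around Theorem~\ref{theorem:Qergodic1}. The two points requiring care are the identification of ``simple eigenvalue'' with ``one-dimensional fixed-point space''—which would fail for a general linear map (Jordan blocks) but holds here precisely because peripheral eigenvalues of quantum channels are non-defective—and the mild positivity argument ensuring that the fixed point $\rho$ produced in $(1)\Rightarrow(2)$ is a genuine quantum state and not merely a fixed matrix.
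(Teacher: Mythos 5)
Your proposal is correct and follows essentially the same route as the paper: everything is driven by Theorem~\ref{theorem:Qergodic1}, the identification of the range of $\hat\Lambda$ with the fixed-point space, trace preservation, and the non-defectiveness of peripheral eigenvalues from \cite[Proposition 6.2]{Wolf2012Qtour}. The only differences are organizational (a cycle of implications instead of $(1)\Leftrightarrow(2)$ plus the ``straightforward'' $(2)\Leftrightarrow(3)$) and that you spell out details the paper leaves implicit, namely that the fixed point can be normalized to a genuine state via positivity of $\hat\Lambda$ and the trace-pairing argument for $(3)\Rightarrow(1)$.
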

\begin{proof}
Let $\lambda=1$ be a simple eigenvalue of $\Lambda$ and $\rho\in\St{d}$ be the associated eigenstate. Then, since $\hat \Lambda$ is trace-preserving and its range is contained in $\operatorname{span}\{\rho\}$ (see  Theorem~\ref{theorem:Qergodic1}), it is clear that $\hat\Lambda(A)=\operatorname{Tr}(A)\rho$ for all $A\in \M{d}$. Conversely, if $\hat\Lambda(A)=\operatorname{Tr}(A)\rho$ for all $A\in \M{d}$, then Theorem~\ref{theorem:Qergodic1} again informs us that $\Lambda(\rho)=\rho$. Moreover, for any $\sigma\in\M{d}$ satisfying $\Lambda(\sigma)=\sigma$,  
\begin{equation*}
    \hat\Lambda(\sigma)=\sigma=\operatorname{Tr}(\sigma)\rho\implies \rho = \sigma.
\end{equation*}
The equivalence of the last two statements is straightforward to establish. 
\end{proof}

\begin{remark}\label{remark:corr_Q}
Let $\Lambda:\M{d}\to \M{d}$ be a quantum channel. For any $A,B\in \M{d}$, we define their \emph{correlation sequence} with respect to $\rho\in\M{d}$ as follows
\begin{equation*}
    C^\rho_{A,B}(n) := \operatorname{Tr}\left( \Lambda^n(A)B\right ) - \operatorname{Tr}(A)\operatorname{Tr}(B\rho), \qquad n\in\mathbb N.
\end{equation*}
Clearly, $\Lambda$ is ergodic if and only if there exists a state $\rho\in \St{d}$ such that for all $A,B\in \M{d}$, 
\begin{equation*}
\lim_{n\to \infty} \frac{1}{n}\sum_{k=0}^{n-1} C^{\rho}_{A,B}(k) = 0.    
\end{equation*}
\end{remark}

The following theorem shows that the classically distinct notions of weak and strong mixing are equivalent for quantum channels.

\begin{theorem}
For a quantum channel $\Lambda:\M{d}\to \M{d}$, the following are equivalent.
\begin{enumerate}
    \item (Weak mixing) There exists a state $\rho\in\St{d}$ such that for all $A,B\in \M{d}$, 
    \begin{equation*}
    \lim_{n\to \infty} \frac{1}{n}\sum_{k=0}^{n-1} |C^{\rho}_{A,B}(k)| = 0.    
\end{equation*}
     \item (Strong mixing) There exists a state $\rho\in\St{d}$ such that for all $A,B\in \M{d}$,
    \begin{equation*}
    \lim_{n\to \infty} C^{\rho}_{A,B}(n) = 0.    
\end{equation*}
    \item There exists a state $\rho\in\St{d}$ such that for all $A\in \M{d}$, 
    \begin{equation*}
    \lim_{n\to \infty} \Lambda^n(A) = \operatorname{Tr}(A)\rho.    
    \end{equation*}
    \item $\lambda=1$ is a simple eigenvalue of $\Lambda$ and there are no other peripheral eigenvalues of $\Lambda$.
\end{enumerate}
A quantum channel satisfying these equivalent conditions is said to be \emph{mixing}.
\end{theorem}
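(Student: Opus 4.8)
The plan is to prove the equivalence of the four conditions by going around a cycle $(4)\Rightarrow(3)\Rightarrow(2)\Rightarrow(1)\Rightarrow(4)$, exploiting the spectral structure of quantum channels as developed in \cite{Wolf2012Qtour}. The key structural input is that, since the geometric and algebraic multiplicities of peripheral eigenvalues of a channel coincide, $\Lambda$ admits a decomposition $\Lambda = \Lambda_P \oplus \Lambda_N$ (on the level of the Jordan form) where $\Lambda_P$ collects the peripheral part, is diagonalizable, and acts as a direct sum of $\lambda_j$-eigenspaces with $|\lambda_j|=1$, while the spectral radius of $\Lambda_N$ is strictly less than $1$. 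Consequently $\Lambda^n = \Lambda_P^n + \Lambda_N^n$ with $\|\Lambda_N^n\|\to 0$ exponentially, and $\Lambda_P^n = \sum_j \lambda_j^n P_j$ where $P_j$ are the spectral projections onto the peripheral eigenspaces.

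First I would show $(4)\Rightarrow(3)$: if $\lambda=1$ is simple and is the only peripheral eigenvalue, then $\Lambda_P^n = P_1$ is constant in $n$, where $P_1$ is the rank-one spectral projection onto $\operatorname{span}\{\rho\}$ with $\rho$ the stationary state from Theorem~\ref{theorem:Qergodic2}; trace preservation forces $P_1(A) = \operatorname{Tr}(A)\rho$, so $\Lambda^n(A) = \operatorname{Tr}(A)\rho + \Lambda_N^n(A) \to \operatorname{Tr}(A)\rho$. Next, $(3)\Rightarrow(2)$ is immediate by substituting $\lim_n \Lambda^n(A) = \operatorname{Tr}(A)\rho$ into the definition of $C^\rho_{A,B}(n) = \operatorname{Tr}(\Lambda^n(A)B) - \operatorname{Tr}(A)\operatorname{Tr}(B\rho)$ and using continuity of $X\mapsto \operatorname{Tr}(XB)$. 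Then $(2)\Rightarrow(1)$ is the elementary fact that if a sequence of nonnegative reals $|C^\rho_{A,B}(k)|$ converges to $0$ then its Ces\`aro averages converge to $0$ as well; this holds for every fixed pair $A,B$, so the same $\rho$ works.

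The substantive direction is $(1)\Rightarrow(4)$, which I would prove by contraposition: assume the peripheral spectrum is not just $\{1\}$ simple, and exhibit $A,B,\rho$ for which the Ces\`aro averages of $|C^\rho_{A,B}(k)|$ fail to vanish. There are two cases. If $\lambda=1$ is not simple, then $\Lambda$ has at least a two-dimensional fixed-point space, and weak mixing already fails at the level of ergodicity — indeed weak mixing implies the ordinary Ces\`aro condition of Remark~\ref{remark:corr_Q}, which by Theorem~\ref{theorem:Qergodic2} forces $1$ to be simple; so this case reduces to the ergodic case. If $1$ is simple but there is another peripheral eigenvalue $\lambda = e^{i\theta}$ with $\theta \notin 2\pi\mathbb{Z}$, pick a corresponding (right) eigenvector $X$ of $\Lambda$, $\Lambda(X) = \lambda X$, with $\operatorname{Tr}(X)=0$ (peripheral eigenvectors orthogonal to the fixed point are traceless), and choose $B$ with $\operatorname{Tr}(XB)\neq 0$. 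Then $C^\rho_{X,B}(k) = \lambda^k \operatorname{Tr}(XB)$ has constant modulus $|\operatorname{Tr}(XB)| > 0$, so $\frac1n\sum_{k=0}^{n-1}|C^\rho_{X,B}(k)| = |\operatorname{Tr}(XB)| \not\to 0$, contradicting weak mixing. The main obstacle — or rather the main point requiring care — is the bookkeeping with non-self-adjoint, possibly traceless eigenvectors and making sure that the chosen $\rho$ in the weak-mixing hypothesis is necessarily the stationary state (which it is, again by the reduction to ergodicity in Theorem~\ref{theorem:Qergodic2}), so that the correlation function genuinely isolates the offending peripheral mode. Finally I would note the parenthetical labels ``weak mixing'' and ``strong mixing'' are justified exactly because conditions (1) and (2) are the verbatim quantum transcriptions of the classical definitions via correlation sequences, closing the loop with Definition~\ref{def:corr_measure}.
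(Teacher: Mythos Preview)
Your proof is correct and follows essentially the same cycle $(4)\Rightarrow(3)\Rightarrow(2)\Rightarrow(1)\Rightarrow(4)$ as the paper, using the same spectral ingredients (diagonalizability of the peripheral part, Ces\`aro for $(2)\Rightarrow(1)$, and a peripheral eigenvector to break weak mixing for $(1)\Rightarrow(4)$). Two small remarks: the paper simply takes $B=A^\dagger$ in the last step, which makes $\operatorname{Tr}(XB)=\|X\|_2^2>0$ automatic; and your worry about identifying $\rho$ with the stationary state is unnecessary, since $\operatorname{Tr}(X)=0$ already kills the $\rho$-dependent term in $C^\rho_{X,B}(k)$, so the contradiction works for \emph{any} $\rho$.
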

\begin{proof}
$(4)\implies (3)\implies (2)$ Assume that $\Lambda$ has no peripheral eigenvalues except $\lambda=1$ which is simple and let $\rho$ be the associated eigenstate. Then, we can split $\M{d}$ as \cite[Chapter 8]{Axler2014linear}:
\begin{equation*}
    \M{d} = \operatorname{span}\{\rho\} \oplus \bigoplus_{\lambda\in \operatorname{spec}\Lambda, |\lambda|<1} G_{\lambda}(\Lambda),
\end{equation*}
where $G_{\lambda}(\Lambda)$ is the generalized eigenspace associated with the eigenvalue $\lambda$ and $\Lambda\vert_{G_\lambda (\Lambda) } = \lambda \operatorname{id} + N_\lambda$ for all $|\lambda|<1$, where $\operatorname{id}:G_{\lambda}(\Lambda)\to G_{\lambda}(\Lambda)$ is the identity map and $N_\lambda:G_{\lambda}(\Lambda)\to G_{\lambda}(\Lambda)$ is nilpotent. Note that since peripheral eigenvalues have equal algebraic and geometric multiplicities, $G_1(\Lambda)=\operatorname{span}\{\rho\}$. Clearly, $\Lambda^n\vert_{G_\lambda (\Lambda) } \to 0$ as $n\to \infty$ if $|\lambda|<1$. Hence, for all $A\in\M{d}$, we have
\begin{equation*}
    \lim_{n\to \infty} \Lambda^n (A) = \operatorname{Tr}(A)\rho,
\end{equation*}
which clearly implies (2). \\
$(2)\implies (1)$ This is easy to show by using the convergence of Ces\`aro means, i.e., $a_n\to a \implies s_n\to a$, where $s_n := \frac{1}{n} \sum_{i=1}^n a_i$. \\
$(1)\implies (4)$ Condition $(1)$ implies that $\Lambda:\M{d}\to \M{d}$ is ergodic and hence $\lambda=1$ is a simple eigenvalue of $\Lambda$. Now, assume that $\Lambda$ has another peripheral eigenvalue $\gamma\neq 1$. Let $A\in\M{d}$ with $\operatorname{Tr}(A)=0$ be the associated eigenmatrix. Note that $A$ can be chosen to be traceless since $\mathbb{1}_d$ is a left eigenmatrix of $\Lambda$, i.e., $\Lambda^{\dagger}(\mathbb{1}_d)=\mathbb{1}_d$. Then, $A$ must be the zero matrix because
\begin{align*}
    \lim_{n\to \infty} \frac{1}{n}\sum_{k=0}^{n-1} |C^{\rho}_{A,A^\dagger}(k)| = \lim_{n\to \infty} \frac{1}{n}\sum_{k=0}^{n-1} |\operatorname{Tr}(\Lambda^k(A)A^\dagger)| &= |\operatorname{Tr}(AA^\dagger)| \lim_{n\to \infty} \frac{1}{n}\sum_{k=0}^{n-1} |\gamma^k|  \\
    &= |\operatorname{Tr}(AA^\dagger)|=0.
\end{align*}
Thus, $\Lambda$ has no peripheral eigenvalues except $\lambda=1$.
\end{proof}

\begin{definition}
A quantum channel is said to be 
\begin{itemize}
    \item \emph{irreducible} if it is ergodic and if its stationary state is positive definite.
    \item \emph{primitive} if it is mixing and if its stationary state is positive definite.
\end{itemize}
\end{definition}

\begin{remark}
Several other equivalent characterizations of the classes of irreducible and primitive quantum channels can be found in \cite[Chapter 6]{Wolf2012Qtour}. For an elaborate study of ergodic and mixing quantum channels, we refer the readers to \cite{Burgarth2013ergodic} and references therein.
\end{remark}

\begin{remark}\label{remark:ergodic-unital}
A quantum channel $\Lambda:\M{d}\to \M{d}$ is said to be \emph{unital} if $\iden_d$ is fixed by $\Lambda$, i.e. $\Lambda(\iden_d)=\iden_d$. Clearly, the ergodic and mixing properties for unital channels are equivalent to irreducibility and primitivity, respectively.
\end{remark}

One can recover the ergodic theory of stochastic matrices from the quantum ergodic theory that we have introduced above. Recall that a matrix $A\in \M{d}$ is said to be column stochastic if it is entrywise non-negative and its column-sums are one: for all $j\in [d]$, $\sum_{i=1}^d A_{ij} =1$. We should emphasize here that if $A$ is considered to be the transition matrix of a time-homogeneous Markov chain with finite state space, then the following discussion can be formulated entirely in the language of Markov chains.

For $A\in \M{d}$, consider the linear map $\Phi_A:\M{d}\to \M{d}$ defined as 
\begin{equation}
    \forall X\in \M{d}: \quad \Phi_A (X) = \sum_{i,j}  A_{ij}X_{jj} \ketbra{i}.
\end{equation}
It is easy to show that $\Phi_A$ is completely positive and trace-preserving if and only if $A$ is column stochastic (see Theorem~\ref{theorem:DOC-cptp}). Moreover, $\operatorname{spec}\Phi_A = \operatorname{spec}A \cup \{0\}$ (see Theorem~\ref{theorem:DOC-spec}). Hence, all the properties of the spectrum of a quantum channel that were introduced in the beginning of this section also hold for the spectrum of a stochastic matrix. We can thus derive the ergodic properties of $A$ from those of $\Phi_A$. This is done in the following two theorems, the proofs of which are left for the reader. Note that we denote by $\ket{e}$ the vector in $\mathbb{R}^d$ with each component equal to $1$. The standard probability simplex in $\mathbb{R}^d$ is denoted by $\Delta^d$.

\begin{theorem}\label{theorem:stoch-ergodic}
For a column stochastic matrix $A\in \M{d}$, the following are equivalent.
\begin{itemize}
    \item $\Phi_A$ is ergodic.
    \item $\lambda=1$ is a simple eigenvalue of $A$. 
    \item There exists a probability distribution $\pi\in\Delta^d$ such that
    \begin{equation*}
        \hat A = \lim_{n\to \infty} \frac{1}{n}\sum_{k=0}^{n-1} A^k = \ketbra{\pi}{e}.
    \end{equation*}
    \item There exists a probability distribution $\pi\in\Delta^d$ such that for all $v,w\in \C{d}$,
    \begin{equation*}
        \lim_{n\to \infty} \frac{1}{n}\sum_{k=0}^{n-1} \langle v|A^k|w\rangle = \langle v\ketbra{\pi}{e} w\rangle.
    \end{equation*}
\end{itemize}
A column stochastic matrix $A$ satisfying these equivalent conditions is said to be \emph{ergodic}. The unique fixed distribution $\pi\in\Delta^d$ of $A$ is said to be the \emph{stationary distribution} of $A$.
\end{theorem}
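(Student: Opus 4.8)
The plan is to deduce every equivalence from the quantum ergodic theory already set up, by exploiting the elementary fact that $\Phi_A$ is essentially $A$ acting on the diagonal of a matrix and killing everything off-diagonal. Concretely, I would introduce the diagonal embedding $\iota\colon\C{d}\to\M{d}$, $w\mapsto\sum_i w_i\ketbra{i}$, and the diagonal-extraction map $\delta\colon\M{d}\to\C{d}$, $X\mapsto(X_{11},\dots,X_{dd})^\top$, observe that $\delta\circ\iota=\operatorname{id}_{\C{d}}$, and note straight from the definition that $\Phi_A=\iota\circ A\circ\delta$ (with $A$ viewed as a linear map on $\C{d}$), so $\Phi_A^k=\iota\circ A^k\circ\delta$ for all $k\ge1$. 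Splitting $\M{d}=D\oplus N$ into diagonal matrices $D$ and zero-diagonal matrices $N$, the map $\Phi_A$ (which is a quantum channel since $A$ is column stochastic) is block-diagonal: it restricts to a copy of $A$ on $D$ and annihilates $N$. Hence $\operatorname{spec}\Phi_A=\operatorname{spec}A\cup\{0\}$, and comparing characteristic polynomials the eigenvalue $1$ has the same algebraic multiplicity for $\Phi_A$ as for $A$ (the $N$-block merely inflates the multiplicity of $0$). Combining this with Theorem~\ref{theorem:Qergodic2} gives $(1)\Leftrightarrow(2)$ immediately.

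For $(1)\Rightarrow(3)$ I would argue as follows. If $\Phi_A$ is ergodic, Theorem~\ref{theorem:Qergodic2} supplies a stationary state $\rho$ with $\hat\Phi_A(X)=\operatorname{Tr}(X)\rho$ for all $X$; since $\rho$ is fixed by $\Phi_A$ and $\operatorname{ran}\Phi_A\subseteq D$, $\rho$ is diagonal, say $\rho=\iota(\pi)$ with $\pi\in\Delta^d$. I would then check the identity $\delta\circ\Phi_A^k\circ\iota=A^k$ for every $k\ge0$ (both sides equal $\operatorname{id}_{\C{d}}$ when $k=0$), so that $\delta$ and $\iota$ intertwine the Ces\`aro means of $\Phi_A$ and of $A$ exactly; passing to the limit gives $\hat A=\delta\circ\hat\Phi_A\circ\iota$, and evaluating on an arbitrary $w\in\C{d}$ yields $\hat A w=\operatorname{Tr}(\iota(w))\,\pi=\langle e|w\rangle\,\pi$, i.e. $\hat A=\ketbra{\pi}{e}$. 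The implication $(3)\Rightarrow(4)$ is then immediate: for fixed $v,w$ the functional $M\mapsto\langle v|M|w\rangle$ is linear, so it commutes with the Ces\`aro limit.

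To close the cycle I would prove $(4)\Rightarrow(1)$. Letting $v,w$ range over the standard basis in $(4)$ gives entrywise --- hence, in finite dimension, norm --- convergence $\tfrac{1}{n}\sum_{k=0}^{n-1}A^k\to\ketbra{\pi}{e}=:\hat A$. Decomposing an arbitrary $X\in\M{d}$ as $\iota(\delta X)+M$ with $M\in N$ (so $M$ is annihilated by every $\Phi_A^k$ with $k\ge1$ and contributes only an $O(1/n)$ term to the Ces\`aro sum), I would compute $\hat\Phi_A(X)=\iota(\hat A\,\delta X)=\langle e|\delta X\rangle\,\iota(\pi)=\operatorname{Tr}(X)\,\rho$ with $\rho:=\iota(\pi)\in\St{d}$, and Theorem~\ref{theorem:Qergodic2} then certifies that $\Phi_A$ is ergodic. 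Uniqueness of the stationary distribution is a one-liner: any $\pi'\in\Delta^d$ with $A\pi'=\pi'$ satisfies $\pi'=\hat A\pi'=\langle e|\pi'\rangle\pi=\pi$. I do not expect any genuine obstacle here: the whole argument is a matter of carefully pushing objects back and forth through $\iota$ and $\delta$, and the only mild subtleties --- the $k=0$ term of the Ces\`aro sums and the multiplicity transfer in $(1)\Leftrightarrow(2)$ --- are both resolved cleanly by the block-diagonal form $\Phi_A\cong A\oplus 0$.
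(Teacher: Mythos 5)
Your proof is correct, and it follows exactly the route the paper intends: the theorem is left to the reader precisely because, as the surrounding text indicates, one is meant to transfer the Ces\`aro-mean characterizations of Theorem~\ref{theorem:Qergodic2} from $\Phi_A$ back to $A$ using the identification $\Phi_A\cong A\oplus 0$ on diagonal versus zero-diagonal matrices (equivalently, $\operatorname{spec}\Phi_A=\operatorname{spec}A\cup\{0\}$ from Theorem~\ref{theorem:DOC-spec} together with Theorem~\ref{theorem:DOC-cptp}). Your handling of the $k=0$ term and of the multiplicity of the eigenvalue $1$ are exactly the minor points that need care, and you resolve them correctly.
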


\begin{theorem}\label{theorem:stoch-mixing}
For a column stochastic matrix $A\in \M{d}$, the following are equivalent.
\begin{itemize}
    \item $\Phi_A$ is mixing.
    \item $\lambda=1$ is a simple eigenvalue of $A$ and $A$ has no other peripheral eigenvalues.
    \item There exists a probability distribution $\pi\in\Delta^d$ such that
    \begin{equation*}
        \lim_{n\to \infty} A^n = \ketbra{\pi}{e}.
    \end{equation*}
    \item There exists a probability distribution $\pi\in\Delta^d$ such that for all $v,w\in \C{d}$,
    \begin{equation*}
        \lim_{n\to \infty} \langle v|A^n|w\rangle = \langle v\ketbra{\pi}{e} w\rangle.
    \end{equation*}
\end{itemize}
A column stochastic matrix $A$ satisfying these equivalent conditions is said to be \emph{mixing}.
\end{theorem}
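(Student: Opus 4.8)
The plan is to transfer everything to the quantum side via the channel $\Phi_A$, using the equivalences that define a \emph{mixing} quantum channel together with Theorem~\ref{theorem:stoch-ergodic}. The key structural observation is that $\Phi_A$ annihilates every off-diagonal matrix unit: writing $X=\sum_{k,l}X_{kl}\ketbra{k}{l}$, one checks straight from the definition that $\Phi_A(\ketbra{k}{l})=0$ for $k\neq l$, while $\Phi_A(\ketbra{k}{k})=\sum_i A_{ik}\ketbra{i}$. Hence, in the basis of $\M{d}$ that lists the diagonal matrix units first and the off-diagonal ones afterwards, $\Phi_A$ is block-diagonal of the form $A\oplus 0_{d^2-d}$. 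Two consequences will be used below: first, for every $n\geq 1$ and every $X\in\M{d}$ we have $\Phi_A^n(X)=\operatorname{diag}(A^n v)$, where $v\in\C{d}$ is the vector of diagonal entries of $X$ and $\operatorname{Tr}X=\langle e|v\rangle$; second, for every nonzero $\lambda$ the generalized eigenspace $G_\lambda(\Phi_A)$ is canonically identified with $G_\lambda(A)$, so the algebraic (and hence geometric) multiplicity of any peripheral eigenvalue of $\Phi_A$ coincides with that of the same eigenvalue of $A$. This refines the bare identity $\operatorname{spec}\Phi_A=\operatorname{spec}A\cup\{0\}$ of Theorem~\ref{theorem:DOC-spec}.

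With this in hand, $(1)\Leftrightarrow(2)$ is immediate: $\Phi_A$ is mixing iff $\lambda=1$ is a simple eigenvalue of $\Phi_A$ with no other peripheral eigenvalues, and by the multiplicity-matching just noted this holds iff the same is true for $A$ — the nilpotent $0$-block sits strictly inside the open unit disk and contributes nothing on the unit circle. For $(1)\Leftrightarrow(3)$, I would invoke the characterization according to which $\Phi_A$ is mixing iff there is a state $\rho\in\St{d}$ with $\lim_{n\to\infty}\Phi_A^n(X)=\operatorname{Tr}(X)\rho$ for all $X\in\M{d}$. Substituting $\Phi_A^n(X)=\operatorname{diag}(A^n v)$ and restricting to diagonal $X$, which is harmless since $\Phi_A^n(X)$ and $\operatorname{Tr}X$ depend on $X$ only through $v$, the condition says that $\operatorname{diag}(A^n v)\to\langle e|v\rangle\,\rho$ for every $v\in\C{d}$; in particular $\rho$ is diagonal, say $\rho=\operatorname{diag}(\pi)$, and what remains is exactly $\lim_n A^n v=\langle e|v\rangle\,\pi$ for all $v$, i.e. $\lim_n A^n=\ketbra{\pi}{e}$. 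Since $\rho$ is a state, $\pi\in\Delta^d$; conversely $\operatorname{diag}(\pi)$ is a state for any $\pi\in\Delta^d$, so the implication reverses. Finally, $(3)\Leftrightarrow(4)$ is the routine fact that a sequence of $d\times d$ matrices converges iff all its entries converge, applied to $A^n\to\ketbra{\pi}{e}$, whose matrix elements are $\langle v|A^n|w\rangle\to\langle v|\pi\rangle\langle e|w\rangle$.

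I expect the only point needing genuine care to be the multiplicity bookkeeping behind $(1)\Leftrightarrow(2)$: the statements ``$1$ is a simple eigenvalue'' and ``no other peripheral eigenvalues'' must transfer faithfully between $\Phi_A$ and $A$, which amounts to upgrading $\operatorname{spec}\Phi_A=\operatorname{spec}A\cup\{0\}$ to an identification of generalized eigenspaces. The explicit block form $\Phi_A\cong A\oplus 0_{d^2-d}$ delivers precisely this, so the theorem ends up being a short corollary of what is already available. A fully self-contained alternative for $(2)\Rightarrow(3)$ is also possible without the channel machinery: column stochasticity makes $\langle e|$ the left $1$-eigenvector, one picks a right $1$-eigenvector $\pi$ normalized by $\langle e|\pi\rangle=1$ so that the spectral projector onto the $1$-eigenspace is $\ketbra{\pi}{e}$, writes $A=\ketbra{\pi}{e}+R$ with $\operatorname{spec}R\subseteq\{z:|z|<1\}$ on the complementary invariant subspace, and passes the entrywise non-negativity and unit column sums of $A^n$ to the limit to conclude $\pi\in\Delta^d$.
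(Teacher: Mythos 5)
Your proposal is correct and follows exactly the route the paper intends: the paper leaves this proof to the reader, indicating only that the ergodic properties of $A$ should be derived from those of $\Phi_A$ via the spectral identity $\operatorname{spec}\Phi_A=\operatorname{spec}A\cup\{0\}$, and your explicit block form $\Phi_A\cong A\oplus 0_{d^2-d}$ supplies precisely the multiplicity and generalized-eigenspace bookkeeping needed to make that transfer rigorous. The remaining equivalences (via $\Phi_A^n(X)=\operatorname{diag}(A^n v)$ and entrywise convergence) are handled correctly, so nothing is missing.
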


\begin{remark}
Mixing stochastic matrices as defined by the above theorem are known by at least a couple of different names in the literature. In \cite{Seneta1981nonnegative}, they are called \emph{regular} stochastic matrices. In \cite{Wolfowitz1963SIA}, they are referred to as SIA (stochastic, indecomposable, aperiodic) matrices.
\end{remark}

\begin{definition}
A column stochastic matrix $A\in\M{d}$ is said to be
\begin{itemize}
    \item \emph{irreducible} if it is ergodic and its stationary distribution has full support.
    \item \emph{primitive} if it is mixing and its stationary distribution has full support. 
\end{itemize}
\end{definition}

It should be emphasized that the stated spectral chracterizations are not really useful in practice to determine if a given quantum channel or a stochastic matrix has a desired ergodic propertry. This is because the Abel–Ruffini theorem guarantees that the eigenvalues of a $d\times d$ matrix cannot in general be computed by radicals if $d>4$. Development of verifiable characterizations of quantum ergodicity has been an important theme of research in the past decade \cite{Burgarth2013ergodic, Jamiokowski2014shemesh}. For classical stochastic matrices, however, efficient descriptions of ergodicity are much easier to develop than for quantum channels. Indeed, it can be shown that the pattern of distribution of non-zero entries within a stochastic matrix completely determines its ergodic behaviour. In other words, the study of ergodicity of a stochastic matrix is precisely the study of certain connectivity properties of its directed graph. We make this connection more precise in the next section.

\subsection{Graph-theoretic description of ergodicity of stochastic matrices} \label{subsec:graphs-ergodic} Before establishing the aforementioned connection, it is essential to introduce some basic graph-theoretic terminology.

\begin{definition}
A $d$-vertex \emph{directed graph} (or digraph) $G=(V,E)$ consists of a vertex set $V=[d]:=\{1,2,\ldots ,d\}$ and a collection $E$ of ordered pairs of vertices called edges.
\end{definition}

We allow our graphs to have loops but not multiple edges. A digraph $H=(V_H,E_H)$ is a subgraph of $G=(V,E)$ (denoted $H\subset G$) if $V_H\subset V$ and $E_H\subset E$. Given a digraph $G=(V,E)$, every subset of vertices $W\subset V$ defines an \emph{induced subgraph} $G(W)=(W,E_W) \subset G$, where $E_W$ is the set of all edges that start and end in $W$. For two distinct vertices $i,j\in V$, we say that
\begin{itemize}
    \item $i$ \emph{leads} to $j$ in $n$ steps (denoted $i\xrightarrow{n} j$) if there exist vertices $i_1,\ldots ,i_{n-1}\in V$ such that 
    \begin{equation*}
         (i,i_1),(i_1,i_2),\ldots ,(i_{n-1},j)\in E.
    \end{equation*}
    \item $i$ \emph{leads} to $j$ (denoted $i\to j$) if there exists some $n\in\mathbb{N}$ such that $i\xrightarrow{n} j$.
    \item $i$ \emph{communicates} with $j$ (denoted $i\leftrightarrow j$) if $i\to j$ and $j\to i$.
\end{itemize}

Similarly, for two subgraphs $H=(V_H,E_H), H'=(V_{H'},E_{H'}) \subset G$, we say that 
\begin{itemize}
    \item $H$ leads to $H'$ (denoted $H\to H'$) if there exist vertices $h\in V_H, h'\in V_{H'}$ such that $h\to h'$.
    \item $H$ communicates with $H'$ (denoted $H\leftrightarrow H'$) if $H\to H'$ and $H'\to H$.
\end{itemize}

A subgraph $H=(V_H,E_H)\subset G$ is said to be \emph{closed} if $i\in V_H$ and $i\to j \implies j\in V_H$. In other words, there is no escape once you land in a closed subgraph. $H$ is said to be \emph{fully accessible} if every vertex not in $V_H$ leads to some vertex in $V_H$, i.e.,~$\forall i\in V\setminus V_H, \exists j\in V_H$ such that $i\to j$.

\begin{definition}
A $d$-vertex digraph ($d\geq 2$) $G=(V,E)$ is said to be
\begin{itemize}
    \item \emph{strongly connected} if all pairs of distinct vertices $i,j\in V$ communicate with each other.
    \item \emph{aperiodic} if there exists $n\in \mathbb{N}$ such that for all distinct vertices $i,j\in V$, $i\xrightarrow{n} j$.
\end{itemize} 
\end{definition}

\begin{remark}
A digraph $G$ is said to contain a \emph{cycle} of length $n$ if there exists a vertex $i$ such that $i$ leads to itself in $n$ steps, i.e.,~$i\xrightarrow{n} i$. The \emph{period} of a vertex $i$ of a digraph $G$ is defined to be the greatest common divisor of the lengths of all cycles originating from $i$. If $G$ is strongly connected, then all its vertices have the same period, which is defined to be the period of $G$ itself. In this terminology, a strongly connected digraph $G$ is said to \emph{aperiodic} if it has unit period. It is easy to check that the two definitions of aperiodicity presented above are equivalent. 
\end{remark}

We define single vertex graphs to be strongly connected (or aperiodic) if they contain a loop. Note that an aperiodic digraph is always strongly connected. If we assume that all vertices $i\in V$ of a digraph $G=(V,E)$ are self-communicating, i.e., $i\leftrightarrow i$, then $\leftrightarrow$ defines an equivalence relation on $V$, thus partitioning it into disjoint \emph{communicating classes}: $V=\cup_{k=1}^n C_k$. Every class $C_k$ with two or more vertices gives rise to a strongly connected subgraph $G(C_k)\subset G$. Note that if a class consists of a single vertex $i$, it does not necessarily imply the existence of the loop $(i,i)\in E$.

\begin{remark}\label{remark:closed-class}
We can define a class $C$ of a graph to be closed, fully accessible, or aperiodic depending on whether its induced subgraph $G(C)$ is closed, fully accessible, or aperiodic. In \cite{Snell1983markov}, a closed class is called \emph{ergodic}, while a non-closed class is called \emph{transient}. 
\end{remark}

\begin{figure}[H]
    \centering
    \includegraphics[scale=1.4]{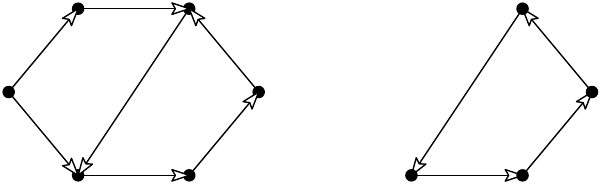}
    \caption{A 6-vertex graph $G$ and a 4-vertex induced subgraph $H$. It can be easily checked that $H$ is strongly connected with period $=4$. Moreover, $H$ is closed and fully accessible. }
\end{figure}

Given $A\in \M{d}$, we define its associated digraph $G_A=(V,E)$ with the vertex set $V=[d]$ and $(i,j)\in E \iff A_{ji}\neq 0$ for all $i,j\in V$. If $A$ is column stochastic, we can think of it as the transition matrix of a time homogeneous Markov chain defined on the state space $V=[d]$, with the edge $(i,j)$ representing a non-zero probability $A_{ji}$ of transitioning from $i$ to $j$. Note that our convention is different from the standard one of choosing transition matrices to be row stochastic. 

From the definition of strong connectivity of a graph, the following lemma immediately follows.

\begin{lemma}\label{lemma:reducible}
For a matrix $A\in \M{d}$, the following are equivalent.
\begin{itemize}
    \item $G_A$ is not strongly connected.
    \item There exists a permutation matrix $P\in \M{d}$ such that 
    \begin{equation*}
        PAP^\top = \left(
\begin{array}{ c c }
   B_{1,1} & B_{1,2} \\
   0 & B_{2,2}
\end{array}
\right)
    \end{equation*}
    for some square matrices $B_{11},B_{22}$.
\end{itemize}
If $A$ satisfies these equivalent conditions, it is said to be \emph{reducible}.
\end{lemma}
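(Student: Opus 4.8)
The plan is to prove Lemma~\ref{lemma:reducible} as a direct translation between the combinatorial statement ``$G_A$ is not strongly connected'' and the block-triangular normal form, exploiting the fact that the edge set of $G_A$ is invariant under simultaneous permutation of rows and columns of $A$. First I would fix notation: a permutation $\sigma$ of $[d]$ acts on $A$ by $A\mapsto PAP^\top$ with $P$ the associated permutation matrix, and one checks at once that $(PAP^\top)_{ij} = A_{\sigma^{-1}(i)\sigma^{-1}(j)}$, so the digraph of $PAP^\top$ is simply $G_A$ with vertices relabelled by $\sigma$. Hence ``strongly connected'' is preserved under such conjugation, and it suffices to exhibit the normal form for \emph{some} labelling.

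For the forward implication, suppose $G_A$ is not strongly connected, so there exist vertices $p,q$ with no directed path from $p$ to $q$. Let $W := \{\, i\in V : q\to i \text{ or } i=q \,\}$ be the set of vertices reachable from $q$ (including $q$ itself). Then $W$ is nonempty ($q\in W$) and $W\neq V$ (since $p\notin W$), and crucially $W$ is \emph{closed under leading}: if $i\in W$ and $i\to j$ then $q\to j$, so $j\in W$. In graph terms there are no edges from $W$ to $V\setminus W$; translating via the edge convention $(i,j)\in E\iff A_{ji}\neq 0$, this says $A_{ji}=0$ whenever $i\in W$ and $j\in V\setminus W$. Now choose the permutation $\sigma$ that lists the elements of $W$ first and the elements of $V\setminus W$ afterwards. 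Writing the resulting $PAP^\top$ in block form with the first block indexed by $W$ and the second by $V\setminus W$, the $(2,1)$ block is exactly the submatrix with rows in $V\setminus W$ and columns in $W$, which is zero by the preceding line; this gives the claimed upper block-triangular shape with $B_{1,1},B_{2,2}$ square of sizes $|W|$ and $d-|W|$.

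For the converse, suppose $PAP^\top$ has the stated block-triangular form with the top-left block of size $k$ (so $1\le k\le d-1$). Undoing the relabelling, there is a subset $W\subset V$ with $|W|=k$, $1\le k\le d-1$, such that $A_{ji}=0$ for every $i\in W$, $j\notin W$ --- i.e.\ $G_A$ has no edge leaving $W$. Then no vertex of $W$ can lead to any vertex of $V\setminus W$ (any directed path starting in $W$ would have to use such an edge at the first step it exits $W$), and since both $W$ and $V\setminus W$ are nonempty, $G_A$ is not strongly connected. Finally, the definition of reducibility is just a name for either equivalent condition, so there is nothing further to prove.

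I do not expect any genuine obstacle here: the only things requiring a moment's care are (i) getting the index bookkeeping right between the ``column-stochastic / column convention'' edge rule $(i,j)\in E\iff A_{ji}\neq 0$ and the block that ends up zero --- it is the lower-left block, consistent with the statement --- and (ii) handling the degenerate edge cases, namely $d=1$ (where the lemma is read with $d\ge 2$ implicitly, or vacuously) and blocks of size $1$, which are fine since $1\times 1$ matrices count as square. A short remark that conjugation by a permutation matrix is an isomorphism of digraphs does all the real work; everything else is the standard ``reachability set from a vertex is closed'' observation.
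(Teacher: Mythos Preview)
Your proof is correct and is exactly the standard argument; the paper itself does not spell out a proof at all, simply stating that the lemma ``immediately follows'' from the definition of strong connectivity. Your reachability-set construction and the index bookkeeping with the convention $(i,j)\in E\iff A_{ji}\neq 0$ are precisely what one would write to justify that claim.
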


Note that we have deliberately introduced the term reducible in such a way that for a column stochastic matrix, the property of not being reducible becomes equivalent to its ergodic property of irreducibility, as is illustrated by the following result.

\begin{theorem}\label{theorem:perron}
A column stochastic matrix $A\in \M{d}$ is
\begin{itemize}
    \item irreducible if and only if $G_A$ is strongly connected.
    \item primitive if and only if $G_A$ is aperiodic.
\end{itemize}
\end{theorem}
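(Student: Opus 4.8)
The plan is to reduce everything to the Perron--Frobenius theorem for nonnegative matrices, combined with the spectral descriptions of ergodicity and mixing in Theorems~\ref{theorem:stoch-ergodic} and~\ref{theorem:stoch-mixing}. Two preliminary remarks set the stage. First, a column stochastic $A\in\M{d}$ has $\bra{e}$ as a left eigenvector for the eigenvalue $1$ and satisfies $\lVert A^\top\rVert_{\infty\to\infty}=1$, so $\operatorname{spec}A$ lies in the closed unit disk and $1$ is an eigenvalue of maximal modulus. Second, Lemma~\ref{lemma:reducible} tells us that ``$G_A$ is strongly connected'' is synonymous with ``$A$ is not reducible'', i.e.\ $A$ cannot be conjugated by a permutation to a block upper-triangular matrix. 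The case $d=1$ is immediate (the only column stochastic matrix is $[1]$, whose digraph carries a loop and which is both primitive and irreducible), so assume $d\geq 2$ throughout.

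For the first bullet, suppose $G_A$ is strongly connected. Then $A$ is not reducible, so the Perron--Frobenius theorem gives that the Perron root of $A$ --- which is $1$ by the first remark --- is a simple eigenvalue with an entrywise positive right eigenvector; normalising this eigenvector produces a stationary distribution $\pi\in\Delta^d$ of full support, whence $A$ is ergodic by Theorem~\ref{theorem:stoch-ergodic} and in fact irreducible. Conversely, suppose $A$ is irreducible. Then $A$ is ergodic with a full-support stationary distribution $\pi$, and Theorem~\ref{theorem:stoch-ergodic} gives $\hat A=\lim_{n\to\infty}\frac1n\sum_{k=0}^{n-1}A^k=\ketbra{\pi}{e}$, so that $\ket{\pi}=\hat A\ket{i}$ for every $i\in[d]$. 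The key observation is that $(A^k)_{ji}\neq 0$ forces a walk of length $k$ from $i$ to $j$ in $G_A$; hence every Ces\`aro partial sum $\frac1n\sum_{k=0}^{n-1}A^k\ket{i}$ is supported inside $\{i\}\cup\{j:i\to j\}$, and so is its limit $\ket{\pi}$. Full support of $\pi$ then forces $\{i\}\cup\{j:i\to j\}=[d]$ for every $i$, i.e.\ any two distinct vertices communicate, so $G_A$ is strongly connected. (Alternatively: if $G_A$ were not strongly connected, Lemma~\ref{lemma:reducible} puts $A$ in block upper-triangular form whose upper-left block $B_{11}$ is column stochastic and whose lower-right block $B_{22}$ fixes the strictly positive lower block of $\pi$; then $1$ is an eigenvalue of both blocks, hence has algebraic multiplicity at least $2$ in $A$, contradicting ergodicity.)

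For the second bullet, first suppose $G_A$ is aperiodic; then it is strongly connected, so by the first bullet $A$ is irreducible, $1$ is a simple eigenvalue and $\pi$ has full support. By the Perron--Frobenius theorem the eigenvalues of $A$ of modulus $1$ are precisely the $h$-th roots of unity (each simple), where $h$ is the period of $G_A$; aperiodicity forces $h=1$, so $1$ is the only peripheral eigenvalue and Theorem~\ref{theorem:stoch-mixing} yields that $A$ is mixing, hence primitive. Conversely, suppose $A$ is primitive. Then $A$ is mixing with a full-support stationary distribution, hence irreducible, hence (first bullet) $G_A$ is strongly connected; let $h$ be its period. If $h\geq 2$, I would invoke the standard cyclic decomposition $V=V_0\sqcup\cdots\sqcup V_{h-1}$ of a strongly connected digraph, in which every edge runs from some $V_s$ to $V_{s+1}$ (indices mod $h$), and set $v\in\C{d}$ with $v_j=\omega^s$ for $j\in V_s$, where $\omega:=e^{2\pi i/h}$. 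A one-line computation using column-stochasticity, namely $(A^\top v)_i=\sum_j A_{ji}v_j=\omega^{s+1}\sum_j A_{ji}=\omega\, v_i$ for $i\in V_s$, shows that $\omega$ is an eigenvalue of $A^\top$ --- hence of $A$ --- with $\omega\neq 1$ and $|\omega|=1$, contradicting the mixing property. Therefore $h=1$ and $G_A$ is aperiodic.

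The entire mathematical content sits in the Perron--Frobenius theorem: simplicity of the Perron root for non-reducible nonnegative matrices, positivity of its eigenvector, and the identification of the modulus-one eigenvalues with the rotations by the period. I expect this to be the main obstacle only in the sense that it must be invoked (or, if one wants a self-contained account, reproved); the remaining work --- the two converse implications above --- is a routine combinatorial translation, whose one genuine subtlety is that it is column-stochasticity rather than row-stochasticity that makes $A^\top$ act cleanly on the cyclic eigenvector.
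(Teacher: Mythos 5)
Your proof is correct, and it takes essentially the same route as the paper: the paper simply cites these equivalences as core results of classical Perron--Frobenius theory (Seneta, Chapter 1), which is exactly the theorem your argument rests on. Your additional work—translating strong connectivity into non-reducibility via Lemma~\ref{lemma:reducible}, the support argument for the converse of the first bullet, and the cyclic-class eigenvector for the period—is the standard bookkeeping that the paper's citation subsumes, and it is carried out correctly (including the column-stochastic convention, which you handle properly by working with $A^\top$).
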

\begin{proof}
These equivalences are some of the core results of the Perron-Frobenius theory of non-negative matrices (see \cite[Chapter 1]{Seneta1981nonnegative}).
\end{proof}

A repeated application of Lemma~\ref{lemma:reducible} implies that every reducible matrix $A\in \M{d}$ can be brought into a canonical upper block-triangular form via a permutation conjugation:
\begin{equation}\label{eq:canonical}
    PAP^\top = \left(   \begin{array}{cccc}
         A_{1,1} & A_{1,2} & \ldots & A_{1,n}  \\
         0 & A_{2,2} & & A_{2,n} \\
         \vdots & & \ddots &  \\
         0 & 0 & \ldots & A_{n,n}
    \end{array}  \right),
\end{equation}
where each square matrix $A_{k,k}$ is either not reducible (= irreducible) or is the $1\times 1$ zero matrix. Note that the vertices in each block $A_{k,k}$ form one of the communication classes $C_k$ of the vertex set of $G_A$, which have been arranged by the permutation $P$ in such a way that $G_A(C_l)$ is accessible from $G_A(C_k)$ only when $l< k$. More precisely, for $l\neq k$, $G_A(C_k)\to G_A(C_l)$ if and only if $A_{l,k}$ is a non-zero block. Hence, the class $C_1$ is clearly closed. Moreover, the eigenvalues of $A$ are simply the eigenvalues of the diagonal blocks:
\begin{equation}
    \operatorname{spec} A = \cup_{k=1}^n \operatorname{spec} A_{k,k}.
\end{equation}

\begin{figure}[htp!]
    \centering
    \includegraphics[scale=1.2]{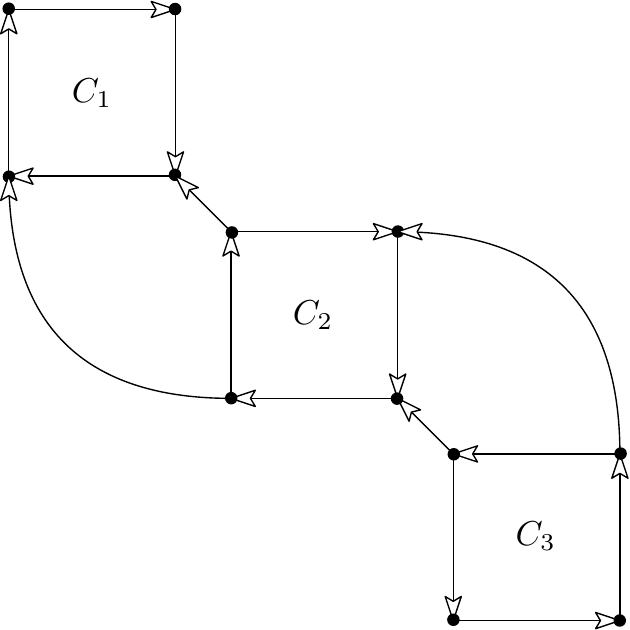}
    \caption{A $12$-vertex graph $G_A$ depicting the class structure of some $12\times 12$ reducible matrix $A$. The classes have been drawn to mimic the canonical form (Eq.~\eqref{eq:canonical}) of a reducible matrix. Each class $C_k$ corresponds to a $4\times 4$ irreducible block $A_{k,k}$. Connections between classes $G(C_k)\to G(C_l)$ depict that the $A_{l,k}$ block is non-zero for $l<k$. In this case, $C_1$ is the only class that is closed. Hence, if $A$ is column stochastic, then in accordance with Corollary~\ref{corollary:Cergodic}, it must be ergodic. The classes $C_2$ and $C_3$ are transient (see Remark \ref{remark:closed-class}).}
    \label{fig:ergodic_graph}
\end{figure}

Now, if $A$ is stochastic, then $A_{1,1}$ is stochastic as well (and thus can't be the $1\times 1$ zero matrix). Hence, the closed class $C_1$ gives rise to a closed and strongly connected subgraph $G_A(C_1)\subset G_A$. It turns out that ergodicity of $A$ is equivalent to demanding full accessibility of this subgraph, which is equivalent to demanding that none of the other subgraphs $G_A(C_k)$ (for $k\geq 2$) are closed. In other words, $A$ is ergodic if and only if $G_A$ has a unique closed class. More generally, it is possible to show that $\lambda=1$ is an eigenvalue of $A$ with multiplicity $n$ if and only if $G_A$ has precisely $n$ closed classes. With the requisite concepts in hand, we can now state and prove this result.

\begin{theorem}
The multiplicity of the $\lambda=1$ eigenvalue of a column stochastic matrix $A\in \M{d}$ is equal to the number of closed classes in $G_A$.
\end{theorem}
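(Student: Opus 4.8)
The plan is to reduce the general statement to the canonical block-triangular form \eqref{eq:canonical} and then count. First I would apply the repeated reduction (Lemma~\ref{lemma:reducible}) to write $PAP^\top$ as an upper block-triangular matrix whose diagonal blocks $A_{k,k}$ are each either irreducible or the $1\times 1$ zero block, with the blocks indexed so that $G_A(C_k)\to G_A(C_l)$ forces $l<k$. Since $\operatorname{spec}A = \bigcup_k \operatorname{spec}A_{k,k}$, the algebraic multiplicity of $\lambda=1$ is the sum of the multiplicities of $\lambda=1$ in the individual diagonal blocks. Because $A$ is column stochastic, the diagonal blocks corresponding to non-closed (transient) classes are \emph{substochastic} with at least one deficient column, hence have spectral radius strictly less than $1$; I would justify this via the standard Perron--Frobenius fact that an irreducible non-negative matrix whose column sums are all $\le 1$ with at least one strict inequality has Perron eigenvalue $<1$ (or, equivalently, note that probability mass leaks out of a transient class). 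On the other hand, a diagonal block corresponding to a closed class is itself column stochastic and irreducible, so by Theorem~\ref{theorem:perron} (or directly by Perron--Frobenius) it has $\lambda=1$ as a \emph{simple} eigenvalue.

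Assembling these observations: each closed class contributes exactly $1$ to the algebraic multiplicity of $\lambda=1$, and each transient class contributes $0$. Hence the algebraic multiplicity of $\lambda=1$ for $A$ equals the number of closed classes of $G_A$. I should also confirm that ``multiplicity'' here is unambiguous, i.e.\ that the geometric and algebraic multiplicities of the peripheral eigenvalue $\lambda=1$ agree; this follows from the fact that $\Phi_A$ is a quantum channel with $\operatorname{spec}\Phi_A=\operatorname{spec}A\cup\{0\}$ (Theorems~\ref{theorem:DOC-cptp} and \ref{theorem:DOC-spec}) together with \cite[Proposition 6.2]{Wolf2012Qtour}, so no distinction is needed. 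As a sanity check, the special case of a unique closed class recovers the earlier assertion that $A$ is ergodic iff $G_A$ has exactly one closed class, consistent with Theorem~\ref{theorem:stoch-ergodic}.

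An alternative, more self-contained route avoids spectral bookkeeping: one explicitly constructs $n$ linearly independent fixed vectors, one supported on each closed class (the Perron eigenvector of the corresponding stochastic diagonal block, extended by zero), and then shows any fixed probability-type vector is a combination of these by a ``first hitting'' / absorption argument on the Markov chain $G_A$ — a transient class cannot support stationary mass because iterating $A$ drains it into the closed classes it leads to. I would likely present the block-triangular argument as the main proof and mention the Markov-chain picture as intuition.

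The main obstacle I anticipate is the careful treatment of the transient diagonal blocks: one must rule out the possibility that such a block $A_{k,k}$ has $1$ in its spectrum even though its column sums are not all equal to $1$. The clean way is to observe that if $C_k$ is transient then some vertex of $C_k$ leads out of $C_k$, so at least one column of $A_{k,k}$ has sum strictly less than $1$; combined with irreducibility of $A_{k,k}$, Perron--Frobenius forces its spectral radius below $1$. Making this leakage argument precise — and handling the degenerate $1\times 1$ zero block, which trivially has no eigenvalue $1$ and corresponds (by our conventions) to a transient singleton class — is the only genuinely delicate point; everything else is assembling pieces already in hand.
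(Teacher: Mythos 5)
Your proposal is correct, and its skeleton is the same as the paper's: pass to the canonical block-triangular form of Eq.~\eqref{eq:canonical}, use $\operatorname{spec}A=\cup_k\operatorname{spec}A_{k,k}$, and show that a diagonal block carries the eigenvalue $1$ exactly when its class is closed. The only genuine difference is how the crucial step (``transient class $\Rightarrow$ no eigenvalue $1$'') is handled. You argue directly on $A_{k,k}$: a non-closed class has a one-step edge leaving it (your phrase ``leads out'' should be sharpened to this, but the path argument gives it immediately), so some column sum of $A_{k,k}$ is strictly less than $1$, and then the refined Perron--Frobenius bound for \emph{irreducible} non-negative matrices (spectral radius equals the maximal column sum only when all column sums coincide) forces $\rho(A_{k,k})<1$. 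The paper instead proves the contrapositive of $(3)\Rightarrow(1)$ by a combinatorial argument: using strong connectivity of a class reachable from $C_k$, it finds a single $q$ such that \emph{every} vertex of $C_k$ exits in $q$ steps, so every column of $A^q_{k,k}$ is deficient, and then plain Gershgorin gives $|\lambda|^q<1$. Your route is shorter but leans on a stronger (though standard) Perron--Frobenius refinement that genuinely needs the irreducibility of the block (which you correctly have, apart from the $1\times1$ zero block that you treat separately); the paper's route is more self-contained, using only Gershgorin at the cost of the $q$-step bookkeeping. Your remarks on algebraic versus geometric multiplicity and the Markov-chain absorption picture are fine but not needed for the count. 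Either way, the assembly --- each closed class contributes a simple eigenvalue $1$ of its (stochastic, irreducible) block, each transient class contributes none --- matches the paper's conclusion.
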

\begin{proof}
If $A$ is not reducible (= irreducible), then the desired result follows from Theorem~\ref{theorem:perron}. If $A$ is reducible, we bring it to its canonical form given below via a permutation $P$ (see Eq.~\eqref{eq:canonical}).
Recall that each diagonal block $A_{k,k}$ is either not reducible or is the $1\times 1$ zero matrix and corresponds to one of the communication classes $C_k$ of $G_A$. We now show that the following are equivalent.
\begin{enumerate}
    \item The class $C_k$ is closed.
    \item $A_{k,k}$ is column stochastic.
    \item $\lambda=1$ is an eigenvalue of $A_{k,k}$.
\end{enumerate}

\textit{Proof of equivalence} $(1)\implies (2)$: Since the class $C_k$ is closed, all blocks $A_{l,k}=0$ for $l<k$. Since $A$ is column stochastic, this implies that $A_{k,k}$ is column stochastic. $(2)\implies (3)$: Since $A_{k,k}$ is column stochastic and irreducible, the conclusion follows from Theorem~\ref{theorem:stoch-ergodic}. $(3)\implies (1)$: This is the crucial implication. Assume on the contrary that $C_k$ is not closed. Then, $A_{l,k}\neq 0$ for some $l<k$ (i.e. $G(C_k)\to G(C_l)$ for some $l<k$). If $A_{l,l}$ is the $1\times 1$ zero block, then $A_{l',l}\neq 0$ for some $l'<l$ (i.e. $G(C_l)\to G(C_{l'})$ for some $l'<l$). Continuing in this way, we must eventually land in an irreducible block. Hence, we assume that $A_{l,l}$ itself is irreducible (i.e. $G(C_l)$ is strongly connected). Thus, every vertex $i\in C_k$ leads to some vertex in $C_l$ in say, $q_i$ steps. Once we land in $C_l$, we can use the strong connectivity of $G_A(C_l)$ to show that every vertex $i\in C_k$ leads to some vertex in $C_l$ also in $q_i+p$ steps for any $p\in \mathbb{N}$. Hence, if $q:=\max_{i\in C_k} q_i$, every vertex $i\in C_k$ leads to some vertex in $C_l$ in $q$ steps. This means that in the stochastic matrix
\begin{equation}
    PA^qP^\top = \left(   \begin{array}{cccc}
         A^q_{1,1} & * & \ldots & * \\
         0 & A^q_{2,2} & & * \\
         \vdots & & \ddots & \vdots  \\
         0 & 0 & \ldots & A^q_{n,n}
    \end{array}  \right),
\end{equation}
every column $i\in C_k$ has a non-zero entry. Hence, every column sum of the block $A^q_{k,k}$ is strictly less than one. Thus, we can employ the Gershgorin circle theorem \cite[Chapter 6]{Horn2012matrix} to conclude that 
\begin{equation*}
  \forall \lambda\in \operatorname{spec}A_{k,k}, \quad  |\lambda|^q<1 \implies |\lambda|<1.
\end{equation*}
This contradicts the fact that $\lambda=1$ is an eigenvalue of $A_{k,k}$. Hence, $C_k$ must be closed.

With the crucial equivalences established, the conclusion of the theorem follows by noting that
\begin{equation}
    \operatorname{spec} A = \cup_{k=1}^n \operatorname{spec} A_{k,k},
\end{equation}
and if $\lambda=1$ is an eigenvalue of any of the blocks $A_{k,k}$, then it must be simple because all the blocks are known to be not reducible.
\end{proof}

The two corollaries stated below follow trivially from the above result.

\begin{corollary}\label{corollary:Cergodic}
For a column stochastic matrix $A\in \M{d}$, the following are equivalent. 
\begin{itemize}
    \item $A$ is ergodic.
    \item $G_A$ has a unique closed class.
    \item There exists a closed, fully accessible, and strongly connected subgraph $G\subset G_A$.
\end{itemize}
\end{corollary}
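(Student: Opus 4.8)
The plan is to route everything through the two results that immediately precede this corollary: Theorem~\ref{theorem:stoch-ergodic}, which says that $A$ is ergodic if and only if $\lambda=1$ is a simple eigenvalue of $A$, and the theorem stated just above it, which equates the multiplicity of the $\lambda=1$ eigenvalue with the number of closed classes of $G_A$. Chaining these two facts yields the equivalence of the first two bullets with no extra work: $A$ is ergodic $\iff$ $\lambda=1$ has multiplicity one $\iff$ $G_A$ has exactly one closed class. Thus the whole content of the corollary is the equivalence of the second and third bullets, and that is where I would spend the effort.

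For the direction ``unique closed class $\implies$ existence of a closed, fully accessible, strongly connected subgraph'', let $C_1$ denote the unique closed class and set $G:=G_A(C_1)$, which is closed by definition (Remark~\ref{remark:closed-class}). Placing $A$ in the canonical block form of Eq.~\eqref{eq:canonical} with $C_1$ as the first block, the discussion following that equation already shows that $A_{1,1}$ is column stochastic, hence is not the $1\times 1$ zero block, hence is irreducible; Theorem~\ref{theorem:perron} then makes $G$ strongly connected. For full accessibility I would pass to the condensation of $G_A$ into communicating classes, which is a finite directed acyclic graph: following edges out of the class of any vertex $i\notin C_1$ eventually reaches a sink of this DAG, i.e.\ a closed class, and by uniqueness that sink is $C_1$; since within a communicating class every vertex leads to every other (strong connectivity for classes of size $\geq 2$, trivial for singletons), this promotes ``class of $i$ reaches $C_1$'' to ``$i$ reaches $C_1$'', which is exactly full accessibility.

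For the converse, ``closed, fully accessible, strongly connected subgraph $G$ $\implies$ unique closed class'', I would first argue that $W:=V_G$ is a single closed communicating class of $G_A$. Strong connectivity of $G$ forces all vertices of $W$ to communicate in $G_A$ (any path inside $G$ is also a path in $G_A$), so $W$ lies inside one communicating class $C$; and closedness of $G$ says that anything reachable in $G_A$ from a point of $W$ stays in $W$, which both forces $C\subseteq W$, hence $C=W$, and shows that this class is closed. Now suppose $C'$ were a closed class distinct from $W$: choosing $i\in C'$, full accessibility of $G$ gives $i\to j$ for some $j\in W$, while closedness of $C'$ forces $j\in C'$, so $W\cap C'\neq\emptyset$, contradicting that $W$ and $C'$ are distinct classes. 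Hence $W$ is the unique closed class, which is the second bullet.

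The main obstacle I anticipate is bookkeeping rather than anything conceptual: the multiplicity theorem and the canonical form speak about classes leading to classes, whereas ``fully accessible'' is a statement about individual vertices, so one must take a little care to bridge the two, either via strong connectivity within each class together with the descending block order of Eq.~\eqref{eq:canonical}, or, equivalently, via the acyclicity of the condensation DAG. A secondary subtlety is whether ``subgraph'' in the third bullet should be read as ``induced subgraph''; the argument above uses only the vertex set $W$ together with the closedness and strong-connectivity hypotheses, so it is insensitive to this reading, but it is worth flagging explicitly.
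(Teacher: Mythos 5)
Your proof is correct and follows essentially the same route as the paper, which derives the corollary from Theorem~\ref{theorem:stoch-ergodic} together with the multiplicity theorem and the surrounding discussion of the canonical form, declaring it to "follow trivially." Your condensation-DAG bookkeeping for the equivalence of the second and third bullets is just a careful write-up of the details the paper leaves implicit, and it is sound.
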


\begin{corollary}\label{corollary:Cmixing}
For a column stochastic matrix $A\in \M{d}$, the following are equivalent. 
\begin{itemize}
    \item $A$ is mixing.
    \item $G_A$ has a unique closed and aperiodic class.
    \item There exists a closed, fully accessible, and aperiodic subgraph $G\subset G_A$.
\end{itemize}
\end{corollary}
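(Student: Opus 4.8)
The plan is to reduce everything to the immediately preceding theorem (multiplicity of $\lambda=1$ equals the number of closed classes) and to Corollary~\ref{corollary:Cergodic}, mimicking the ergodic case, and then to pin down the one extra ingredient that controls the rest of the peripheral spectrum. First I would recall from Theorem~\ref{theorem:stoch-mixing} that $A$ is mixing if and only if $\lambda=1$ is a simple eigenvalue of $A$ \emph{and} $A$ has no other peripheral eigenvalues. Since mixing trivially implies ergodicity, and since by Corollary~\ref{corollary:Cergodic} ergodicity of $A$ is equivalent to $G_A$ having a unique closed class, I may assume we are in that situation and bring $A$ into the canonical block form \eqref{eq:canonical} via a permutation $P$, with $C_1$ the unique closed class. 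Then $A_{1,1}$ is column stochastic and irreducible (its induced subgraph $G_A(C_1)$ is closed, fully accessible, and strongly connected), while every other diagonal block $A_{k,k}$ with $k\geq 2$ corresponds to a transient class.

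Next I would isolate where peripheral eigenvalues can come from, using $\operatorname{spec}A=\bigcup_{k=1}^n\operatorname{spec}A_{k,k}$. The Gershgorin estimate already carried out in the proof of the preceding theorem shows that for each transient block one has $|\lambda|<1$ for every $\lambda\in\operatorname{spec}A_{k,k}$; hence the only diagonal block that can contribute a peripheral eigenvalue is $A_{1,1}$, so $A$ has no peripheral eigenvalue other than $1$ if and only if $A_{1,1}$ has none. By Theorem~\ref{theorem:stoch-mixing} applied to $A_{1,1}$, the latter says precisely that $A_{1,1}$ is mixing, and since $A_{1,1}$ is irreducible its stationary distribution has full support, so this is the same as $A_{1,1}$ being primitive, which by Theorem~\ref{theorem:perron} is equivalent to $G_A(C_1)$ being aperiodic. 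Combining with the first step, $A$ is mixing if and only if $G_A$ has a unique closed class and this class is aperiodic, which is the content of the second bullet.

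Finally, for the equivalence with the third bullet I would argue exactly as in Corollary~\ref{corollary:Cergodic}, using that an aperiodic subgraph is automatically strongly connected: if the unique closed class $C_1$ is aperiodic, then $G_A(C_1)$ is the required closed, fully accessible, aperiodic subgraph; conversely, a closed and strongly connected subgraph $G\subset G_A$ has vertex set a closed communicating class, and being additionally fully accessible forces it to be the \emph{unique} closed class (any other closed class would be unreachable from its vertices, contradicting full accessibility), so its aperiodicity delivers the second bullet back.

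I expect the only point requiring care — rather than a genuine obstacle — to be the bookkeeping in the second paragraph: confirming that the transient blocks contribute nothing to the peripheral spectrum (which is why the Gershgorin argument from the preceding proof is reused rather than reproved) and making precise, as in Corollary~\ref{corollary:Cergodic}, the identification of the abstract ``subgraph'' in the third bullet with the induced subgraph on the unique closed class.
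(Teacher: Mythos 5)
Your proposal is correct and follows essentially the route the paper intends: the paper dismisses this corollary as following "trivially" from the preceding multiplicity theorem, and your argument is exactly the natural fleshing-out of that claim — reduce to the ergodic case via Corollary~\ref{corollary:Cergodic}, reuse the Gershgorin step from the theorem's proof to see that transient blocks contribute no peripheral eigenvalues, and invoke Theorem~\ref{theorem:perron} to translate primitivity of the closed block into aperiodicity of $G_A(C_1)$. No gaps; the level of detail exceeds what the paper itself records.
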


The results of the above corollaries are present in at least two different forms in the literature, which we now discuss. Let us define a matrix $A\in \M{d}$ to be \emph{decomposable} (or \emph{completely reducible}) if there exists a permutation matrix $P\in \M{d}$ such that
\begin{equation}
    PAP^\top =  \left(   \begin{array}{cc}
         B_{1,1} & 0  \\
         0 & B_{2,2} 
    \end{array}  \right),
\end{equation}
where $B_{1,1}, B_{2,2}$ are square matrices. It can easily be verified that a stochastic matrix $A$ is not decomposable if and only if there exists a closed, fully accessible, and strongly connected subgraph $G\subset G_A$, which is equivalent to ergodicity of $A$ according to Corollary~\ref{corollary:Cergodic}. Imposing aperiodicty on $G_A$ would then give us the mixing property. This is why mixing stochastic matrices are also known as SIA (stochastic, indecomposable, aperiodic) matrices in the literature \cite{Wolfowitz1963SIA}. 

Another characterization of mixing stochastic matrices comes from the notion of scrambling \cite{Dobrushin1956scrambling, Hajnal1958weakergo, Seneta1979coeffergo,Akelbek2009scrindex}. A digraph $G=(V,E)$ is said to be scrambling if for any two vertices $i,j\in V$, there exists $k\in V$ such that both $i$ and $j$ lead to $k$ in one step, i.e., $(i,k), (j,k)\in E$. The \emph{scrambling index} of $G$ (denoted $n(G)$) is the smallest positive integer $n$ such that for any two vertices $i,j\in V$, there exists $k\in V$ such that both $i$ and $j$ lead to $k$ in $n$ steps: $i\xrightarrow{n} k$, $j\xrightarrow{n} k$. If no such $n$ exists, we say that $n(G)=0$. We can define a column stochastic matrix $A\in \M{d}$ to be \emph{scrambling} if $G_A$ is scrambling, i.e., $A$ is scrambling if for any two columns $i,j\in [d]$, there exists a row $k\in [d]$ such that $A_{ki}A_{kj}>0$. Then, $n(G_A)$ is the minimum positive integer $n$ such that $A^n$ is scrambling. It has recently been shown that for a digraph $G$, $n(G)> 0$ if and only if it has a closed, fully accessible, and aperiodic subgraph, \cite[Theorem 4.1]{Guterman2019scrindex}. We can recover this result from Corollary~\ref{corollary:Cmixing} with the help of the following simple lemma.

\begin{lemma}
A column stochastic matrix $A\in \M{d}$ is mixing if and only if $n(G_A)>0$.
\end{lemma}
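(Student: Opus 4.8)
The plan is to prove both directions of the equivalence ``$A$ mixing $\iff n(G_A) > 0$'' by invoking Corollary~\ref{corollary:Cmixing}, which characterizes mixing of $A$ via the existence of a closed, fully accessible, and aperiodic subgraph $G \subset G_A$. So it suffices to show that $n(G_A) > 0$ is equivalent to the existence of such a subgraph. One could appeal directly to \cite[Theorem 4.1]{Guterman2019scrindex} as mentioned in the text, but since the point of the lemma is precisely to recover that result from Corollary~\ref{corollary:Cmixing}, I would instead give a self-contained argument. The cleanest route is to relate scrambling directly to the spectral/convergence characterization of mixing: by Theorem~\ref{theorem:stoch-mixing}, $A$ is mixing iff $A^n \to \ketbra{\pi}{e}$ for some $\pi \in \Delta^d$ with the all-ones left structure, i.e., $A^n$ converges to a rank-one matrix with all columns equal.

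First I would prove the forward direction. Suppose $A$ is mixing, so $A^n \to \ketbra{\pi}{e}$ as $n \to \infty$, meaning every entry of $A^n$ in row $k$ tends to $\pi_k$. Since $\pi$ is a probability distribution, at least one coordinate $\pi_k$ is strictly positive. Pick such a $k$; then for all $n$ sufficiently large, the $k$-th row of $A^n$ is entrywise strictly positive, so in particular for any two columns $i, j$ we have $(A^n)_{ki}(A^n)_{kj} > 0$, i.e.\ $i \xrightarrow{n} k$ and $j \xrightarrow{n} k$ in $G_A$. Hence $A^n$ is scrambling for all large $n$, so $n(G_A) > 0$.

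Next I would prove the converse. Suppose $n(G_A) = n_0 > 0$, so $B := A^{n_0}$ is a scrambling column stochastic matrix. The key observation is that a scrambling stochastic matrix is mixing: this is the classical Dobrushin/Hajnal argument via the ergodicity coefficient. Concretely, for a scrambling column stochastic $B$, the Dobrushin coefficient $\tau(B) := \max_{i,j} \tfrac12 \sum_k |B_{ki} - B_{kj}|$ satisfies $\tau(B) < 1$ (for each pair of columns $i,j$ there is a row $k$ with $B_{ki}, B_{kj} > 0$, forcing the $\ell^1$ distance of the two columns to drop strictly below $2$), and $\tau$ is submultiplicative, so $\tau(B^m) \leq \tau(B)^m \to 0$, which forces $B^m \to \ketbra{\pi}{e}$ for some $\pi \in \Delta^d$; hence $B = A^{n_0}$ is mixing. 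It then follows that $A$ itself is mixing: $A^{n_0 m} = B^m \to \ketbra{\pi}{e}$, and $A \cdot \ketbra{\pi}{e} = \ketbra{\pi}{e}$ (since $A$ is stochastic, $\langle e| A = \langle e|$, but more directly $A^{n_0 m + 1} = A B^m \to A\ketbra{\pi}{e}$ while also $A^{n_0 m+1} = B^m A \to \ketbra{\pi}{e} A = \ketbra{\pi}{\pi^\top A} $; comparing, and using that $A\pi$ must then also be a fixed point, one gets $A\pi = \pi$), so all subsequences $A^{n_0 m + r}$ for $r = 0, \dots, n_0 - 1$ converge to $\ketbra{\pi}{e}$, whence $A^n \to \ketbra{\pi}{e}$ and $A$ is mixing by Theorem~\ref{theorem:stoch-mixing}.

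The main obstacle is the converse direction, specifically establishing that a scrambling stochastic matrix is mixing without circular reliance on the graph-theoretic corollary; the Dobrushin ergodicity-coefficient argument handles this cleanly, but one must be careful with the column-stochastic convention (the relevant coefficient measures contraction of differences of columns under left multiplication by $A$, equivalently of the action on the hyperplane $\{v : \langle e | v \rangle = 0\}$) and with the bookkeeping that passes from mixing of $A^{n_0}$ back to mixing of $A$. An alternative that avoids the ergodicity coefficient entirely: note scrambling of $G_A$ is inherited by $G_{A^n}$ for all $n$ (if all pairs reach a common vertex in $n_0$ steps, they do so in $n_0 + n$ steps by appending any walk), then observe that $G_A$ scrambling already implies $G_A$ has a unique closed class — two distinct closed classes could not both be reached from a vertex in one of them — and that this closed class is aperiodic, since within it the scrambling property gives, for the pair $(i,i)$ trivially and for genuine pairs a common descendant, enough cycle-length coincidences to force gcd one; this recovers exactly the middle condition of Corollary~\ref{corollary:Cmixing}. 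I would present the ergodicity-coefficient proof as the main line since it is shortest, and remark that the graph-theoretic route gives the promised derivation from Corollary~\ref{corollary:Cmixing}.
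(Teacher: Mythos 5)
Your proof is correct, and in substance it travels the same road as the paper: the forward direction (mixing forces all columns of $A^n$ to converge to the common limit $\ket{\pi}$, so $A^n$ is scrambling for large $n$, via Theorem~\ref{theorem:stoch-mixing}) is identical, and your converse rests on exactly the tool the paper invokes by citation — the Seneta/Dobrushin coefficient of ergodicity, quoted there as ``scrambling implies all non-unit eigenvalues have modulus strictly less than one'' \cite{Seneta1979coeffergo}. The differences are in what is proved versus cited, and at what level the argument runs: you establish the contraction bound $\tau(A^{n_0})<1$ and its submultiplicativity yourself, and you work with convergence of powers ($B^m\to\ket{\pi}\bra{e}$, then the subsequence bootstrap $A^{n_0m+r}\to\ket{\pi}\bra{e}$ using $\bra{e}A=\bra{e}$ and $A\pi=\pi$) rather than with the spectral characterization. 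This buys a self-contained proof and, usefully, an explicit treatment of the passage from scrambling of the power $A^{n_0}$ back to mixing of $A$ itself, a step the paper's one-line spectral deduction leaves implicit (one must observe that a peripheral eigenvalue $\lambda\neq 1$ of $A$ yields the peripheral eigenvalue $\lambda^{n_0}$ of $A^{n_0}$ and that simplicity of the unit eigenvalue of $A^{n_0}$ excludes nontrivial roots of unity). Two small polish points: the expression $\ket{\pi}\bra{e}A=\ketbra{\pi}{\pi^\top A}$ in your parenthetical is a notational slip — you want $\ket{\pi}\bra{e}A=\ket{\pi}\bra{e}$ directly from column stochasticity — and the claim that $\tau(B^m)\to 0$ forces actual convergence of $B^m$ deserves the one-line Cauchy argument (columns of $B^{m+1}$ are convex combinations of columns of $B^m$, so successive powers differ by at most $2\tau(B)^m$ in each column). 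Your sketched graph-theoretic alternative via Corollary~\ref{corollary:Cmixing} is also sound and is the route that literally ``recovers'' \cite[Theorem 4.1]{Guterman2019scrindex}, but as you note it requires spelling out the aperiodicity of the unique closed class.
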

\begin{proof}
It is known that if $A$ is scrambling, then all its non-unit eigenvalues $\lambda$ satisfy $|\lambda|<1$ \cite{Seneta1979coeffergo}. Thus, if $n(G_A)>0$, then all the non-unit eigenvalues of $A$ must be non-peripheral, i.e., $A$ is mixing. Conversely, if $A$ is mixing, then all columns of $A^n$ converge to the same limit as $n\to \infty$ (see Theorem~\ref{theorem:stoch-mixing}). Hence, for a large enough $n$, $A^n$ must necessarily be scrambling.
\end{proof}

\section{Ergodicity of DOC channels}\label{sec:DOC-channels}

In the previous section, we saw how the ergodic theory of stochastic matrices can be derived from quantum ergodic theory by restricting to a special family of `classical' channels:
\begin{align}
    \Phi_A:\M{d} &\to \M{d} \nonumber \\ 
    X &\mapsto \operatorname{diag}(A\ket{\operatorname{diag}X}) = \sum_{i,j}  A_{ij}X_{jj} \ketbra{i}.
\end{align}

We say that these channels are classical because for any input state $\rho\in\St{d}$, $\Phi_A$ acts only by applying the stochastic matrix $A\in\M{d}$ to the diagonal probability distribution $\ket{\operatorname{diag}\rho}\in \Delta^d$, i.e., the vector of diagonal elements of $\rho$. All the ergodic properties of $\Phi_A$ are equivalent to those of $A$, and thus can be efficiently verified by Theorem~\ref{theorem:perron} and Corollary~\ref{corollary:Cergodic}. A formal proof of this statement will follow from a more general result that we will prove in this section. In a nutshell, our idea is to extend this family of classical channels in such a way that even though the resulting channels are genuinely quantum, all the ergodic properties of these channels would still stem from the underlying `classical core'. With this end in sight, let us start by defining the said class of quantum channels, which were introduced and thoroughly examined in \cite{Singh2021diagonalunitary}. In order to make the exposition more visually intuitive, we will often make use of diagrams comprised of boxes and wires to depict tensors. For those who are unfamiliar with this language, \cite[Section 3]{Nechita2021} should suffice for a quick introduction. 

\begin{definition}\label{def:DOC}
For $A,B,C\in \M{d}$ with $\operatorname{diag}A=\operatorname{diag}B=\operatorname{diag}C$, we define
\begin{alignat*}{2}
    \Phi^{(1)}_{(A,C)}:\M{d} &\to \M{d} \\ 
    X &\mapsto \includegraphics[scale=1, align=c]{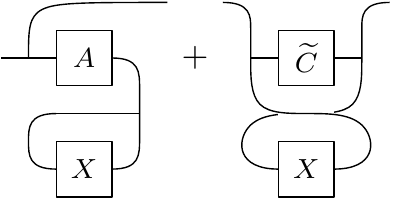} = \operatorname{diag}(A\ket{\operatorname{diag}X}) + \widetilde{C}\,\odot X^\top  \\
    \Phi^{(2)}_{(A,B)}:\M{d} &\to \M{d} \\ 
    X &\mapsto \includegraphics[scale=1, align=c]{Phi1-Z.pdf} = \operatorname{diag}(A\ket{\operatorname{diag}X}) + \widetilde{B}\,\odot X \\ 
    \Phi^{(3)}_{(A,B,C)}:\M{d} &\to \M{d} \\ 
    X &\mapsto \includegraphics[scale=1, align=c]{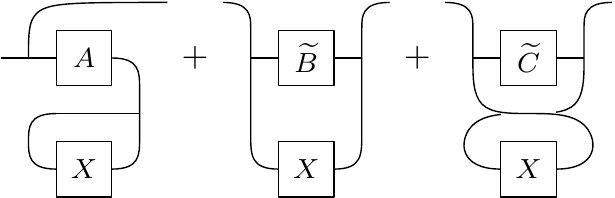} \\[0.2cm]
    &= \operatorname{diag}(A\ket{\operatorname{diag}X}) + \widetilde{B}\,\odot X + \widetilde{C}\,\odot X^\top,
\end{alignat*}
where $\odot$ denotes the entrywise matrix product and $\widetilde{Z}:= Z - \operatorname{diag}Z$.
\end{definition}

Note that in the above definition, we have simply extended the classical action of $\Phi_A$ by adding a specific non-diagonal action comprised of two well-known actions: matrix transposition and entrywise matrix multiplication. It turns out that these channels enjoy special diagonal unitary/orthogonal covariance properties, which we now illustrate. We denote the groups of diagonal unitary and diagonal orthogonal matrices in $\M{d}$ by $\mathcal{DU}_d$ and $\mathcal{DO}_d$, respectively.

\begin{theorem}\label{theorem:DOC-ABC}
Let $\Phi:\M{d}\to \M{d}$ be a linear map. Then, the following equivalences hold.
\begin{alignat*}{2}
    \forall X\in \M{d},\forall U\in\mathcal{DU}_d &: \,\, \Phi(UXU^\dagger) = U^\dagger \Phi(X) U   &&\iff \Phi = \Phi^{(1)}_{(A,C)}, \\ 
    \forall X\in \M{d}, \forall U\in\mathcal{DU}_d &: \,\, \Phi(UXU^\dagger) = U \Phi(X) U^\dagger &&\iff \Phi = \Phi^{(2)}_{(A,B)}, \\
     \forall X\in \M{d}, \forall O\in\mathcal{DO}_d &: \,\, \Phi(\,OXO\,) = O \Phi(X) O &&\iff \Phi = \Phi^{(3)}_{(A,B,C)},
\end{alignat*}
where each equivalence above holds for some $A,B,C\in \M{d}$ with $\operatorname{diag}A=\operatorname{diag}B=\operatorname{diag}C$.
\end{theorem}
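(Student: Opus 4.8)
The plan is to prove each of the three equivalences by establishing the two directions separately, and to reduce the three cases to essentially one computation via the observation that $\mathcal{DO}_d = \mathcal{DU}_d \cap \mathcal{O}(d)$ and that the first two covariance conditions differ only by whether the conjugation lands on $\Phi(X)$ with the diagonal unitary or its adjoint. For the ``$\Leftarrow$'' directions, I would simply verify by direct computation that each of the maps $\Phi^{(1)}_{(A,C)}, \Phi^{(2)}_{(A,B)}, \Phi^{(3)}_{(A,B,C)}$ satisfies the corresponding covariance relation. This is routine: for $U = \operatorname{diag}(u_1,\ldots,u_d) \in \mathcal{DU}_d$ one has $(UXU^\dagger)_{ij} = u_i \bar u_j X_{ij}$, so the diagonal part $\operatorname{diag}(A\ket{\operatorname{diag}X})$ is unaffected by conjugation (since $(UXU^\dagger)_{jj} = X_{jj}$), while the entrywise-product term $\widetilde B \odot X$ picks up the phase $u_i \bar u_j$ in entry $(i,j)$, which matches $U(\,\cdot\,)U^\dagger$, and the transpose term $\widetilde C \odot X^\top$ picks up $u_i \bar u_j$ on $X^\top_{ij} = X_{ji}$, i.e.\ the phase $\bar u_i u_j$ relative to $X_{ij}$ after transposing — wait, one must be careful — in fact $\widetilde C \odot X^\top$ transforms with the opposite phase and hence matches $U^\dagger(\,\cdot\,)U$; this sign bookkeeping is exactly why the transpose term appears in $\Phi^{(1)}$ (contravariant) and the entrywise term in $\Phi^{(2)}$ (covariant), and both in $\Phi^{(3)}$ where $O$ is real so $\bar u_j = u_j$ and the distinction collapses.

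For the ``$\Rightarrow$'' directions, which are the substantive part, I would argue as follows. Assume $\Phi$ satisfies, say, the second covariance relation $\Phi(UXU^\dagger) = U\Phi(X)U^\dagger$ for all $U \in \mathcal{DU}_d$. Apply this with $X = \ketbra{k}{l}$, so $UXU^\dagger = u_k \bar u_l \ketbra{k}{l}$; writing $\Phi(\ketbra{k}{l}) = \sum_{i,j} M^{(kl)}_{ij}\ketbra{i}{j}$, the relation forces $u_k \bar u_l M^{(kl)}_{ij} = u_i \bar u_j M^{(kl)}_{ij}$ for all choices of the phases $u_1,\ldots,u_d \in \mathbb{T}$. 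Hence $M^{(kl)}_{ij} = 0$ unless the characters $u \mapsto u_k \bar u_l$ and $u \mapsto u_i \bar u_j$ on the torus coincide, i.e.\ unless $\{i\text{ with sign }+,\ j\text{ with sign }-\} = \{k\text{ with sign }+,\ l\text{ with sign }-\}$ as multisets. For $k \ne l$ this leaves only $(i,j) = (k,l)$, so $\Phi(\ketbra{k}{l}) = b_{kl}\ketbra{k}{l}$ for a scalar $b_{kl}$; for $k = l$ it leaves $(i,j)$ with $i = j$ arbitrary, so $\Phi(\ketbra{k}{k}) = \sum_i a_{ik}\ketbra{i}{i}$ is diagonal. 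Collecting the scalars into matrices $A = (a_{ik})$ and $B = (b_{kl})$ (with the diagonal of $B$ read off from the $k=l$ case, forcing $\operatorname{diag}A = \operatorname{diag}B$) and using linearity of $\Phi$ gives exactly $\Phi = \Phi^{(2)}_{(A,B)}$. The first case is identical with the roles of the phases on $i$ and $j$ swapped, yielding $\Phi(\ketbra{k}{l}) = c_{kl}\ketbra{l}{k}$ off-diagonal, i.e.\ a transpose term. For the third case, covariance only under the smaller group $\mathcal{DO}_d$ of $\pm 1$ diagonal matrices gives the weaker constraint $o_k o_l M^{(kl)}_{ij} = o_i o_j M^{(kl)}_{ij}$ for all sign vectors, which kills $M^{(kl)}_{ij}$ unless $\{i,j\}$ and $\{k,l\}$ agree modulo the subgroup generated by squares (trivial) — i.e.\ unless $\{i,j\} = \{k,l\}$ as sets — so off-diagonal $\ketbra{k}{l}$ can map to a combination of $\ketbra{k}{l}$ and $\ketbra{l}{k}$, and the diagonal case is as before; this produces both a $B$-term and a $C$-term, giving $\Phi = \Phi^{(3)}_{(A,B,C)}$.

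The main obstacle, and the point requiring the most care, is the character-theoretic bookkeeping that identifies precisely which matrix entries $M^{(kl)}_{ij}$ are forced to vanish — in particular getting the $\pm$ signs right so that the transpose term lands in $\Phi^{(1)}$ and the Hadamard term in $\Phi^{(2)}$, and verifying in the $\mathcal{DO}_d$ case that the group really is too small to separate $\ketbra{k}{l}$ from $\ketbra{l}{k}$ but still large enough to kill everything else. A clean way to phrase this is to note that $\mathbb{C}[\mathcal{DU}_d]$-module decomposition of $\M{d}$ under the conjugation action $X \mapsto UXU^\dagger$ has one-dimensional isotypic components indexed by the ``weights'' $e_i - e_j$, and the two inequivalent actions on the output correspond to restricting attention to weights that are fixed by either $e_i - e_j \mapsto e_i - e_j$ (trivially all) versus matching $e_i - e_j$ to $e_j - e_i$; I would make this explicit rather than leaving it as ``easy to check.'' Once the vanishing pattern is pinned down, assembling the scalars into the matrices $A, B, C$ and checking the diagonal compatibility $\operatorname{diag}A = \operatorname{diag}B = \operatorname{diag}C$ is immediate, and invoking linearity finishes the proof; I would also remark that this theorem is essentially \cite[Theorem 4.3]{Singh2021diagonalunitary} and cite it, giving the above as a self-contained argument.
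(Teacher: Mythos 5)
Your argument is correct, and both directions are handled properly: the ``$\Leftarrow$'' phase bookkeeping (Hadamard term covariant, transpose term contravariant, the two merging for real diagonal $O$) checks out, and the ``$\Rightarrow$'' matrix-unit/character computation correctly pins down the vanishing pattern of $\Phi(\ketbra{k}{l})$ in all three cases, including the weaker $\mathcal{DO}_d$ constraint that leaves both $\ketbra{k}{l}$ and $\ketbra{l}{k}$ alive. Note that the paper itself states Theorem~\ref{theorem:DOC-ABC} without proof, deferring to \cite{Singh2021diagonalunitary}; your self-contained derivation is essentially the same argument as in that reference (there organized via invariance of the Choi matrix under local diagonal unitary/orthogonal twirling rather than directly on matrix units), so including it with the citation, as you suggest, is exactly right. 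The only cosmetic point is that the diagonals of $B$ and $C$ never enter $\Phi^{(2)}$ or $\Phi^{(3)}$ (only $\widetilde B$, $\widetilde C$ do), so $\operatorname{diag}B=\operatorname{diag}C=\operatorname{diag}A$ is a normalization convention fixing the parameterization, not something ``read off'' from the $k=l$ case.
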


Let us now define the Choi matrix of a linear map $\Phi:\M{d}\to \M{d}$ as 
\begin{equation}
    J(\Phi) := \sum_{1\leq i,j\leq d} \Phi(\ketbra{i}{j}) \otimes \ketbra{i}{j}.
\end{equation}
Then, for all $A,B,C\in \M{d}$ with $\operatorname{diag}A=\operatorname{diag}B=\operatorname{diag}C$, it is easy to show that
\begin{align}
    J(\Phi^{(1)}_{(A,C)}) &= \includegraphics[align=c]{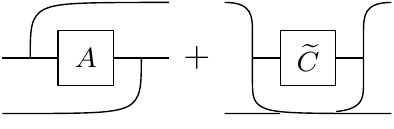} = \sum_{i,j=1}^d A_{ij} \ketbra{ij}{ij} + \sum_{1\leq i\neq j\leq d} C_{ij} \ketbra{ij}{ji} =: X^{(1)}_{(A,C)}, \label{eq:LDUI} \\[0.2cm]
    J(\Phi^{(2)}_{(A,B)}) &= \includegraphics[align=c]{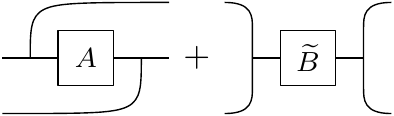} = \sum_{i,j=1}^d A_{ij} \ketbra{ij}{ij} + \sum_{1\leq i\neq j\leq d} B_{ij} \ketbra{ii}{jj} =: X^{(2)}_{(A,B)}, \label{eq:CLDUI} \\[0.2cm]
    J(\Phi^{(3)}_{(A,B,C)}) &= \includegraphics[align=c]{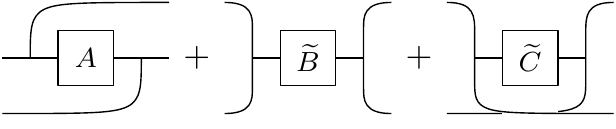} \nonumber \\ 
    &=\sum_{i,j=1}^d A_{ij} \ketbra{ij}{ij} + \sum_{1\leq i\neq j\leq d} B_{ij} \ketbra{ii}{jj} + \sum_{1\leq i\neq j\leq d} C_{ij} \ketbra{ij}{ji} =: X^{(3)}_{(A,B,C)}. \label{eq:LDOI}
\end{align}

These Choi matrices also inherit certain diagonal unitary and orthogonal invariance properties from their defining linear maps. We summarize all these symmetries in Table~\ref{tab:DOC}.

{\renewcommand{\arraystretch}{1}
\begin{table}[H]
    \centering
    \begin{tabular}{|r|l|c|c|} 
\hline
\emph{Acronym}    & \emph{Symmetry}                                & \emph{Linear map condition}  & \emph{Choi matrix condition}                                 \\ 
\hline\hline 
DUC & \specialcell[c]{Diagonal unitary \\ covariant} & \centered{ \hspace{0.1cm}$\Phi(UXU^\dagger) = U^\dagger \Phi(X) U$}     &  $(U \otimes U) J(\Phi) (U^\dagger \otimes U^\dagger) = J(\Phi)$   \\ 
\hline
CDUC &  \specialcell[c]{Conjugate diagonal unitary \\ covariant} & \hspace{0.1cm}$\Phi(UXU^\dagger) = U \Phi(X) U^\dagger$ & $(U \otimes \bar U) J(\Phi) (U^\dagger \otimes U^\top) = J(\Phi)$  \\ 
\hline
DOC  & \specialcell[c]{Diagonal orthogonal \\ covariant}      & $\Phi(\,OXO\,) = O \Phi(X) O $ & $(O\otimes O) J(\Phi) (O\otimes O) = J(\Phi)$   \\
\hline
\end{tabular}
    \caption{\centering Diagonal symmetries of a linear map $\Phi:\M{d}\to \M{d}$. The conditions above hold for all $X\in \M{d}, U\in \mathcal{DU}_d$, and $O\in \mathcal{DO}_d$.}
    \label{tab:DOC}
\end{table} }

Let $\T{d}$ be the vector space of all linear maps from $\M{d}\to \M{d}$. We will denote the vector subspaces of all DUC, CDUC, and DOC linear maps in $\T{d}$ by $\DUC_d, \CDUC_d,$ and $\DOC_d$, respectively. If we define the matrix spaces
\begin{align}
    \MLDUI{d} &\coloneqq \{ (A,B) \in \M{d}\times \M{d} \, \big| \, \operatorname{diag}(A)=\operatorname{diag}(B) \}, \text{ and } \label{eq:MLDUI}\\
    \MLDOI{d} &\coloneqq \{ (A,B,C) \in \M{d}\times \M{d}\times \M{d} \, \big| \, \operatorname{diag}(A)=\operatorname{diag}(B)=\operatorname{diag}(C)\}, \label{eq:MLDOI}
\end{align}
and equip them with component-wise addition and scalar multiplication, then Theorem~\ref{theorem:DOC-ABC} tells us that the following vector space isomorphisms hold.  
\begin{equation}
     \DUC_d \cong \CDUC_d \cong \MLDUI{d} \quad\text{and} \quad \DOC_d \cong \MLDOI{d}.
\end{equation}

\begin{remark}
$\DUC_d$ and $\CDUC_d$ form vector subspaces of $\DOC_d$. More precisely, 
\begin{equation*}
    \forall (A,B)\in \MLDUI{d}: \quad \Phi^{(3)}_{(A,\operatorname{diag}B,B)} = \Phi^{(1)}_{(A,B)} \,\,\,\text{ and }\,\,\, \Phi^{(3)}_{(A,B,\operatorname{diag}B)} = \Phi^{(2)}_{(A,B)}.
\end{equation*}
\end{remark}

The DUC, CDUC, and DOC classes of linear maps and the associated Choi matrices have been extensively studied in \cite{Singh2021diagonalunitary}. We note some important results about these maps below. Before proceeding further, let us note that the spectrum of a linear map $\Phi:\M{d}\to \M{d}$ is equal to the spectrum of its matrix representation in the standard basis of $\M{d}$ (we denote it by $M(\Phi)\in \M{d^2}\simeq \M{d}\otimes \M{d}$). Moreover, it is well-known and easy to check that $M(\Phi)$ is nothing but a realignment of the Choi matrix $J(\Phi)$:
\begin{equation}
    M(\Phi) = J(\Phi)^R,
\end{equation}
where the \emph{realignment} of a bipartite matrix $X$, denoted as $X^R$, is defined in coordinates as
$$\langle ij|X^R|kl\rangle = \langle ik|X|jl\rangle \qquad \forall i,j,k,l.$$

\begin{figure}[H]
    \centering
    \includegraphics[scale=1.2]{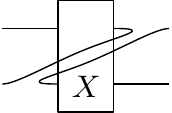}
    \caption{Visual depiction of a realigned bipartite matrix $X^R$.}
    \label{fig:PT+realign}
\end{figure}

\begin{theorem}\label{theorem:DOC-spec}
The spectrum of $\Phi^{(3)}_{(A,B,C)}\in \DOC_d$ admits the following expression:
\begin{equation*}
    \operatorname{spec}\Phi^{(3)}_{(A,B,C)} = \operatorname{spec}A \cup \left( \bigcup_{i<j} \operatorname{spec} \begin{bmatrix} B_{ij} & C_{ij} \\ C_{ji} & B_{ji} \end{bmatrix}   \right).
\end{equation*}
Moreover, all eigenmatrices of $\Phi^{(3)}_{(A,B,C)}$ associated with 
\begin{itemize}
    \item $\lambda\in\operatorname{spec}A$ are of the form $\operatorname{diag}\ket{v}$, where $\ket{v}$ are the eigenvectors of $A$ associated with $\lambda$.
    \item $\lambda\in \operatorname{spec} \begin{bmatrix} B_{ij} & C_{ij} \\ C_{ji} & B_{ji} \end{bmatrix}$ are of the form $v_1\ketbra{i}{j} + v_2\ketbra{j}{i}$, where 
    \begin{equation*}
        \begin{bmatrix} v_1 \\ v_2 \end{bmatrix} \text{ are the eigenvectors of } \begin{bmatrix} B_{ij} & C_{ij} \\ C_{ji} & B_{ji} \end{bmatrix} \text{ associated with } \lambda.
    \end{equation*}
\end{itemize}
\end{theorem}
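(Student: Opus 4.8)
The plan is to directly diagonalize $\Phi^{(3)}_{(A,B,C)}$ by exploiting the fact that its action decouples cleanly on the standard basis $\{\ketbra{i}{j}\}$ of $\M{d}$, which in turn follows from the DOC symmetry. First I would observe that $\Phi := \Phi^{(3)}_{(A,B,C)}$ maps diagonal matrices to diagonal matrices: indeed $\Phi(\operatorname{diag}\ket{v}) = \operatorname{diag}(A\ket{v})$, since the terms $\widetilde{B}\odot X$ and $\widetilde{C}\odot X^\top$ kill the diagonal of $X$. Hence the subspace $\mathcal{D}_d \subset \M{d}$ of diagonal matrices is $\Phi$-invariant, and $\Phi|_{\mathcal{D}_d}$ is unitarily equivalent (via $\operatorname{diag}\ket{v}\leftrightarrow \ket{v}$) to the action of $A$ on $\C{d}$. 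This immediately yields $\operatorname{spec} A$ as part of the spectrum, with eigenmatrices $\operatorname{diag}\ket{v}$ for $\ket{v}$ an eigenvector of $A$, proving the first bullet.

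Next I would handle the off-diagonal part. For a fixed pair $i \ne j$, consider the two-dimensional subspace $V_{ij} := \operatorname{span}\{\ketbra{i}{j}, \ketbra{j}{i}\} \subset \M{d}$. Using Definition~\ref{def:DOC}, compute $\Phi(\ketbra{i}{j})$: the diagonal term vanishes (since $\operatorname{diag}\ketbra{i}{j}=0$ for $i\ne j$), the term $\widetilde{B}\odot \ketbra{i}{j}$ contributes $B_{ij}\ketbra{i}{j}$, and the term $\widetilde{C}\odot (\ketbra{i}{j})^\top = \widetilde{C}\odot \ketbra{j}{i}$ contributes $C_{ji}\ketbra{j}{i}$. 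So $\Phi(\ketbra{i}{j}) = B_{ij}\ketbra{i}{j} + C_{ji}\ketbra{j}{i}$, and symmetrically $\Phi(\ketbra{j}{i}) = B_{ji}\ketbra{j}{i} + C_{ij}\ketbra{i}{j}$. Thus each $V_{ij}$ is $\Phi$-invariant, and in the ordered basis $(\ketbra{i}{j}, \ketbra{j}{i})$ the restriction $\Phi|_{V_{ij}}$ has matrix $\begin{bmatrix} B_{ij} & C_{ij} \\ C_{ji} & B_{ji}\end{bmatrix}$ (one must be slightly careful reading columns off correctly: the coefficient of $\ketbra{i}{j}$ in $\Phi(\ketbra{j}{i})$ is $C_{ij}$, giving the $(1,2)$ entry, etc.). This gives the eigenvalues $\operatorname{spec}\begin{bmatrix} B_{ij} & C_{ij} \\ C_{ji} & B_{ji}\end{bmatrix}$ together with the stated form $v_1\ketbra{i}{j}+v_2\ketbra{j}{i}$ for the eigenmatrices.

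Finally I would assemble the pieces: $\M{d} = \mathcal{D}_d \oplus \bigoplus_{i<j} V_{ij}$ is a direct sum decomposition into $\Phi$-invariant subspaces (choosing one representative $V_{ij}$ per unordered pair), so $\operatorname{spec}\Phi$ is the union of the spectra of the restrictions, and a corresponding basis of generalized eigenvectors is obtained blockwise. This yields exactly the claimed formula. The only real subtlety — and the step I would be most careful about — is the bookkeeping in identifying the $2\times2$ block: getting the transpose/conjugation conventions in $\widetilde{C}\odot X^\top$ right so that the off-diagonal block is genuinely $\begin{bmatrix} B_{ij} & C_{ij}\\ C_{ji} & B_{ji}\end{bmatrix}$ and not its transpose. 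Everything else is routine linear algebra once the invariant decomposition is in place; no deep input beyond Definition~\ref{def:DOC} is needed, though one could alternatively derive the same decomposition abstractly from the DOC covariance in Theorem~\ref{theorem:DOC-ABC} by decomposing $\M{d}$ into isotypic components for the $\mathcal{DO}_d$-action $X \mapsto OXO$.
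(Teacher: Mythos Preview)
Your proof is correct and follows essentially the same approach as the paper: both arguments amount to observing that the matrix representation $M(\Phi^{(3)}_{(A,B,C)})$ in the standard basis $\{\ketbra{i}{j}\}$ block-diagonalizes as $A \oplus \bigoplus_{i<j}\begin{bmatrix} B_{ij} & C_{ij}\\ C_{ji} & B_{ji}\end{bmatrix}$. The only cosmetic difference is that the paper obtains this by noting $M(\Phi)=J(\Phi)^R = X^{(3)}_{(B,A,C)}$ and invoking the known block structure of LDOI matrices from \cite{Singh2021diagonalunitary}, whereas you compute the action on matrix units directly---a slightly more elementary route that avoids the external citation.
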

\begin{proof}
The results follow immediately by noting that the following block decompositions hold:
\begin{align}
    J(\Phi^{(3)}_{(A,B,C)}) = X^{(3)}_{(A,B,C)} &= B \oplus \left( \bigoplus_{i < j} \begin{bmatrix} A_{ij} & C_{ij} \\ C_{ji} & A_{ji} \end{bmatrix} \right), \\
    M(\Phi^{(3)}_{(A,B,C)}) = J(\Phi^{(3)}_{(A,B,C)})^R = X^{(3)}_{(B,A,C)} &= A \oplus \left( \bigoplus_{i < j} \begin{bmatrix} B_{ij} & C_{ij} \\ C_{ji} & B_{ji} \end{bmatrix} \right),
\end{align}
see \cite[Proposition 4.1 and 4.3]{Singh2021diagonalunitary}.
\end{proof}

\begin{theorem}\label{theorem:DOC-cptp}
$\Phi^{(3)}_{(A,B,C)}\in \DOC_d$ is a quantum channel if and only if
\begin{itemize}
    \item $A$ is column stochastic,
    \item $B$ is positive semi-definite,
    \item $C=C^{\dagger}$, and $A_{ij}A_{ji}\geq |C_{ij}|^2$ for all $i<j$.
\end{itemize}
\end{theorem}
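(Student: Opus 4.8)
The plan is to use the Choi matrix criterion for complete positivity together with the known condition for trace preservation. Recall that a linear map $\Phi:\M{d}\to\M{d}$ is completely positive if and only if its Choi matrix $J(\Phi)$ is positive semi-definite, and it is trace-preserving if and only if $\operatorname{Tr}_1 J(\Phi) = \iden_d$ (equivalently, the partial trace over the first tensor factor equals the identity). So the strategy is: first dispense with trace preservation by a direct computation on $X^{(3)}_{(A,B,C)}$, which will force $A$ to be column stochastic; then analyze when $X^{(3)}_{(A,B,C)} \geq 0$ using the block decomposition already recorded in the proof of Theorem~\ref{theorem:DOC-spec}.

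First I would compute $\operatorname{Tr}_1 J(\Phi^{(3)}_{(A,B,C)})$. From Eq.~\eqref{eq:LDOI}, $X^{(3)}_{(A,B,C)} = \sum_{i,j} A_{ij}\ketbra{ij}{ij} + \sum_{i\neq j} B_{ij}\ketbra{ii}{jj} + \sum_{i\neq j} C_{ij}\ketbra{ij}{ji}$. Tracing out the first system: the diagonal term contributes $\sum_{i,j} A_{ij}\ketbra{j}{j}$, the $B$-term contributes nothing since $\bra{k}(\ketbra{ii}{jj})\ket{k}$ requires $i=j$, and the $C$-term contributes $\sum_{i\neq j} C_{ij}\braket{i}{\cdot}\cdots$ — one checks it also vanishes off-diagonally or merges into diagonal entries, leaving $\operatorname{Tr}_1 J(\Phi^{(3)}_{(A,B,C)}) = \sum_j \left(\sum_i A_{ij}\right)\ketbra{j}{j}$. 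Hence trace preservation is equivalent to $\sum_i A_{ij} = 1$ for all $j$, i.e. $A$ has all column sums equal to one. Entrywise non-negativity of $A$ will then come out of the positive semi-definiteness analysis below (the diagonal entries $\langle ij|X^{(3)}|ij\rangle = A_{ij}$ must be $\geq 0$), so together these give column stochasticity.

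Next I would establish positive semi-definiteness of $X^{(3)}_{(A,B,C)}$ using the block-diagonal structure. As noted in the proof of Theorem~\ref{theorem:DOC-spec}, after reordering the standard basis of $\C{d}\otimes\C{d}$, one has $X^{(3)}_{(A,B,C)} = B \oplus \bigoplus_{i<j} \begin{bmatrix} A_{ij} & C_{ij} \\ C_{ji} & A_{ji} \end{bmatrix}$, where the first block acts on $\operatorname{span}\{\ket{ii}: i\in[d]\}$ (its $(i,j)$ entry being $B_{ij}$ since the diagonal entries of $B$ agree with those of $A$, which are exactly the $\ket{ii}\ketbra{}{ii}$ coefficients), and for each pair $i<j$ the $2\times 2$ block acts on $\operatorname{span}\{\ket{ij},\ket{ji}\}$. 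A Hermitian matrix is positive semi-definite if and only if each block in such an orthogonal direct sum decomposition is positive semi-definite. The block $B$ being PSD gives the second bullet. Each $2\times2$ Hermitian block $\begin{bmatrix} A_{ij} & C_{ij} \\ C_{ji} & A_{ji} \end{bmatrix}$ is PSD if and only if it is Hermitian (forcing $C_{ji} = \overline{C_{ij}}$, i.e. $C = C^\dagger$ on off-diagonal entries, and $A_{ij}, A_{ji}$ real — automatic once $A$ is to be stochastic), its diagonal entries are non-negative ($A_{ij}, A_{ji}\geq 0$), and its determinant is non-negative ($A_{ij}A_{ji} - |C_{ij}|^2 \geq 0$). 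Collecting: $C = C^\dagger$ (the diagonal of $C$ equals that of $A$, already real and non-negative) and $A_{ij}A_{ji}\geq |C_{ij}|^2$ for all $i<j$, which is the third bullet.

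I do not anticipate a serious obstacle here; the main care-point is bookkeeping the basis reordering so that the claimed block decomposition is justified (one must verify that $X^{(3)}$ has no matrix elements linking $\ket{ij}$ for $\{i,j\}\neq\{k,l\}$ as unordered pairs — this is immediate from Eq.~\eqref{eq:LDOI} since every term is of the form $\ketbra{ij}{ij}$, $\ketbra{ii}{jj}$, or $\ketbra{ij}{ji}$), and reconciling the $\operatorname{diag}A = \operatorname{diag}B = \operatorname{diag}C$ constraint with the appearance of $B_{ii}$ versus $A_{ii}$ in the $\ketbra{ii}{ii}$ coefficient — both equal the same number, so there is no ambiguity. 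A minor subtlety is showing that entrywise non-negativity of $A$ is not an extra hypothesis but is subsumed: it follows because $A_{ij} = \langle ij|X^{(3)}_{(A,B,C)}|ij\rangle \geq 0$ once complete positivity is assumed. Once both directions (necessity from reading off the block conditions, sufficiency from reassembling the direct sum) are checked, the proof is complete.
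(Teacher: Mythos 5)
Your proposal is correct, and it is more self-contained than the paper, whose ``proof'' of Theorem~\ref{theorem:DOC-cptp} consists entirely of a citation to \cite[Lemma 6.14]{Singh2021diagonalunitary}. Your route --- Choi's theorem (complete positivity iff $J(\Phi)\geq 0$), trace preservation iff $\operatorname{Tr}_1 J(\Phi)=\iden_d$, and then reading the conditions off the orthogonal block decomposition $X^{(3)}_{(A,B,C)} = B \oplus \bigoplus_{i<j}\begin{bmatrix} A_{ij} & C_{ij}\\ C_{ji} & A_{ji}\end{bmatrix}$ --- is exactly in the spirit of the cited lemma, so the difference is one of presentation rather than of mathematical substance: the paper imports the result wholesale, while you re-derive it from a decomposition the paper already records (in the proof of Theorem~\ref{theorem:DOC-spec}, itself cited to the same reference). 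All the delicate points are handled: the $B$- and $C$-terms of $X^{(3)}$ vanish under $\operatorname{Tr}_1$ (since $\operatorname{Tr}_1\ketbra{ii}{jj}$ and $\operatorname{Tr}_1\ketbra{ij}{ji}$ are proportional to $\delta_{ij}$), so trace preservation gives unit column sums of $A$; Hermiticity and positivity of the $2\times 2$ blocks force $A_{ij}\geq 0$, $C_{ji}=\overline{C_{ij}}$ and $A_{ij}A_{ji}\geq |C_{ij}|^2$; the diagonal of $C$ is real and non-negative because $\operatorname{diag}C=\operatorname{diag}A$; and the converse direction follows by reassembling the direct sum. The only stylistic blemish is the hand-wavy sentence about the $C$-term ``merging into diagonal entries'' in the partial-trace computation --- in fact it simply vanishes for $i\neq j$ --- but this does not affect the argument.
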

\begin{proof}
See \cite[Lemma 6.14]{Singh2021diagonalunitary}.
\end{proof}

Using the above theorem, we can split the action of a DOC channel on some input state $\rho$ into two parts. Firstly, we have a classical action on the diagonal probability distribution $\ket{\operatorname{diag}\rho}=\sum_i \rho_{ii}\ket{i}$ of $\rho$ that is facilitated by the `classical core' $A$. Secondly, we have a mixture of transposition with some entrywise product actions (performed by $B,C$). Several interesting and practically relevant classes of quantum channels have these kinds of actions: depolarizing and transpose-depolarizing channels, multi-level amplitude damping channels, Schur multipliers, to name a few. A more comprehensive list of examples can be found in \cite[Section 7]{Singh2021diagonalunitary}. 

Now that we have all the background covered, we can start talking about the ergodic properties of the recently introduced classes of channels. Recall that we wish to be able deduce all the said properties for a given DOC channel from its underlying `classical core', so that one can easily check if a given DOC channel has these properties by utilizing the classical results from Section~\ref{subsec:graphs-ergodic}. The expression for the spectrum of a DOC map given in Theorem~\ref{theorem:DOC-spec} will be crucial in realizing this goal. Hence, for $(B,C)\in \MLDUI{d}$ with $B=B^\dagger, C=C^\dagger$ and $i<j$, let us define
\begin{equation}
    \lambda^{\pm}_{ij} (B,C) := \frac 1 2 \left[ B_{ij}+B_{ji} \pm \sqrt{(B_{ij}-B_{ji})^2+4|C_{ij}|^2}\right],
\end{equation}
so that for all $i<j$, we have 
\begin{equation}
    \operatorname{spec}\begin{bmatrix} B_{ij} & C_{ij} \\ C_{ji} & B_{ji} \end{bmatrix}  = \{\lambda^+_{ij} (B,C), \lambda^-_{ij} (B,C) \}.
\end{equation}
Either by direct computation or by using the Gershgorin circle theorem \cite[Chapter 6]{Horn2012matrix}, it can be verified that
\begin{equation}\label{eq:lambdaij<=}
    \forall i<j: \quad |\lambda^\pm_{ij}(B,C)| \leq |B_{ij}| + |C_{ij}|.
\end{equation}

With the relevant notation in place, we can state the two main results of this section.

\begin{theorem}\label{theorem:DOC-ergodic/mixing}
A \emph{DOC} channel $\Phi^{(3)}_{(A,B,C)}:\M{d}\to \M{d}$ is 
\begin{itemize}
    \item ergodic if and only if $A$ is ergodic and $\lambda^{\pm}_{ij} (B,C) \neq 1$ for all $i<j$.
    \item mixing if and only if $A$ is mixing and $|\lambda^{\pm}_{ij} (B,C)| \neq 1$ for all $i<j$.
\end{itemize}
In both cases, the unique stationary state of $\Phi^{(3)}_{(A,B,C)}$ is $\operatorname{diag}\ket{\pi}= \sum_i \pi_i \ketbra{i} \in \St{d}$, where $\ket{\pi}\in \Delta^d$ is the unique stationary distribution of $A$.
\end{theorem}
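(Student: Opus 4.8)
The plan is to leverage the explicit spectral description in Theorem~\ref{theorem:DOC-spec} together with the characterizations of ergodicity/mixing via the peripheral spectrum. Recall from Theorem~\ref{theorem:Qergodic2} that a channel is ergodic iff $\lambda=1$ is a simple eigenvalue, and (by the subsequent theorem) mixing iff $\lambda=1$ is the only peripheral eigenvalue and it is simple. By Theorem~\ref{theorem:DOC-spec}, the spectrum of $\Phi^{(3)}_{(A,B,C)}$ is the disjoint union (counted with multiplicity, reading off the block decomposition of $M(\Phi^{(3)}_{(A,B,C)})$) of $\operatorname{spec}A$ and the eigenvalues $\lambda^{\pm}_{ij}(B,C)$ over all $i<j$. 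So the first step is simply to translate: $\lambda=1$ is a simple eigenvalue of $\Phi^{(3)}_{(A,B,C)}$ iff it occurs exactly once across all these blocks. Since $A$ is column stochastic, $1\in\operatorname{spec}A$ always, so simplicity of $1$ for the channel forces (i) $1$ simple in $\operatorname{spec}A$, i.e.\ $A$ ergodic by Theorem~\ref{theorem:stoch-ergodic}, and (ii) $\lambda^{\pm}_{ij}(B,C)\neq 1$ for all $i<j$. Conversely these two conditions give simplicity. For mixing, one additionally rules out other peripheral eigenvalues: $A$ has no peripheral eigenvalue besides $1$ (i.e.\ $A$ mixing, Theorem~\ref{theorem:stoch-mixing}) and $|\lambda^{\pm}_{ij}(B,C)|\neq 1$ for all $i<j$.

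There is one subtlety to handle carefully: one must make sure there is no ``accidental'' cancellation or overlap that would, e.g., let $1$ appear as an eigenvalue of some $2\times 2$ block even while $A$ is ergodic, yet still have the total multiplicity come out right --- but that cannot happen, since the blocks contribute to the spectrum additively with multiplicity, so any occurrence of $1$ in a $B$-$C$ block strictly increases the multiplicity beyond the one coming from $A$. I would state this explicitly. Also I should note that $\lambda^{\pm}_{ij}(B,C)$ are real (the block $\begin{bmatrix} B_{ij} & C_{ij} \\ C_{ji} & B_{ji}\end{bmatrix}$ is Hermitian since $B=B^\dagger$, $C=C^\dagger$ by Theorem~\ref{theorem:DOC-cptp}), which is why for the ergodic case the condition reads $\lambda^{\pm}_{ij}\neq 1$ rather than involving the modulus, whereas mixing --- which concerns the whole unit circle --- needs $|\lambda^{\pm}_{ij}|\neq 1$, i.e.\ ruling out both $+1$ and $-1$.

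For the stationary state, I would invoke the second bullet of Theorem~\ref{theorem:DOC-spec}: the $\lambda=1$ eigenmatrix of $\Phi^{(3)}_{(A,B,C)}$, when $1$ comes only from $\operatorname{spec}A$, must be of the form $\operatorname{diag}\ket{v}$ where $\ket{v}$ is the (unique, up to scale) $1$-eigenvector of $A$. Since $A$ is column stochastic and ergodic, this eigenvector is its stationary distribution $\ket{\pi}\in\Delta^d$, and $\operatorname{diag}\ket{\pi}\in\St{d}$ is a genuine state; by Theorem~\ref{theorem:Qergodic2} it is the stationary state. The main (minor) obstacle is bookkeeping the multiplicities correctly --- ensuring the translation between ``simple peripheral spectrum of the channel'' and the two separate conditions on $A$ and on the $\lambda^{\pm}_{ij}$ is airtight, and being careful that the eigenvalue $1$ of $A$ being simple as an eigenvalue of $A$ does indeed transfer to being simple as an eigenvalue of $\Phi^{(3)}_{(A,B,C)}$ (it does, precisely because of the direct-sum structure and the absence of $1$ in the other blocks). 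No genuinely hard step is expected here; the content is entirely in unpacking Theorems~\ref{theorem:DOC-spec}, \ref{theorem:DOC-cptp}, and the quantum ergodic/mixing characterizations.
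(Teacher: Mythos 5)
Your proposal is correct and follows essentially the same route as the paper: read off the block spectrum from Theorem~\ref{theorem:DOC-spec}, note that $\lambda=1\in\operatorname{spec}A$ always holds by column stochasticity, translate simplicity (resp.\ absence of other peripheral eigenvalues) of the channel into the stated conditions on $A$ and on $\lambda^{\pm}_{ij}(B,C)$, and identify the stationary state from the eigenmatrix form $\operatorname{diag}\ket{\pi}$. One small correction to your aside: the $2\times 2$ blocks $\begin{bmatrix} B_{ij} & C_{ij} \\ C_{ji} & B_{ji}\end{bmatrix}$ need not be Hermitian, since $B=B^\dagger$ only gives $B_{ji}=\overline{B_{ij}}$ (the entries $B_{ij}$ need not be real), so $\lambda^{\pm}_{ij}(B,C)$ can be non-real; the reason the ergodicity criterion reads $\lambda^{\pm}_{ij}\neq 1$ while the mixing criterion reads $|\lambda^{\pm}_{ij}|\neq 1$ is simply that ergodicity only concerns the eigenvalue $1$ itself, whereas mixing excludes every unit-modulus eigenvalue, not just $\pm 1$. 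This does not affect the validity of your main argument.
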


\begin{proof}
Let us recall the expression for the spectrum of  $\Phi=\Phi^{(3)}_{(A,B,C)}$ (see Theorem~\ref{theorem:DOC-spec}):
\begin{equation*}
    \operatorname{spec}\Phi^{(3)}_{(A,B,C)} = \operatorname{spec}A  \cup \left( \bigcup_{i<j} \{ \lambda^{\pm}_{ij} (B,C) \}\right).
\end{equation*}
Note that $\lambda=1$ is always an eigenvalue of $A$ since it is column stochastic. Hence, it is clear that $\lambda=1$ is a simple eigenvalue of $\Phi$ if and only if it is a simple eigenvalue of $A$ and $\lambda^{\pm}_{ij} (B,C) \neq 1$ for all $i<j$. Moreover, $\lambda=1$ is the only peripheral eigenvalue of $\Phi$ if and only if it is the only peripheral eigenvalue of $A$ and $|\lambda^{\pm}_{ij} (B,C)| \neq 1$ for all $i<j$. The desired conclusions then immediately follow from the spectral definitions of ergodicity and mixing from Section~\ref{subsec:Qergodic}. The given form of the stationary state of $\Phi$ follows trivially from Theorem~\ref{theorem:DOC-spec}.
\end{proof}

\begin{theorem}\label{theorem:DOC-irred-prim}
A \emph{DOC} channel $\Phi^{(3)}_{(A,B,C)}:\M{d}\to \M{d}$ for $d\geq 3$ is 
\begin{itemize}
\item irreducible if and only if $A$ is irreducible.
\item primitive if and only if $A$ is primitive.
\end{itemize}
If $d=2$, then $\Phi^{(3)}_{(A,B,C)}$ is
\begin{itemize}
    \item irreducible if and only if $A$ is irreducible and $\lambda^{\pm}_{12} (B,C) \neq 1$.
    \item primitive if and only if $A$ is primitive and $|\lambda^{\pm}_{12} (B,C)| \neq 1$.
\end{itemize}
\end{theorem}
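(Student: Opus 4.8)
The plan is to deduce everything from the spectral description in Theorem~\ref{theorem:DOC-spec} together with the characterization of irreducibility/primitivity as ergodicity/mixing plus positive-definiteness of the stationary state. By Theorem~\ref{theorem:DOC-ergodic/mixing}, the stationary state (whenever it exists) is $\operatorname{diag}\ket{\pi}$, which is positive definite if and only if $\ket{\pi}$ has full support, i.e.\ if and only if $A$ is irreducible (resp.\ primitive) rather than merely ergodic (resp.\ mixing). So the only real content is: when $d\geq 3$, the extra conditions $\lambda^{\pm}_{ij}(B,C)\neq 1$ (resp.\ $|\lambda^{\pm}_{ij}(B,C)|\neq 1$) are \emph{automatic} once $A$ is irreducible (resp.\ primitive); whereas for $d=2$ they are not, and must be retained.

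The key step is the automatic-vanishing claim for $d\geq 3$. Here I would use the quantum-channel constraints from Theorem~\ref{theorem:DOC-cptp}: $A$ is column stochastic, $B=B^\dagger$ is positive semidefinite, $C=C^\dagger$, and crucially $A_{ij}A_{ji}\geq |C_{ij}|^2$ for all $i<j$. Combining this with the bound~\eqref{eq:lambdaij<=}, $|\lambda^\pm_{ij}(B,C)|\leq |B_{ij}|+|C_{ij}|$, and with $|C_{ij}|\leq \sqrt{A_{ij}A_{ji}}\leq \tfrac12(A_{ij}+A_{ji})$, I would show that if some $|\lambda^\pm_{ij}(B,C)|=1$ then both $A_{ij}$ and $A_{ji}$, as well as the diagonal entries $B_{ii}=A_{ii}$, $B_{jj}=A_{jj}$, are forced to be comparatively large — in fact one should be able to squeeze out that $A_{ij}=A_{ji}=1$ and $A_{ii}=A_{jj}=0$ (using that columns of $A$ sum to $1$ and entries are nonnegative), so that columns $i$ and $j$ of $A$ are supported only on rows $\{i,j\}$. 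When $d\geq 3$ this means the two-element set $\{i,j\}$ is a closed, non-fully-accessible subset of vertices disjoint from the rest, contradicting irreducibility (strong connectivity of $G_A$, via Theorem~\ref{theorem:perron}); for primitivity one gets a closed aperiodic-failing structure, again contradicting aperiodicity of $G_A$. When $d=2$ there is no ``rest of the graph'' to contradict, so the condition genuinely survives, which is exactly why the $d=2$ case is stated separately — and in that case the statement reduces directly to Theorem~\ref{theorem:DOC-ergodic/mixing} plus the observation that for $d=2$ every ergodic/mixing stochastic $2\times2$ matrix is automatically irreducible/primitive (its unique closed class is all of $V$).

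Concretely, I would organize the proof as: (i) recall from Theorem~\ref{theorem:DOC-ergodic/mixing} the ergodicity/mixing criteria and the form of the stationary state, and note $\operatorname{diag}\ket{\pi}>0 \iff \ket{\pi}$ has full support $\iff A$ irreducible/primitive; (ii) prove the lemma that, under the channel constraints of Theorem~\ref{theorem:DOC-cptp}, $|\lambda^\pm_{ij}(B,C)|=1$ forces $A$ to have columns $i,j$ supported in $\{i,j\}$ with $A_{ij}=A_{ji}=1$; (iii) for $d\geq 3$, observe this makes $G_A$ fail strong connectivity, hence $A$ is not irreducible and a fortiori not primitive — so if $A$ is irreducible the extra conditions hold vacuously, giving the stated equivalences; (iv) for $d=2$, there is nothing further to exclude, so the criteria of Theorem~\ref{theorem:DOC-ergodic/mixing} are already phrased in terms of $A$ and $\lambda^\pm_{12}$, and combining with (i) yields the claim. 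The main obstacle is step (ii): getting the inequalities to pin down $A_{ij}=A_{ji}=1$ tightly rather than just ``large''. I expect this to go through by a careful equality-case analysis — $|\lambda^+_{ij}|=1$ together with $|\lambda^+_{ij}|\le |B_{ij}|+|C_{ij}| \le |B_{ij}| + \tfrac12(A_{ij}+A_{ji})$ and $B\succeq 0 \Rightarrow |B_{ij}|\le\sqrt{B_{ii}B_{jj}}=\sqrt{A_{ii}A_{jj}}$, combined with the column-sum constraints $A_{ii}+A_{ji}+\sum_{k\neq i,j}A_{ki}=1$ and similarly for column $j$ — chasing all equalities simultaneously. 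A cleaner route for the mixing case may be to invoke Theorem~\ref{theorem:DOC-spec} directly: a peripheral eigenvalue of the $2\times2$ block $\begin{bmatrix}B_{ij}&C_{ij}\\ C_{ji}&B_{ji}\end{bmatrix}$ with $B\succeq0$ forces that block to be ``extremal,'' and one then transfers extremality back to $A$ via $A_{ij}A_{ji}\geq|C_{ij}|^2$.
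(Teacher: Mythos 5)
Your overall route is the same as the paper's: combine Theorem~\ref{theorem:DOC-ergodic/mixing} with the observation that the stationary state $\operatorname{diag}\ket{\pi}$ is positive definite iff $\pi$ has full support, and then show that for $d\geq 3$ the conditions $\lambda^{\pm}_{ij}(B,C)\neq 1$ (resp.\ $|\lambda^{\pm}_{ij}(B,C)|\neq 1$) are automatic under irreducibility (resp.\ primitivity) of $A$, via the chain $1=|\lambda^{\pm}_{ij}|\leq |B_{ij}|+|C_{ij}|\leq \sqrt{A_{ii}A_{jj}}+\sqrt{A_{ij}A_{ji}}\leq \tfrac12(A_{ii}+A_{jj}+A_{ij}+A_{ji})\leq 1$ and an equality-case analysis. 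That is exactly the paper's proof. However, the specific target you set for step (ii) is false: $|\lambda^{\pm}_{ij}(B,C)|=1$ does \emph{not} force $A_{ij}=A_{ji}=1$ and $A_{ii}=A_{jj}=0$. The paper's own Example with the triple in Eq.~\eqref{eq:eg2} (all entries equal to $0.5$, where $\lambda^{+}_{12}(B,C)=1$) is a counterexample, so no amount of equality-chasing will pin those values down. What the equality case \emph{does} give — and all you need — comes from saturating the last inequality, which uses only column-stochasticity: $A_{ii}+A_{ji}=A_{jj}+A_{ij}=1$, i.e.\ columns $i$ and $j$ of $A$ are supported on rows $\{i,j\}$, so $\{i,j\}$ is a closed class. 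For $d\geq 3$ this already contradicts strong connectivity of $G_A$ (hence both irreducibility and primitivity of $A$), which closes the argument; your worry that step (ii) is the main obstacle dissolves once you aim at this weaker conclusion.

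One further inaccuracy: your side remark that every ergodic (resp.\ mixing) $2\times 2$ column stochastic matrix is automatically irreducible (resp.\ primitive) is wrong — e.g.\ $A$ with first row $(1,1)$ and second row $(0,0)$ is ergodic and mixing, but its stationary distribution $(1,0)^{\top}$ is not fully supported. Fortunately this claim is not needed: for $d=2$ your step (i) alone gives $\Phi$ irreducible $\iff$ ($A$ ergodic and $\lambda^{\pm}_{12}\neq 1$) and $\pi$ fully supported $\iff$ $A$ irreducible and $\lambda^{\pm}_{12}\neq 1$, and similarly for primitivity, which is the stated $d=2$ case. With these two corrections your proof is complete and coincides with the paper's.
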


\begin{proof}
We deal with the irreducible case and leave the primitive case for the reader. Clearly, if $$\Phi = \Phi^{(3)}_{(A,B,C)}$$ is irreducible, then $A$ is irreducible and $\lambda^{\pm}_{ij} (B,C) \neq 1$ for all $i<j$ (see Theorem~\ref{theorem:DOC-ergodic/mixing}).

Conversely, assume that $A$ is irreducible so that $\lambda=1$ is a simple eigenvalue of $A$ and let $\ket{\pi}\in \Delta^d$ be the unique stationary distribution of $A$ with full support. Then, $\operatorname{diag}\ket{\pi}$ is a positive definite stationary state for $\Phi$. In order to show that $\Phi$ is irreducible, we must show that this is the unique stationary state of $\Phi$, i.e., $\lambda=1$ is a simple eigenvalue of $\Phi$. Assume that this is not the case and there exist $i<j$ such that $\lambda^\pm_{ij}(B,C) = 1$. This implies that 
\begin{align}
    1 = |\lambda^\pm_{ij}(B,C)| \leq |B_{ij}| + |C_{ij}| &\leq \sqrt{A_{ii}A_{jj}} + \sqrt{A_{ij}A_{ji}} \nonumber \\
    &\leq \frac{1}{2} (A_{ii}+A_{jj}+A_{ij}+A_{ji}) \leq 1.
\end{align}
The first, second, and third inequalities above follow respectively from Eq.~\eqref{eq:lambdaij<=}, Theorem~\ref{theorem:DOC-cptp}, and the arithmetic-geometric mean inequality. The final inequality follows from the fact that $A$ is column stochastic (see Theorem~\ref{theorem:DOC-cptp}). Hence, we must have $A_{ii}+A_{ji} = A_{jj}+A_{ij} = 1$. In other words, the subset $\{i,j\}$ forms a closed class of the vertex set of $G_A$ (see Remark~\ref{remark:closed-class}). Now, if $d\geq 3$, there exists $k \notin \{i,j\}$ such that neither $i$ nor $j$ lead to $k$. This violates the strong connectivity of $G_A$ (or equivalently, the irreducibility of $A$ (see Theorem~\ref{theorem:perron})) and thus leads to a contradiction. Hence, if $d\geq 3$, then $A$ being irreducible implies that $\Phi$ is irreducible as well. If $d=2$, however, the above reasoning fails and we must additionally impose the constraint $\lambda^\pm_{12}(B,C)\neq 1$ on top of irreducibility of $A$ to ensure that $\Phi$ is irreducible as well.
\end{proof}

\begin{remark}\label{remark:periphery-DOC-A}
The proof of Theorem~\ref{theorem:DOC-irred-prim} also works to show that the peripheral eigenvalues of an irreducible \emph{DOC} channel $\Phi^{(3)}_{(A,B,C)}\in \T{d}$ (for $d\geq 3$) are precisely the peripheral eigenvalues of $A$. This is not true when $d=2$ (see Example~\ref{eg:3}).
\end{remark}

Let us now illustrate the above results with some examples.

\begin{example}
Let $(A,B,C)\in \MLDOI{3}$ be defined as
\begin{equation}\label{eq:eg1}
     A = \left(\begin{array}{c c c}
   0.5 & 0.5 & 0.5 \\
   0.5 & 0.5 & 0.5  \\
   0 & 0 & 0
\end{array} \right), \,\, B = C = \left(\begin{array}{c c c} 
   0.5 & x & 0 \\
   x & 0.5 & 0 \\
   0 & 0 & 0
\end{array} \right),
\end{equation}
where $x\in \mathbb R$ is such that $|x|\leq 0.5$. Clearly, $G_A$ is not strongly connected but has a unique closed and aperiodic class. Consequently, $A$ is mixing according to Corollary~\ref{corollary:Cmixing}. However, the channel $\Phi=\Phi^{(3)}_{(A,B,C)}$ is not ergodic if $x=0.5$, since then $\lambda^+_{12}(B,C)=1$. If $|x|<0.5$, then $|\lambda^\pm_{12}(B,C)|<1$ and $\Phi$ becomes mixing. If $x=-0.5$, then $\lambda^{\pm}_{12}(B,C)\neq 1$ but $\lambda^-_{12}(B,C)=-1$. Hence, $\Phi$ is no longer mixing but is ergodic. 
\begin{figure}[H]
    \centering
    \includegraphics[scale=1.4]{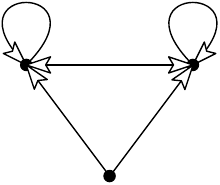}
    \caption{$G_A$ for $A$ given in Eq.~\eqref{eq:eg1}.}
    \label{fig:eg1}
\end{figure}
\end{example}
On the other hand, if $A$ is irreducible (resp. primitive), then $\Phi$ must be irreducible (resp. primitive) according to Theorem~\ref{theorem:DOC-irred-prim}. For instance, let $(A,B,C)\in \MLDOI{3}$ be defined as
\begin{equation}\label{eq:eg11}
     A = \left(\begin{array}{c c c}
   0.2 & 0.4 & 0.4 \\
   0.4 & 0.2 & 0.4  \\
   0.4 & 0.4 & 0.2
\end{array} \right), \,\, B = C = \left(\begin{array}{c c c} 
   0.2 & * & * \\
   * & 0.2 & * \\
   * & * & 0.2
\end{array} \right).
\end{equation}
Then, $A$ is clearly primitive (because $G_A$ is aperiodic, see Theorem~\ref{theorem:perron}) and hence $\Phi$ must be primitive too regardless of what $B$ and $C$ matrices we choose. 
\begin{figure}[H]
    \centering
    \includegraphics[scale=1.4]{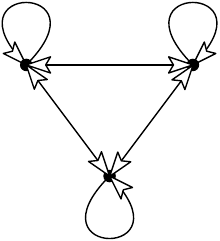}
    \caption{$G_A$ for $A$ given in Eq.~\eqref{eq:eg11}}
    \label{fig:eg11}
\end{figure}

\begin{example}
Let $(A,B,C)\in \MLDOI{2}$ be defined as
\begin{equation}\label{eq:eg2}
     A = B = C = \left(\begin{array}{c c}
   0.5 & 0.5 \\
   0.5 & 0.5
\end{array} \right).
\end{equation}
In this case, even though $G_A$ is strongly connected (and aperiodic) meaning that $A$ is primitive, $\Phi=\Phi^{(3)}_{(A,B,C)}$ is not ergodic since $\lambda^+_{12}(B,C)=1$. As before, $\Phi$ can be made primitive by reducing the magnitude of the off-diagonal entries of $B,C$, so that $|\lambda^\pm_{12}(B,C)|<1$.
\begin{figure}[H]
    \centering
    \includegraphics[scale=1.4]{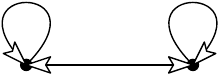}
    \caption{$G_A$ for $A$ given in Eq.~\eqref{eq:eg2}.}
\end{figure}
When $d\geq 3$, such an examples do not exist since Theorem~\ref{theorem:DOC-irred-prim} guarantees that $\Phi$ is irreducible (resp. primitive) if and only if $A$ is irreducible (resp. primitive).
\end{example}

\begin{example}\label{eg:3}
Let $(A,B,C)\in \MLDOI{2}$ be defined as
\begin{equation}
     A = \left(\begin{array}{c c}
   0.5 & 0.5 \\
   0.5 & 0.5
\end{array} \right), \,\, B = \left(\begin{array}{c c}
   \phantom{-}0.5 & -0.5 \\
   -0.5 & \phantom{-}0.5
\end{array} \right) = C.
\end{equation}
It is easy to check that $\Phi=\Phi^{(3)}_{(A,B,C)}$ is irreducible since $A$ is irreducible and $\lambda^{\pm}_{12}(B,C)\neq 1$. Note, however, that the peripheral spectrum of $\Phi$ is not the same as that of $A$. Since $A$ is primitive, $\lambda=1$ is its only peripheral eigenvalue. In contrast, $\Phi$ has an additional peripheral eigenvalue $\lambda^-_{12}(B,C) = -1$. This phenomenon only occurs in $d=2$ (see Remark~\ref{remark:periphery-DOC-A}).
\end{example}

\section{Lattice models}\label{sec:lattice-models}
In what follows, we will denote the unitary groups in $\M{d}$ and $\M{d}\otimes \M{d}$ by $\U{d}$ and $\UU{d}$, respectively. We consider one dimensional spin chains consisting of $2L$ $(L\in \mathbb{N})$ qudit lattice sites, each comprising of a $d$-dimensional Hilbert space $\C{d}$. We label the sites using integers from the set $Z_L := \{-L+1,-L+2, \ldots ,L-1, L \}$. Time evolution of the system is discrete and is governed by a staggered or brickwork unitary circuit of the form 
\begin{equation}\label{eq:circuitmodel}
    \mathbb{U}(t) = \underbrace{\mathbb{U}_{\pm} \ldots \ldots \mathbb{U}_+\mathbb{U}_-\mathbb{U}_+ \mathbb{U}_-}_t, \qquad t\in\mathbb{N},
\end{equation}
where 
\begin{align}
    \mathbb{U}_- &= U_{(-L+2,-L+3)}U_{(-L+4,-L+5)}\ldots U_{(L-2,L-1)} U_{(L,-L+1)}, \\
    \mathbb{U}_+ &= U_{(-L+1,-L+2)}U_{(-L+3,-L+4)}\ldots U_{(L-3,L-2)} U_{(L-1,L)},
\end{align}
and for $x,y\in Z_L$, the unitary operator $U_{(x,y)}$ acts by applying $U\in \UU{d}$ at the $x,y$ lattice sites and identity everywhere else. Adjacent sites are coupled together via a bipartite unitary gate $U$ and periodic boundary conditions are imposed by coupling the $L^{th}$ and $(-L+1)^{th}$ sites together. Figure~\ref{fig:U(4)} shows a slice of the time evolution unitary operator $\mathbb{U}(t)$ for $t=4$ time steps and $2L \gg t$. It is perhaps worthwhile to emphasize that the entire global unitary evolution operator is built from a single local building block $U\in \UU{d}$. 

\begin{figure}[httb!]
    \centering
    \includegraphics[scale=1.2]{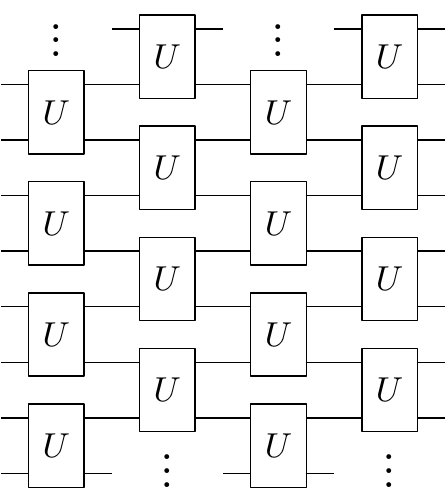}
    \caption{Time evolution operator $\mathcal{U}(t)$ for $t=4$ time steps and the number of lattice sites $2L \gg t$. Above, the time coordinate $t$ is depicted horizontally, and counts the number of layers in the circuit. The spatial dimension $x$ is depicted vertically, and corresponds to the width of the circuit.}
    \label{fig:U(4)}
\end{figure}
For a local operator $A\in \M{d}$ and lattice site $x\in Z_L$, we define
\begin{equation}
    A_x := \iden_d \otimes \ldots \otimes \iden_d \otimes \vertarrowbox[3ex]{A}{x} \otimes \iden_d \otimes \ldots \otimes \iden_d \in \M{d}^{\otimes 2L}.
\end{equation}
Given a lattice site $x\in Z_L$, the local two-point correlation sequence (with respect to the quantum state $\rho = d^{-2L}\iden$) between $A,B\in \M{d}$ is defined as
\begin{equation}
    C_{A,B} (x,t) := \operatorname{Tr}\left( \mathbb{U}^\dagger (t) A_0 \, \mathbb{U}(t) B_x \right) - \operatorname{Tr}(A_0)\operatorname{Tr}\left( \frac{B_x}{d^{2L}} \right), \quad t\in\mathbb{N}.
\end{equation}

The readers should compare this definition with the ones given in Definition~\ref{def:corr_measure} and Remark~\ref{remark:corr_Q}. It turns out that there is a natural light cone structure built into our model, which gets manifested in the following manner. Just by exploiting unitarity of the building block $U\in \UU{d}$, it can be shown that the correlations $C_{A,B}(x,t)$ vanish for all $A,B\in \M{d}$ whenever $|x|>t$. Moreover, the correlations present on the edge of the light cone can be computed explicitly \cite{Bruno2019dual}:
\begin{align}
    \forall t\in \mathbb{N}: \qquad C_{A,B}(\pm t,t) &= d^{2L-1}\operatorname{Tr}\left(\Lambda^t_\pm(U) (A)B \right) - \operatorname{Tr}(A_0)\operatorname{Tr}\left( \frac{B_x}{d^{2L}} \right) \nonumber \\
    &= d^{2L-1} \left[ \operatorname{Tr}\left(\Lambda^t_\pm(U) (A)B \right) - \operatorname{Tr}(A)\operatorname{Tr} \left( \frac{B}{d} \right)  \right]
\end{align}
where $\Lambda_\pm (U) :\M{d}\to \M{d}$ are unital quantum channels defined in Fig~\ref{fig:Lambda+-}. Note that the expression in the square brackets is nothing but the correlation sequence of $A,B\in \M{d}$ with respect to the fixed state $\iden_d/d$ and the $\Lambda_\pm(U)$ channels as defined in Remark~\ref{remark:corr_Q}. In other words, the long term behaviour of the correlations on the edge of the light cone are solely determined by the ergodic properties of the $\Lambda_\pm(U)$ channels. But what about the correlations inside the light cone? When $|x|<t$, it can be shown that the number of instances of the unitary $U$ in the expression $C_{A,B}(x,t)$ grows exponentially with the distance $t-x$ from the edge of the light cone, thus making explicit computations possible only for short times.

\begin{figure}[H]
    \centering
    \includegraphics[scale=1.3]{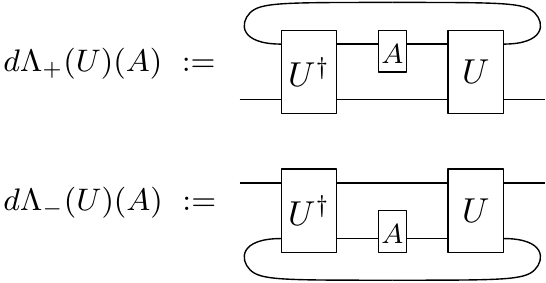}
    \caption{The action of $\Lambda_\pm$ channels on $\M{d}$.}
    \label{fig:Lambda+-}
\end{figure}

We now recall the following important notion of \emph{dual} unitarity \cite{Bruno2019dual,gopalakrishnan2019unitary}. 

\begin{definition}
    A bipartite unitary matrix $U \in \UU{d}$ is called \emph{dual} if its realignment, $U^R$, is also a unitary matrix.
\end{definition}

Going back to our setting, if we additionally impose the constraint of dual unitarity on $U$, then the spatial and temporal directions switch roles in Figure~\ref{fig:U(4)}, so that our global evolution operator becomes unitary also in the spatial direction. Consequently, the correlations $C_{A,B}(x,t)$ vanish for $|x|<t$ as well \cite{Bruno2019dual}. Hence, for such `dual unitary' circuit models, the only non-trivial correlations are present on the edge of the light cone, and these are determined by the $\Lambda_\pm(U)$ channels defined above. In this paper, we will only focus on the $\Lambda_+$ channels. The analysis can easily be replicated for $\Lambda_-$ channels as well.

\begin{figure}[H]
    \centering
    \includegraphics[scale=1.4]{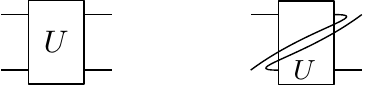}
    \caption{A unitary operator $U$ (left) is called \emph{dual} if its realignment $U^R$ (right) is also unitary.}
    \label{fig:dualU}
\end{figure}

Motivated by the above discussion, we can introduce the following definition. Recall that for unital channels, ergodicity and mixing properties are equivalent to irreducibility and primitivity, respectively (see Remark~\ref{remark:ergodic-unital}).

\begin{definition}\label{def:circuit-ergodic}
Let $\{\mathbb{U}(t) \}_{t\in \mathbb{N}}$ be a unitary circuit acting on a $1$-\emph{D} lattice consisting of $2L$ qudits and constructed from a unitary $U\in \UU{d}$ via Eq.~\eqref{eq:circuitmodel}. We say that $\{\mathbb{U}(t) \}_{t\in \mathbb{N}}$ is 
\begin{itemize}
    \item \emph{non-interacting} if $\Lambda_+(U):\M{d}\to \M{d}$ is the identity channel,
    \item \emph{ergodic} if $\Lambda_+(U):\M{d}\to \M{d}$ is irreducible,
    \item \emph{mixing} if $\Lambda_+(U):\M{d}\to \M{d}$ is primitive.
    \item \emph{Bernoulli} if $\Lambda_+(U):\M{d}\to \M{d}$ is the completely depolarizing channel:
    \begin{equation}
        \forall A\in \M{d}: \quad \Lambda_+(U)(A) = \operatorname{Tr}(A)\frac{\mathbb{1}}{d}.
    \end{equation}
\end{itemize}
\end{definition}

\begin{remark}\label{remark:bernoulli}
Bernoulli circuits as defined above can be 
considered as extreme versions of mixing circuits, in the sense that the correlations $C_{A,B}(t,t)$ in such circuits vanish instantly, i.e., for all $A,B\in \M{d}$, $C_{A,B}(t,t)=0$ for $t\geq 1$. It was shown in \cite{arul2021dual} that a circuit $\{\mathbb{U}(t) \}_{t\in \mathbb{N}}$ is Bernoulli if and only if the underlying bipartite gate $U\in \UU{d}$ is \emph{perfect}, i.e., both the realigned operator $U^R$ and the partially transposed operator $U^\Gamma := (\operatorname{id} \otimes \operatorname{transp})(U)$ are unitary.
\end{remark}

\begin{figure}[H]
    \centering
    \includegraphics[scale=1.3]{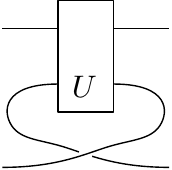}
    \caption{Visual depiction of a partially transposed bipartite matrix $U^\Gamma.$}
    \label{fig:my_label}
\end{figure}

\begin{remark}\label{remark:const-modes} 
In addition to Definition~\ref{def:circuit-ergodic}, we will also be interested in computing the number of peripheral eigenvalues of $\Lambda_+(U)$. The (traceless) eigenoperators $A\in \M{d}$ associated with the non-unit peripheral eigenvalues are the `non-decaying' (and non-constant) modes of the circuit, i.e., for all $B\in \M{d}$, $C_{A,B}(t,t)$ may fluctuate but $|C_{A,B}(t,t)|$ stays constant for all $t\in \mathbb{N}$. On the other hand, eigenoperators $A\in \M{d}$ associated with the unit eigenvalue are the `constant' modes of the circuit, i.e., for all $B\in \M{d}$, $C_{A,B}(t,t)$ stays constant for all $t\in \mathbb{N}$. 
\end{remark}

We have seen that explicit computation of two-point correlation functions between local operators is possible in certain brickwork dual unitary circuit models. However, a complete description of the family of dual unitary matrices in $\UU{d}$ is currently unavailable, although several different constructions of such operators have been proposed \cite{arul2021dual, Claeys2021dual,borsi2022remarks}\footnote{For the qubit case $d=2$, a full parametrization of the class of dual unitary matrices is given in \cite{Bruno2019dual}.}. In this regard, two authors of this paper came up with an interesting family of dual unitary matrices having certain local diagonal symmetries \cite{singh2022diagonal} as described in Table~\ref{tab:DOC}, which we again list below in Table~\ref{tab:LDOI}.
\begin{table}[H]
    \centering
    {\renewcommand{\arraystretch}{1.2}
\begin{tabular}{|r|l|c|} 
\hline
\emph{Acronym}    & \emph{Symmetry}                                   & \emph{Condition}                                        \\ 
\hline\hline
LDUI & local diagonal unitary invariant           & $(U \otimes U) X (U^\dagger \otimes U^\dagger) = X$          \\ 
\hline
CLDUI & conjugate local diagonal unitary invariant & $(U \otimes \bar U) X (U^\dagger \otimes U^\top) = X$  \\ 
\hline
LDOI  & local diagonal orthogonal invariant        & $(O \otimes O) X (O \otimes O) = X$    \\
\hline
\end{tabular}
}
    \caption{\centering Local diagonal symmetries of a bipartite matrix $X\in \M{d}\otimes \M{d}$. The conditions above hold for all $U\in \mathcal{DU}_d$ and $O\in \mathcal{DO}_d$.}
    \label{tab:LDOI}
\end{table}
Recall that the explicit forms of LDUI, CLDUI, and LDOI matrices in terms of the associated matrix triples $(A,B,C)\in \MLDOI{d}$ are given in Eqs.~\eqref{eq:LDUI}, \eqref{eq:CLDUI}, and \eqref{eq:LDOI}, respectively. With the appropriate definitions in hand, we can state a few main results from \cite{singh2022diagonal}.
\begin{proposition}\label{prop:unitary-condition}
An \emph{LDOI} matrix $X^{(3)}_{(A,B,C)}$ is unitary if and only if
\begin{itemize}
    \item $B$ is unitary,
    \item $\forall i < j$, there exists a phase $\omega_{ij} \in \mathbb T$ such that $A_{ji} = \omega_{ij} \overline{A_{ij}} \text{ and } C_{ji} = - \omega_{ij} \overline{C_{ij}},$
    \item $\forall i < j, \,\, |A_{ij}|^2 + |C_{ij}|^2 = 1$.
\end{itemize}
\end{proposition}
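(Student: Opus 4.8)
The plan is to reduce unitarity of $X^{(3)}_{(A,B,C)}$ to unitarity of its constituent blocks and then dispatch a single $2\times 2$ block by elementary linear algebra.

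\textbf{Step 1: invoke the block decomposition.} First I would recall from the proof of Theorem~\ref{theorem:DOC-spec} the orthogonal direct-sum decomposition
\begin{equation*}
    X^{(3)}_{(A,B,C)} = B \oplus \left( \bigoplus_{i<j} M_{ij} \right), \qquad M_{ij} := \begin{bmatrix} A_{ij} & C_{ij} \\ C_{ji} & A_{ji} \end{bmatrix},
\end{equation*}
which arises because $X^{(3)}_{(A,B,C)}$ leaves invariant the subspace $\operatorname{span}\{\ket{ii}: i\in[d]\}$ (acting there as $B$) and each two-dimensional subspace $\operatorname{span}\{\ket{ij},\ket{ji}\}$ with $i<j$ (acting there as $M_{ij}$), and these subspaces are mutually orthogonal and span $\mathbb{C}^d\otimes\mathbb{C}^d$. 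Consequently $X^{(3)}_{(A,B,C)}$ is unitary if and only if $B$ is unitary and every $M_{ij}$ is unitary. This already yields the first bullet and reduces everything to characterizing unitarity of $M_{ij}$.

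\textbf{Step 2: analyze one $2\times 2$ block.} A complex $2\times 2$ matrix is unitary exactly when its two columns are an orthonormal basis of $\mathbb{C}^2$. The first column of $M_{ij}$ is $(A_{ij},C_{ji})^\top$; requiring it to be a unit vector gives $|A_{ij}|^2+|C_{ji}|^2=1$, and its orthogonal complement in $\mathbb{C}^2$ is the line spanned by $(-\overline{C_{ji}},\overline{A_{ij}})^\top$, so the second column $(C_{ij},A_{ji})^\top$ must equal $\omega_{ij}(-\overline{C_{ji}},\overline{A_{ij}})^\top$ for a unique phase $\omega_{ij}\in\mathbb{T}$. Reading off coordinates: $A_{ji}=\omega_{ij}\overline{A_{ij}}$ and $C_{ij}=-\omega_{ij}\overline{C_{ji}}$; conjugating the latter and using $|\omega_{ij}|=1$ rewrites it as $C_{ji}=-\omega_{ij}\overline{C_{ij}}$, whence also $|C_{ji}|=|C_{ij}|$, so the unit-norm condition becomes $|A_{ij}|^2+|C_{ij}|^2=1$. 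This gives the second and third bullets. For the converse I would just verify directly that, assuming the three listed conditions with phases $\omega_{ij}$, both columns of each $M_{ij}$ are unit vectors and their inner product $\overline{A_{ij}}C_{ij}+\overline{C_{ji}}A_{ji}$ collapses to $\overline{A_{ij}}C_{ij}-|\omega_{ij}|^2 C_{ij}\overline{A_{ij}}=0$, so $M_{ij}$ is unitary; together with unitarity of $B$ this gives unitarity of $X^{(3)}_{(A,B,C)}$ via Step 1.

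\textbf{Step 3: assemble.} Collecting the per-block phases $\{\omega_{ij}\}_{i<j}$ and combining the two implications finishes the argument. There is no genuine obstacle here: the only point demanding a little care is the bookkeeping in Step 2 — extracting the exact phase relation (including the sign) linking $C_{ij}$ and $C_{ji}$ from orthogonality of the two columns — but once the block structure of Step 1 is in hand the whole proof is routine.
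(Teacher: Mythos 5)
Your proof is correct and follows exactly the route the paper intends: the paper itself only cites \cite{singh2022diagonal} for this statement, but the orthogonal block decomposition $X^{(3)}_{(A,B,C)} = B \oplus \bigoplus_{i<j}\begin{bmatrix} A_{ij} & C_{ij} \\ C_{ji} & A_{ji}\end{bmatrix}$ you use is precisely the one recorded in the proof of Theorem~\ref{theorem:DOC-spec}, and your reduction to unitarity of $B$ plus each $2\times 2$ block, with the phase $\omega_{ij}$ extracted from column orthonormality, is the standard argument. No gaps; the bookkeeping in Step 2 (including the sign and the identity $|C_{ji}|=|C_{ij}|$) is handled correctly in both directions.
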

\begin{proof}
See \cite[Proposition 3.1]{singh2022diagonal}.
\end{proof}

\begin{proposition}\label{prop:dual}
An \emph{LDOI} matrix $X^{(3)}_{(A,B,C)}$ is dual unitary if and only if
\begin{itemize}
    \item $A$ and $B$ are unitary,
    \item $\forall i<j$, $|A_{ij}|^2 = |B_{ij}|^2 = 1-|C_{ij}|^2$,
    \item $\forall i<j$, there exist complex phases $\omega_{ij}\in \mathbb T$ such that
    $$A_{ji} = \omega_{ij} \overline{A_{ij}}, \qquad B_{ji} = \omega_{ij} \overline{B_{ij}}, \qquad C_{ji} = -\omega_{ij} \overline{C_{ij}}.$$
\end{itemize}
\end{proposition}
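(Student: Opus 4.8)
The plan is to derive the dual unitarity condition by applying the LDOI structure to both $X = X^{(3)}_{(A,B,C)}$ and its realignment $X^R$, and demanding that each be unitary via Proposition~\ref{prop:unitary-condition}. The key structural fact, already recorded in the proof of Theorem~\ref{theorem:DOC-spec}, is that realignment acts on the parameter triple by swapping the roles of $A$ and $B$: concretely $X^{(3)}_{(A,B,C)}{}^R = X^{(3)}_{(B,A,C)}$ (equivalently, $M(\Phi^{(3)}_{(A,B,C)}) = X^{(3)}_{(B,A,C)}$). Since $U \in \UU{d}$ is LDOI, its realignment $U^R$ is again LDOI — this follows because realignment intertwines the conjugation $X \mapsto (O\otimes O)X(O\otimes O)$ with itself, a short diagrammatic check. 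Hence $U$ is dual unitary precisely when $X^{(3)}_{(A,B,C)}$ and $X^{(3)}_{(B,A,C)}$ are both unitary LDOI matrices.

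First I would invoke Proposition~\ref{prop:unitary-condition} for $X^{(3)}_{(A,B,C)}$: this yields that $B$ is unitary, that for all $i<j$ there are phases $\omega_{ij}$ with $A_{ji} = \omega_{ij}\overline{A_{ij}}$ and $C_{ji} = -\omega_{ij}\overline{C_{ij}}$, and that $|A_{ij}|^2 + |C_{ij}|^2 = 1$. Then I would apply the same proposition to $X^{(3)}_{(B,A,C)}$ (the realignment), which by the $A \leftrightarrow B$ swap gives: $A$ is unitary, for all $i<j$ there are phases $\omega'_{ij}$ with $B_{ji} = \omega'_{ij}\overline{B_{ij}}$ and $C_{ji} = -\omega'_{ij}\overline{C_{ij}}$, and $|B_{ij}|^2 + |C_{ij}|^2 = 1$. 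Combining the two sets of constraints gives immediately: $A$ and $B$ both unitary; $|A_{ij}|^2 = |B_{ij}|^2 = 1 - |C_{ij}|^2$ for all $i<j$; and the phase relations $A_{ji} = \omega_{ij}\overline{A_{ij}}$, $B_{ji} = \omega'_{ij}\overline{B_{ij}}$, $C_{ji} = -\omega_{ij}\overline{C_{ij}} = -\omega'_{ij}\overline{C_{ij}}$.

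The one point requiring care is reconciling the two phases $\omega_{ij}$ and $\omega'_{ij}$ into a single phase, as claimed in the statement. When $C_{ij} \neq 0$, the relation $-\omega_{ij}\overline{C_{ij}} = -\omega'_{ij}\overline{C_{ij}}$ forces $\omega_{ij} = \omega'_{ij}$, so a single phase works for all three relations. When $C_{ij} = 0$, we have $|A_{ij}| = |B_{ij}| = 1 \neq 0$ (both nonzero), and one must argue that the phases appearing in the $A$- and $B$-relations can be taken equal; but here I would simply observe that the statement allows any phase in each relation when the corresponding off-diagonal entry vanishes, and when $C_{ij}=0$ we may just set the common $\omega_{ij}$ to be the one coming from $A$ — the $C$-relation $C_{ji} = -\omega_{ij}\overline{C_{ij}} = 0$ holds trivially, and consistency of $B$ is a separate matter handled by reindexing. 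Actually the cleanest route, which I would adopt, is: since $A$ is unitary with $|A_{ij}|^2 + |A_{ji}|^2 \le \sum_k |A_{ik}|^2 = 1$ and similarly for columns, the LDOI-unitary analysis already pins down $\omega_{ij}$ from $A$ alone whenever $A_{ij}\neq 0$ (always true here since $|A_{ij}|=|B_{ij}| > 0$ unless $C_{ij}$ has modulus one — in which case $A_{ij}=B_{ij}=0$ and any phase is fine). So the genuinely delicate sub-case is $|C_{ij}| = 1$, where all of $A_{ij}, A_{ji}, B_{ij}, B_{ji}$ vanish and the three phase relations become vacuous, hence a common $\omega_{ij}$ (chosen arbitrarily) trivially satisfies them.

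The main obstacle I anticipate is not any of these case distinctions individually but rather verifying cleanly that realignment preserves the LDOI property and acts by $A \leftrightarrow B$ with $C$ fixed — i.e. that $X^{(3)}_{(A,B,C)}{}^R = X^{(3)}_{(B,A,C)}$ — at the level needed here; this is exactly the computation underlying Theorem~\ref{theorem:DOC-spec}, so I would cite that and the reference \cite{Singh2021diagonalunitary} rather than redo it. Everything else is a direct bookkeeping of the two instances of Proposition~\ref{prop:unitary-condition}. The converse direction (that the three listed conditions imply dual unitarity) runs the same argument in reverse: the conditions are visibly symmetric under $A \leftrightarrow B$, so they make both $X^{(3)}_{(A,B,C)}$ and its realignment satisfy the hypotheses of Proposition~\ref{prop:unitary-condition}, hence both are unitary, which is the definition of dual unitarity.
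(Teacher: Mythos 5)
The paper gives no in-text proof of this proposition (it only cites \cite{singh2022diagonal}), so there is nothing internal to compare against; your reconstruction --- use $\bigl(X^{(3)}_{(A,B,C)}\bigr)^R = X^{(3)}_{(B,A,C)}$, as recorded in the proof of Theorem~\ref{theorem:DOC-spec}, and apply Proposition~\ref{prop:unitary-condition} to both matrices --- is indeed the natural strategy, and your converse direction is complete, since the three bullets are symmetric under $A\leftrightarrow B$. The genuine problem is in the forward direction, at exactly the step you flag and then wave away. The delicate sub-case is $C_{ij}=0$, not $|C_{ij}|=1$: when $|C_{ij}|=1$ all of $A_{ij},A_{ji},B_{ij},B_{ji}$ vanish and a common phase exists trivially ($\omega_{ij}=-C_{ji}/\overline{C_{ij}}$ works for all three relations), whereas when $C_{ij}=0$ one has $|A_{ij}|=|B_{ij}|=1$ and the two applications of Proposition~\ref{prop:unitary-condition} pin two a priori \emph{different} phases, $\omega_{ij}=A_{ji}/\overline{A_{ij}}$ and $\omega'_{ij}=B_{ji}/\overline{B_{ij}}$. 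Declaring the common phase to be ``the one coming from $A$'' leaves the $B$-relation unverified, and ``handled by reindexing'' is not an argument.

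Moreover, this gap cannot be closed by bookkeeping alone, because with the paper's conventions the two phases can genuinely differ for a dual unitary LDOI matrix. Take $d=2$, $C=0$, $A=\left(\begin{smallmatrix}0&1\\1&0\end{smallmatrix}\right)$, $B=\left(\begin{smallmatrix}0&1\\-1&0\end{smallmatrix}\right)$, a valid triple since all diagonals vanish. In the block decomposition used in Theorem~\ref{theorem:DOC-spec}, $X^{(3)}_{(A,B,C)}=B\oplus\mathbb{1}_2$ and $X^{(3)}_{(B,A,C)}=A\oplus\operatorname{diag}(1,-1)$ are both unitary, so $X^{(3)}_{(A,B,C)}$ is dual unitary, yet $A_{21}/\overline{A_{12}}=1\neq -1=B_{21}/\overline{B_{12}}$, so no single $\omega_{12}$ satisfies all three relations. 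Hence in the degenerate case $C_{ij}=0$ the single-phase form of the third bullet is \emph{not} a consequence of the two unitarity conditions, and your proof of the ``only if'' direction cannot go through as written: you must either treat this degeneracy explicitly (e.g.\ impose the phase relations only through the nonvanishing entries, or check how \cite{singh2022diagonal} states and handles this case, possibly under different conventions) or restrict to $C_{ij}\neq 0$, where your argument is correct because the $C$-relation forces $\omega_{ij}=\omega'_{ij}$. Note that the remainder of the paper (Examples~\ref{ex:dual1} and~\ref{ex:dual2}) only uses the ``if'' direction, which your argument does establish.
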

\begin{proof}
See \cite[Proposition 4.2]{singh2022diagonal}.
\end{proof}

Even though Proposition~\ref{prop:dual} fully characterizes the class of dual unitary LDOI matrices, it is difficult to explicitly find matrix triples $(A,B,C)\in \MLDOI{d}$ that satisfy the constraints of the proposition. Nevertheless, we can work with the following two families of triples: 

\begin{example}
\label{ex:dual1} 
Let $C\in\M{d}$ be an arbitrary phase matrix (i.e., $C_{ij}\in\mathbb{T} \,\,\forall i,j$) and $A=B=\operatorname{diag}C$. It can be shown that this family of triples corresponds precisely to the class of LDUI dual unitary matrices \cite[Proposition 4.2]{singh2022diagonal}.
\end{example}

\begin{example} \label{ex:dual2}
For any orthogonal projection $P\in \M{d}$, let $A=B=2P-\iden_d$. Construct $C\in \M{d}$ by fixing $\operatorname{diag}C=\operatorname{diag}A =\operatorname{diag}B$ and choose arbitrary $C_{ij}\in \mathbb C$ satisfying $|C_{ij}|^2 = 1 - |A_{ij}|^2 = 1- |B_{ij}|^2$ for all $i<j$. The remaining entries must then be $C_{ji}=-\overbar{C_{ij}}$ for all $i<j$ according to Proposition~\ref{prop:dual}.
\end{example}

In the remaining part of this section, we will use the above two families of LDOI dual unitary matrices to construct brickwork unitary circuits $\{\mathbb{U}(t) \}_{t\in \mathbb{N}}$ via Eq.~\eqref{eq:circuitmodel} and analyse their ergodic behaviour in accordance with Definition~\ref{def:circuit-ergodic} and Remark~\ref{remark:const-modes}. Let us first note that since no perfect LDOI unitary matrices exist \cite[Proposition 4.4]{singh2022diagonal}, any LDOI dual unitary brickwork circuit $\{\mathbb{U}(t) \}_{t\in \mathbb{N}}$ \emph{cannot} be Bernoulli, see Remark~\ref{remark:bernoulli}. Thus, our aim is to show that LDOI dual unitary brickwork circuits can exhibit all kinds of ergodic behaviours except Bernoulli. Similar analysis has been performed for the qubit case in \cite{Bruno2019dual}, and for higher dimensions in \cite{Claeys2021dual}. In fact, the construction of dual unitary matrices in \cite{Claeys2021dual} yields the class of LDUI dual unitaries from Example~\ref{ex:dual1}. This family has also been independently studied in \cite{arul2021dual}.

To begin with, let us note a crucial lemma, which shows that for an LDOI dual unitary gate $U$, the $\Lambda_+(U)$ channel inherits the diagonal orthogonal symmetry of $U$ and becomes DOC. 

\begin{lemma}\label{lemma:Lambda+LDOI}
Given matrices $(A,B,C)\in \MLDOI{d}$, define a new triple $(\mathcal{A},\mathcal{B},\mathcal{C})\in \MLDOI{d}$ by
\begin{align*}
\mathcal{A} &= d^{-1} \left[ A^\top\hspace{-0.12cm}\odot A^\dagger + B^\top\hspace{-0.12cm} \odot B^\dagger + \operatorname{diag}(\overbar C C^\top - 2C\odot \overbar C)\right], 
\\ \mathcal{B} &= d^{-1} \overbar C C^\top ,
\\ \mathcal{C} &= d^{-1} \left[A\,\odot \,B^\dagger + A^\dagger\odot B + \operatorname{diag}(\overbar C C^\top - 2C\odot \overbar C) \right].
\end{align*}
Then, for an \emph{LDOI} unitary matrix $X^{(3)}_{(A,B,C)}$, the following relation holds:
\begin{equation}
    \Lambda^+_{(A,B,C)} := \Lambda_+ \left( X^{(3)}_{(A,B,C)} \right)  = \Phi^{(3)}_{(\mathcal{A},\mathcal{B},\mathcal{C})} \in \DOC_d.
\end{equation}
\end{lemma}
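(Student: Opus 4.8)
The plan is to compute the channel $\Lambda_+\!\left(X^{(3)}_{(A,B,C)}\right)$ directly from its graphical/tensor definition (Figure~\ref{fig:Lambda+-}) and match the result against the explicit formula for a general DOC channel $\Phi^{(3)}_{(\mathcal A,\mathcal B,\mathcal C)}$ from Definition~\ref{def:DOC}. Recall that $\Lambda_+(U)$ is obtained by sandwiching an input matrix between $U$ and $U^\dagger$ with appropriate legs contracted and normalized by $d^{-1}$; since $X^{(3)}_{(A,B,C)}$ is an LDOI unitary, Proposition~\ref{prop:unitary-condition} gives us all the algebraic constraints on $(A,B,C)$ (namely $B$ unitary, $|A_{ij}|^2+|C_{ij}|^2=1$, and the phase relations $A_{ji}=\omega_{ij}\overline{A_{ij}}$, $C_{ji}=-\omega_{ij}\overline{C_{ij}}$ for $i<j$) which we will need to simplify the contractions.

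First I would write $U=X^{(3)}_{(A,B,C)}=\sum_{ij}A_{ij}\ketbra{ij}{ij}+\sum_{i\neq j}B_{ij}\ketbra{ii}{jj}+\sum_{i\neq j}C_{ij}\ketbra{ij}{ji}$ and expand $\Lambda_+(U)(X)$ as a sum over the nine products of terms (one from $U$, one from $U^\dagger$), keeping track of which wire of $X$ each term touches. Grouping the resulting terms by their action on $X$ — diagonal-extracting (contributing to $\mathcal A$ via $\operatorname{diag}(\mathcal A\ket{\operatorname{diag}X})$), Schur-multiplying $X$ (contributing to $\widetilde{\mathcal B}\odot X$), and Schur-multiplying $X^\top$ (contributing to $\widetilde{\mathcal C}\odot X^\top$) — should reproduce exactly the claimed entrywise formulas. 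The $A$–$A$ and $B$–$B$ cross terms will produce the $A^\top\odot A^\dagger$ and $B^\top\odot B^\dagger$ pieces of $\mathcal A$; the $B$–$B$ term that connects $\ketbra{ii}{jj}$ with its conjugate yields the off-diagonal $\mathcal B=d^{-1}\overbar C C^\top$ once the unitarity identities for $B$ (or rather the interplay with $C$) are invoked — here one must be careful, since naively $B$–$B$ would give $\overbar B B^\top$, so the identity $|A_{ij}|^2=|B_{ij}|^2=1-|C_{ij}|^2$ and the phase relations must be used to rewrite things in terms of $C$; the $A$–$B$ and $B$–$A$ cross terms give the $A\odot B^\dagger+A^\dagger\odot B$ part of $\mathcal C$; and the $C$–$C$ terms, after using $C_{ji}=-\omega_{ij}\overline{C_{ij}}$, produce the diagonal correction terms $\operatorname{diag}(\overbar C C^\top-2C\odot\overbar C)$ appearing in both $\mathcal A$ and $\mathcal C$ (the factor $2$ and the two distinct contributions $\overbar C C^\top$ versus $C\odot\overbar C$ arising from whether the two $C$-legs pair "straight" or "crossed").

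After assembling these pieces I would verify the consistency conditions: that $(\mathcal A,\mathcal B,\mathcal C)\in\MLDOI d$, i.e. $\operatorname{diag}\mathcal A=\operatorname{diag}\mathcal B=\operatorname{diag}\mathcal C$ (this should follow because all three diagonals collapse to $d^{-1}\operatorname{diag}(\overbar C C^\top)$ after the corrections, using $\operatorname{diag}(A^\top\odot A^\dagger)=\operatorname{diag}(B^\top\odot B^\dagger)$ which holds since $\operatorname{diag}A=\operatorname{diag}B$), and that the resulting map is genuinely DOC — but the latter is automatic from Theorem~\ref{theorem:DOC-ABC} since $\Lambda_+$ of an LDOI (hence $\mathcal{DO}_d$-invariant) unitary is manifestly $\mathcal{DO}_d$-covariant, so really only the entrywise bookkeeping needs checking.

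\textbf{Main obstacle.} The delicate part will be the careful diagrammatic contraction: correctly tracking which of the two tensor legs of the input $X$ each term of $U$ and $U^\dagger$ acts on, which products vanish by index mismatch, and — crucially — applying the unitarity relations from Proposition~\ref{prop:unitary-condition} at exactly the right moments to convert expressions involving $B$ into expressions involving $C$ (so that $\mathcal B$ comes out as $d^{-1}\overbar C C^\top$ rather than in terms of $B$), and to correctly produce the diagonal correction terms with the right coefficients. Keeping the $i=j$ versus $i\neq j$ cases straight throughout (since the $\widetilde{Z}=Z-\operatorname{diag}Z$ convention means diagonal contributions must all be shunted into the $\mathcal A$-term) is where sign and factor errors are most likely to creep in, so I would do this computation twice — once algebraically via the $\ketbra{\cdot}{\cdot}$ expansion and once diagrammatically — as a cross-check.
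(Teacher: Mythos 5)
Your overall strategy --- expand $\Lambda_+\big(X^{(3)}_{(A,B,C)}\big)(X)$ into the nine cross-terms of $U$ against $U^\dagger$ and sort the contributions into the three actions of Definition~\ref{def:DOC} --- is viable and would prove the lemma, but the roadmap you give for the crucial step is wrong in a way that would derail the computation. The coefficient $\mathcal B_{ij}$ (the Schur-multiplier part, i.e.\ the coefficient of $\ketbra{i}{j}$ in $\Lambda_+(\ketbra{i}{j})$ for $i\neq j$) does \emph{not} come from the $B$--$B$ contractions: the way the first leg is traced out wires the $\ketbra{ii}{jj}$-part of $U$ so that its contributions land in $\mathcal A$ (producing $B^\top\odot B^\dagger$) and, through the $A$--$B$ cross terms, in $\mathcal C$ (producing $A\odot B^\dagger+A^\dagger\odot B$), exactly as in the statement. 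The term $\mathcal B=d^{-1}\overbar{C}C^\top$ arises directly from the $C$--$C$ contractions together with the diagonal $A$/$C$ cross terms, using only $\operatorname{diag}A=\operatorname{diag}B=\operatorname{diag}C$; no unitarity identity is needed anywhere, since the identity $\Lambda_+\big(X^{(3)}_{(A,B,C)}\big)=\Phi^{(3)}_{(\mathcal A,\mathcal B,\mathcal C)}$ is pure multilinear bookkeeping valid for an arbitrary triple in $\MLDOI{d}$ (unitarity only guarantees that $\Lambda_+$ is a unital channel). Worse, the identity you plan to invoke to ``convert $B$ into $C$'', namely $|A_{ij}|^2=|B_{ij}|^2=1-|C_{ij}|^2$, is the \emph{dual}-unitarity condition of Proposition~\ref{prop:dual}, not the unitarity condition of Proposition~\ref{prop:unitary-condition} (which gives $B$ unitary and $|A_{ij}|^2+|C_{ij}|^2=1$, with no link between $|B_{ij}|$ and $|C_{ij}|$), so it is not available under the lemma's hypotheses. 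Indeed, under mere unitarity one has $\overbar{B}B^\top=\iden_d$, so any attempt to massage a putative $\overbar{B}B^\top$ term into $\overbar{C}C^\top$ would fail; the resolution is simply that no such term appears once the legs are contracted correctly.

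For comparison, the paper sidesteps the nine-term bookkeeping by working at the Choi level: it writes $dJ(\Lambda_+(U))=F(U_\Gamma^\dagger U_\Gamma)F$, uses that partial transposition maps the LDOI triple $(A,B,C)$ to $(A,C^\top,B^\top)$ --- so the $B$ and $C$ parts swap roles, which is why $\mathcal B$ ends up quadratic in $C$ --- and then reads off $(\mathcal A,\mathcal B,\mathcal C)$ from the closure of LDOI matrices under multiplication (the explicit product rule cited from earlier work). Your direct Schr\"odinger-picture computation is a legitimate alternative, and your consistency check that all three diagonals collapse to $d^{-1}\operatorname{diag}(\overbar{C}C^\top)$ is correct, but as written the plan mislocates the origin of $\mathcal B$ and leans on a hypothesis you do not have.
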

\begin{proof}
We can compute the Choi matrix of $\Lambda^+_{(A,B,C)}$ as follows:
\begin{align*}
    d J(\Lambda^+_{(A,B,C)}) &= \includegraphics[scale=1.3,align=c]{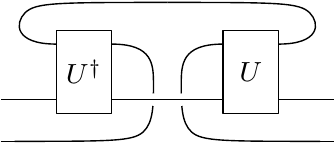} \\[0.2cm]
    &= \includegraphics[scale=1.3,align=c]{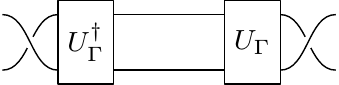} = F(U^\dagger_\Gamma U_\Gamma)F,
\end{align*}
where $U=X^{(3)}_{(A,B,C)}$, $U_\Gamma = (\operatorname{transp} \otimes \, \text{id})(U)$ is obtained from $U$ via partial transposition, and $F$ is the \emph{flip} (or \emph{swap}) operator in $\M{d}\otimes \M{d}$. Continuing our calculation, we obtain
\begin{align*}
    F(U^\dagger_\Gamma U_\Gamma)F &= F(X^{(3)}_{\bar{A},\bar{C},\bar{B}} X^{(3)}_{(A,C^\top,B^\top)})F = d X^{(3)}_{(\mathcal{A},\mathcal{B},\mathcal{C})} = d J(\Phi^{(3)}_{(\mathcal{A},\mathcal{B},\mathcal{C})}),
\end{align*}
where we have used \cite[Proposition 4.3]{Singh2021diagonalunitary} to compute $U_\Gamma$, $U_\Gamma^\dagger$ and \cite[Lemma 2.9]{singh2022diagonal} to compute the product $U^\dagger_\Gamma U_\Gamma$.
\end{proof}

In light of Lemma~\ref{lemma:Lambda+LDOI}, Theorem~\ref{theorem:DOC-irred-prim} and Theorem~\ref{theorem:perron}, checking the ergodic properties of an LDOI brickwork circuit is equivalent to analyzing the connectivity properties of the digraph $G_\mathcal{A}$, where $\mathcal{A}$ is the stochastic matrix from Lemma~\ref{lemma:Lambda+LDOI}. Moreover, when $G_\mathcal{A}$ is strongly connected, the number of non-decaying modes of the circuit (i.e., the number of peripheral eigenvalues of $\Lambda^+_{(A,B,C)}$) is equal to the number of peripheral eigenvalues of $\mathcal{A}$ (see Remark~\ref{remark:periphery-DOC-A}). If $G_\mathcal{A}$ is not strongly connected, then one can count the number of non-decaying modes of the circuit by explicitly analyzing the spectrum (see Theorem~\ref{theorem:DOC-spec}):
\begin{align}\label{eq:spec-LDOIdualcircuit}
    \operatorname{spec}\Lambda^+_{(A,B,C)} &= \operatorname{spec}\mathcal{A} \cup \left( \bigcup_{i<j} \operatorname{spec} \begin{bmatrix} \mathcal B_{ij} & \mathcal C_{ij} \\ \mathcal C_{ji} & \mathcal B_{ji} \end{bmatrix}   \right).
\end{align}

The next few subsections are dedicated to this analysis. We will consider circuits with underlying local dimension $d\geq 3$.

\subsection{Non-ergodic behaviour} \label{subsec:non-ergodic} In this subsection, we construct our brickwork circuits from the LDUI dual unitary matrices from Example~\ref{ex:dual1}. So, let $C\in \M{d}$ be a phase matrix and $A=B=\operatorname{diag}C$. Then, it can be easily checked that $\mathcal{A} = \mathcal{C} = \iden_d$, so that
\begin{equation}\label{eq:schur}
    \forall X\in \M{d}: \quad \Lambda^+_{(A,B,C)}(X) = \mathcal{B}\odot X = \frac{\overbar{C}C^\top \odot X }{d}
\end{equation}
is just a Schur multiplication channel. The graph $G_{\mathcal{A}}$ simply consists of $d$ isolated vertices with no edges amongst them, except self-loops. All diagonal matrix units $\{\ketbra{i}\}_{1\leq i\leq d}$ are constant modes of the circuit, so that the circuit is \emph{non-ergodic}. Since $C$ is a phase matrix, the Cauchy-Schwarz inequality implies that for all $k\neq l$,
\begin{equation}
     |\mathcal{B}_{kl}| = \frac{1}{d} |\langle C_{\text{row}\, k}, C_{\text{row}\, l} \rangle| \leq 1.
\end{equation}
Note that we use $C_{\text{row}\, k}$ to denote the $k^{th}$ row of $C\in \M{d}$. Thus, the remaining $d^2-d$ eigenvalues coming from the entries of $\mathcal{B}$ can be tuned to either lie on the periphery $\mathbb{T}$ or inside as follows:
\begin{itemize}
    \item We can choose $C_{\text{row}\, k}= \theta_{kl} C_{\text{row}\, l}$ for $k< l$ and $\theta_{kl}\in\mathbb T$, so that
    \begin{equation}
        \mathcal{B}_{kl} = \frac{1}{d}\langle C_{\text{row}\, k}, C_{\text{row}\, l} \rangle = \overbar{\theta_{kl}} = \overbar{\mathcal{B}_{lk}}.
    \end{equation}
    By choosing $\theta_{kl}=1$ for all $k<l$, we get $d^2-d$ additional constant modes $\{\ketbra{k}{l} \}_{1\leq k\neq l\leq d}$, so that our circuit becomes \emph{non-interacting}. If $\theta_{kl}\neq 1$ for some $k<l$, then we get two non-decaying and non-constant modes $\ketbra{k}{l}$ and $\ketbra{l}{k}$ via the non-unit peripheral eigenvalues $\theta_{kl}$ and $\overbar{\theta_{kl}}$, and our circuit becomes \emph{interacting}.
    \item In contrast, if $C_{\text{row}\, k}$ and $C_{\text{row}\, l}$ are chosen to be linearly independent for $k< l$, then the modes $\ketbra{k}{l}$ and $\ketbra{l}{k}$ decay, since the associated eigenvalues
    \begin{equation}
        |\mathcal{B}_{kl}| = \frac{1}{d} |\langle C_{\text{row}\, k}, C_{\text{row}\, l} \rangle| = |\mathcal{B}_{lk}| <1
    \end{equation}
    become non-peripheral.
\end{itemize}
Hence, by suitably choosing the rows of $C$ to be linearly independent (or not), we can tune the number of decaying/non-decaying modes in the circuit. However, note again that we always have at least $d$ constant modes in these circuits. One can easily construct non-ergodic circuits with less than $d$ constant modes by choosing the underlying dual unitary gates to be a direct sum of the LDUI dual unitary gates from this subsection and the LDOI dual gates from the next subsection. We leave the details of this construction as an exercise for the reader.

\subsection{Ergodic and mixing behaviour}
In this subsection, we use the full LDOI dual unitary family from Example~\ref{ex:dual2} to construct our circuits. So let $A=B=2P-\iden_d$ for some orthogonal projection $P\in \M{d}$ and we construct $C$ as in Example~\ref{ex:dual2}. If we simply choose $P$ to be a random orthogonal projection (say $P=VV^\dagger$ for a Haar random isometry $V\in \M{d\times d'}$), then all the entries of $A=B$ are non-zero almost surely, meaning that $\mathcal{A}$ is entrywise positive almost surely and hence $G_\mathcal{A}$ is aperiodic (i.e., $A$ is primitive (see Theorem~\ref{theorem:perron})) almost surely. Hence, the channel $\Lambda^+_{(A,B,C)}$ is primitive and our circuit generically exhibits ergodic and mixing behaviour, with the maximally mixed state $\iden_d/d$ being the unique constant mode of the circuit.

\subsection{Ergodic and non-mixing behaviour}
We again revert back to the family from Example~\ref{ex:dual1} where $A=B=\operatorname{diag}C$ for some phase matrix $C\in \M{d}$. We saw that this gives a Schur multiplication channel $\Lambda^+_{(A,B,C)}$ (see Eq.~\eqref{eq:schur}) which has at least $d$ constant modes. We now break this degeneracy by changing the underlying dual unitary building block to
\begin{equation}
U = (\pi\otimes \iden_d) X^{(3)}_{(A,B,C)},
\end{equation}
where $\pi\in \M{d}$ is a cyclic permutation matrix defined on the standard basis as $\pi^\dagger \ket{i}=\ket{i+1}$ (mod d) for $i\in \{1,2,\ldots d \}$. Since local transformations do not affect the dual unitary property of matrices, $U$ is again dual unitary. Moreover, it is easy to check from Figure~\ref{fig:Lambda+-} that
\begin{align}
    \forall X\in \M{d}: \quad \Lambda_+(U)(X) &= \Lambda^+_{(A,B,C)}(\pi^\dagger X \pi) \nonumber \\
    &= \frac{\overbar{C}C^\top \odot \pi^\dagger X \pi  }{d} .
\end{align}
The action of $\Lambda_+(U)$ on the matrix units takes the following permutation like form:
\begin{equation}\label{eq:cycles}
    \Lambda_+(U)(\ketbra{i}{j}) = \mathcal{B}_{i+1,j+1} \ketbra{i+1}{j+1} \quad \text{(mod d)}.
\end{equation}
Clearly, the diagonal matrix units $\{\ketbra{i} \}_{1\leq i\leq d}$ which were previously the constant modes of the circuit are now related in a cyclic fashion (see Figure~\ref{fig:dualcycle-diag}), so that the previously $d$-fold degenerate unit eigenvalue splits into $d$ distinct peripheral eigenvalues $\{e^{2\pi ik/d}\}_{k\in \{1,2,\ldots ,d\}}$. This ensures that the circuit cannot be mixing. 
\begin{figure}[H]
    \centering
    \includegraphics[scale=0.7]{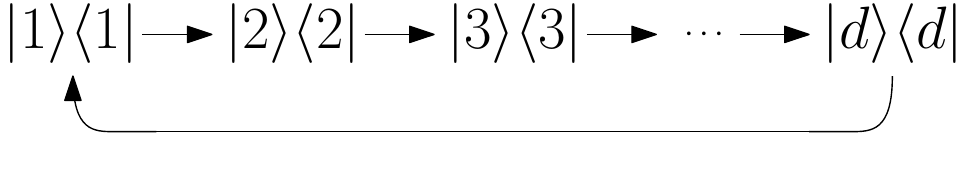}
    \caption{Cyclic permutation of the diagonal matrix units under $\Lambda_+(U)$.}
    \label{fig:dualcycle-diag}
\end{figure}
To ensure ergodicity, we now show that the remaining eigenvalues can be made to lie inside the periphery, so that our circuit finally becomes \emph{ergodic} and \emph{non-mixing}. Firstly, observe that the off-diagonal matrix units also decompose into $d-1$ disjoint cyclic orbits under the action of $\Lambda_+(U)$.
\begin{figure}[H]
    \centering
    \includegraphics[scale=0.7]{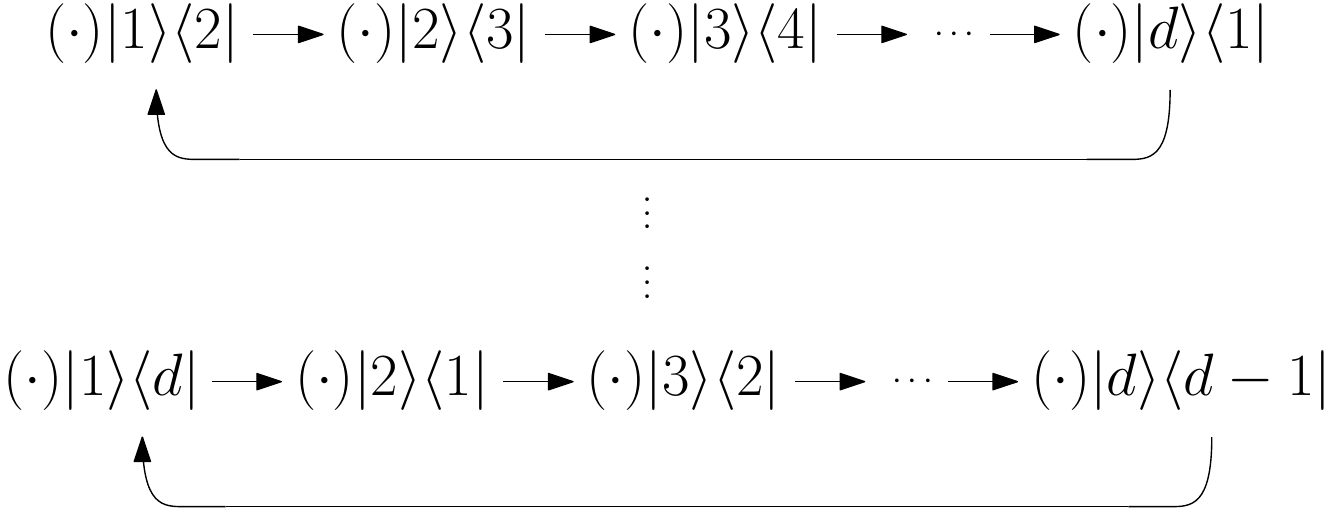}
    \caption{The orbit structure of the off-diagonal matrix units under $\Lambda_+(U)$. The brackets denote constant factors which depend on the entries of $\mathcal{B}$ (see Eq.~\eqref{eq:cycles}).}
    \label{fig:dualcycle-offdiag}
\end{figure}
The eigenvalues arising from each of these cycles can be adjusted to be non-peripheral by choosing all the rows of $C$ to be linearly independent, so that $|\mathcal{B}_{kl}|<1$ for all $k\neq l$. To illustrate why this is true, let us choose the cycle starting from $\ketbra{1}{2}$ and let us denote the eigenvalues that this cycle contributes to $\Lambda_+(U)$ by $\lambda^{(k)}_{12}$. It is easy to check that for all $k$,
\begin{equation}
    [\lambda^{(k)}_{12}]^d = \mathcal{B}_{12}\mathcal{B}_{23}\cdots \mathcal{B}_{d1}.
\end{equation}
By following the same reasoning as in Section~\ref{subsec:non-ergodic}, we can see that the linear independence of the rows of $C$ imply that $|\lambda^{(k)}_{12}|<1$ for all $k$.

\section{Conclusions}
The study of ergodicity of quantum channels is concerned with understanding how a channel behaves under repeated compositions with itself. Since quantum channels model the most general dynamics of a (possibly open) quantum system, analyzing their ergodic properties is crucial in discerning the long-term dynamical behaviour of quantum systems. In this paper, we have studied the ergodic properties of a special class of quantum channels that are covariant with respect to diagonal orthogonal transformations. The action of these channels on input states can be split into a diagonal and an off-diagonal part, with the diagonal action being parameterized by a classical stochastic matrix. We have shown that all the ergodic properties of such channels essentially stem from their `classical cores', i.e., from the aforementioned stochastic matrices. This, in particular, allows us to exploit results from the classical ergodic theory of stochastic matrices to study the ergodic behaviour of the stated covariant quantum channels.

As an application of our analysis, we study brickwork dual unitary quantum circuits which model the time evolution of quantum spin chains.
The long-term behaviour of spatio-temporal correlation functions of local observables provide useful characterizations of ergodicity and mixing in such systems. We consider the case in which the unitary building blocks (i.e., the gates) of the circuit are invariant under local diagonal orthogonal transformations. Under this symmetry, the study of the above-mentioned long term behaviour of correlation functions reduces to the study of ergodic properties of a certain quantum channel which is covariant with respect to diagonal orthogonal transformations. This allows us to exploit our previously obtained results to prove that the stated class of symmetric dual unitary brickwork circuits are diverse enough to exhibit all the possible types of ergodic behaviour as listed previously in \cite{Bruno2019dual}.

\bigskip

\noindent\textbf{Acknowledgements:}
I.N.~was supported by the ANR project ``ESQuisses'' (grant number ANR-20-CE47-0014-01). S.S. gratefully acknowledges support from
the Cambridge Commonwealth, European and International Trust.

\bibliography{references}
\bibliographystyle{alpha}
\bigskip
\hrule
\bigskip

\end{document}